\newtheorem{theorem}{Theorem}
\newtheorem*{theorem*}{Theorem}
\newtheorem{proposition}{Proposition}
\newtheorem*{proposition*}{Proposition}
\newtheorem{corollary}{Corollary}
\newtheorem{lemma}{Lemma}
\newtheorem*{example*}{Example}
\newtheorem*{remark*}{Remark}
\newtheorem{definition}{Definition}
\newtheorem*{lemma*}{Lemma}
\def\tr{\operatorname{tr}}
\def\St{\operatorname{St}}
\def\Ch{\operatorname{Ch}}
\def\SCh{\operatorname{SCh}}
\def\Pos{\operatorname{Pos}}
\def\id{\operatorname{id}}
\def\n{\mathcal{N}}
\newcommand{\mc}[1]{\mathcal{#1}}
\begin{document}

\title{Thermodynamics of quantum processes: An operational framework for free energy and reversible athermality}

\author{Himanshu Badhani}\email{himanshubadhani@gmail.com}
\affiliation{Center for Security, Theory and Algorithmic Research, International Institute of Information Technology, Hyderabad, Gachibowli, Telangana 500032, India}
\affiliation{Centre for Quantum Science and Technology, International Institute of Information Technology, Hyderabad, Gachibowli, Telangana 500032, India}

\author{Dhanuja G.S.}
\email{dhanuja.g@research.iiit.ac.in}
\affiliation{Centre for Quantum Science and Technology, International Institute of Information Technology, Hyderabad, Gachibowli, Telangana 500032, India}
\affiliation{Center for Computational Natural Sciences and Bioinformatics, International Institute of Information Technology, Hyderabad, Gachibowli, Telangana 500032, India}

\author{Siddhartha Das}
\email{das.seed@iiit.ac.in}
\affiliation{Center for Security, Theory and Algorithmic Research, International Institute of Information Technology, Hyderabad, Gachibowli, Telangana 500032, India}
\affiliation{Centre for Quantum Science and Technology, International Institute of Information Technology, Hyderabad, Gachibowli, Telangana 500032, India}

\begin{abstract}
We explore the thermodynamics of quantum processes (quantum channels) by axiomatically introducing the free energy for channels, defined via the quantum relative entropy with an absolutely thermal channel whose fixed output is in equilibrium with a thermal reservoir. This definition finds strong support through its operational interpretations in designated quantum information and thermodynamic tasks. We construct a resource theory of athermality for quantum processes, where free operations are Gibbs preserving superchannels and golden units are unitary channels with respect to absolutely thermal channel having fully degenerate output Hamiltonian. We exactly characterize the one-shot distillation and formation of quantum channels using hypothesis-testing and max-relative entropy with respect to the absolutely thermal channel. These rates converge asymptotically to the channel free energy (up to a multiplicative factor of half the inverse temperature), establishing its operational meaning and proving the asymptotic reversibility of the athermality. We show the direct relation between the resource theory of athermality and quantum information tasks such as private randomness and purity distillation, and thermodynamic tasks of erasure and work extraction. Our work connects the core thermodynamic concepts of free energy, energy, entropy, and maximal extractable work of quantum processes to their information processing capabilities.
\end{abstract}

\maketitle
\section{Introduction}

Quantum computation and information processing are fundamentally built upon quantum processes~\cite{LJL+10,Das19,DBWH21,PRY+22,SSHD24}, which are the physical transformations and evolutions that quantum systems undergo. Studying the thermodynamic aspects of these systems and processes through an information-theoretic approach offers essential insights into the fundamental limitations on preparation, manipulation, and measurement of quantum resources, especially when faced with energetic constraints~\cite{Ho04,Pek15,CC22,SSC23,SDC21,HWS+25}. Free energy stands as a core concept in thermodynamics, and its extension to quantum states provides insight into limitations on allowed transformations of states~\cite{BHN+15,LKJ+15,SS21,MCZG24,ST25} and possible utilization for thermodynamic engines and circuits~\cite{SU08,CTH09,NG15,ULK15}. 

\begin{table*}[htbp]
    \centering
    \renewcommand{\arraystretch}{1.0} 
    \setlength{\extrarowheight}{0pt}

    \begin{tabularx}{\textwidth}{
        >{\color{violet}\raggedright\arraybackslash}p{0.25\textwidth}
        >{\color{magenta}\raggedright\arraybackslash}p{0.34\textwidth}
        >{\color{blue}\raggedright\arraybackslash}p{0.36\textwidth}
    }
        \toprule
        \textbf{\vspace{-3pt}Concepts\vspace{-0.05pt}} & \textbf{\vspace{-3pt}States\vspace{-0.05pt}} & \textbf{\vspace{-3pt}Channels\vspace{-0.05pt}} \\
        \midrule
        Equilibration with bath & Thermal state $\gamma^\beta$  & Absolutely thermal channel $\mc{T}^\beta$\\
        Energy & $E(\rho)=\langle\widehat{H}\rangle_{\rho}$ & $E[\n]=\sup_{\rho\in\St(A')}\langle \widehat{H}\rangle_{\mc{N}(\rho)}$ \newline (for channel output noninteracting with its reference)\\
        Entropy & $S(\rho)=-\tr[\rho\ln\rho]=-D(\rho\|\mathbbm{1})$ & $S[\n]=-D[\n\Vert\mc{R}^{\mathbbm{1}}]$ \\
        Thermal free energy & $F_{\rm T}^\beta(\rho)=\beta^{-1}D(\rho\|\widehat{\gamma}^\beta)$ & $F_{\rm T}^\beta[\n]=\beta^{-1}D[\n\|\widehat{\mc{T}}^\beta]$\\
        Resource-theoretic free energy & $F^\beta(\rho)=F^\beta_{\rm T}(\rho)-F^\beta_{\rm T}(\gamma^\beta)=\beta^{-1}D(\rho\Vert{\gamma}^\beta)$ & $F^\beta[\n]=F_{\rm T}^\beta[\n]-F_{\rm T}^\beta[\mc{T}^\beta]=\beta^{-1}D[\n\|{\mc{T}}^\beta]$ \\
       
     Minimal free energy & Thermal state $\gamma^\beta$ & Absolutely thermal channel $\mc{T}^\beta$\\
        Maximal extractable work & $W^{\rm ext}_{\rho\to\gamma}=F^\beta_{\rm T}(\rho)-F_{\rm T}^\beta(\gamma^\beta)=F^\beta(\rho)$ & $W^\beta_{\rm ext}[\mc{N}]=F_{\rm T}^\beta[\n]-F_{\rm T}^\beta[\mc{T}^\beta]=F^\beta[\n]$\\
       Helmholtz equation & $F^\beta_{\rm T}(\rho)=E(\rho)-\beta^{-1}S(\rho)$ & $F^{\beta}_{\rm T}[\n]\leq E[\n]-\beta^{-1}S[\n]$ \newline (for channel output noninteracting with its reference; inequality saturates for replacer channels)\\
        \bottomrule
    \end{tabularx}
    \caption{We summarize and compare mathematical expressions for elementary thermodynamic concepts for quantum states vs channels. We consider the thermal reservoir to be at inverse temperature $\beta$ and $\widehat{H}$ denotes the Hamiltonian of the designated system. $\widehat{\gamma}^\beta$ is a thermal (Gibbs) operator and $\gamma^\beta$ is a thermal state. $\mc{R}^\omega_{A'\to A}(\cdot):=\tr[\cdot]\omega_A$, $\mc{T}^\beta_{A'\to A}(\cdot):=\tr[\cdot]\gamma^\beta_A$, $\widehat{\mc{T}}^\beta_{A'\to A}(\cdot):=\tr[\cdot]\widehat{\gamma}^\beta$. $D(\rho\|\sigma):=\lim_{\varepsilon\to 0^+}\tr[\rho(\ln\rho-\ln(\sigma+\varepsilon\mathbbm{1}))]$ is (Umegaki) relative entropy between states $\rho,\sigma$ and $D[\mc{N}\|\mc{M}]:=\sup_{\psi\in\St(RA')}D(\id_R\otimes\mc{N}(\psi_{RA'})\|\id_R\otimes\mc{M}(\psi_{RA'}))$ is relative entropy between channels $\mc{N}_{A'\to A},\mc{M}_{A'\to A}$. We assume $\rho$ is a density operator of system $A'$ and $\mc{N}$ is a quantum channel from $A'\to A$. The content for quantum states are known in literature~\cite{BHN+15,BHO+13,DC19,Gou24}, prior to our current work. The expression of the entropy of quantum channels is from \cite{GW21,SPSD25}. See Section~\ref{sec:outline} for the outline of the paper.}
    \label{tab:concepts}
\end{table*}

Motivated by fundamental interests and technological imperative of miniaturization, we explore the thermodynamics of quantum processes and their implications for information processing capabilities. Quantum processes are formally described by quantum channels, which are completely positive and trace-preserving (CPTP) maps. Quantum channels capture all physically realizable transformations of quantum states, including quantum measurements and even states themselves, when regarded as preparation channels. Although the action of a channel can be inferred from how it transforms input states, the information-theoretic properties established for quantum states do not always extend directly to channels. Operational tasks such as channel discrimination and resource conversion reveal richer and more intricate structures at the level of quantum channels~\cite{DW19,KDWW19,FFRS20,LXD+25,DGP24}. However, many information-theoretic results for channels naturally recover the corresponding results for states as special cases, since quantum states can be regarded as preparation channels.

Our objective is the quantification and manipulation of the thermodynamic resources intrinsic to quantum channels \cite{SPSD25,DS25}. The thermodynamic characterization of general quantum channels necessitates the development of new conceptual frameworks. In this work, we use an axiomatic approach~\cite{BHO+13,BHN+15,BRL+19} to define a notion of free energy for quantum channels, which enables us to formulate a resource theory of thermodynamic processes where this notion of free energy acquires a clear operational meaning. This framework also enables the analysis of how core thermodynamic quantities such as free energy, entropy, and energy appear at the level of quantum processes, and how they constrain or enable information processing tasks. This unified view establishes formal links between the thermodynamics of quantum channels and operational tasks such as randomness extraction, purity distillation, and work extraction~\cite{NG15,SSP15,GMN+15,YHW19,YZGZ20,MEC+19,BGC+25}.

The process of thermalization, which describes non-unitary evolution toward equilibrium with an environment (thermal bath) \cite{HP93,Ger93,Tas00,BHO+13,MHK23,BC25}, serves as a concrete example of a thermodynamic quantum process. It is characterized by the irreversible loss of coherence and the emergence of thermal states that minimize the system's free energy~\cite{GLTZ06,LPSW09,HMG19,DS25}. In this context, the concepts of free energy and the resource theory of athermality for quantum channels provide a rigorous framework for quantifying thermodynamic costs and understanding the fundamental limitations of quantum operations, thereby offering a clear connection between quantum information theory and foundational thermodynamic principles.

\textit{Note added}. This paper is companion to ``Thermodynamic work capacity of quantum information processing" \cite{BGD25b}.

\subsection{Main Results}
The concept of free energy is fundamental and instrumental in thermodynamics. We extend this notion to quantum processes which describe physical transformations of quantum systems in quantum theory. We leverage  axiomatic and information-theoretic approaches to define meaningful free energy functions of quantum channels (Theorem~\ref{thm:axioms-satisifaction}) that find operational interpretations in quantum information processing and thermodynamical tasks. We establish relations between the elementary thermodynamic quantities like free energy, energy, entropy, and operational quantities like extractable work to shed light on the thermodynamics of quantum processes (see Table~\ref{tab:concepts}). 

We define the resource-theoretic generalized free energy functions $\mathbf{F}^\beta[\n]$ of a quantum channel $\n_{A'\to A}$ as the inverse temperature times the appropriate generalized channel divergence functions, $\mathbf{F}^\beta[\n]:=\beta^{-1}\mathbf{D}[\n\Vert\mc{T}^\beta]$~\cite{SPSD25,DGP24}, between the channel $\n$ and an absolutely thermal channel $\mc{T}^\beta_{A'\to A}$. The channel $\mc{T}^\beta_{A'\to A}$  thermalizes all its inputs such that the output state is a thermal state $\gamma^\beta_A$ in equilibrium with the bath at an inverse temperature $\beta>0$, $\mc{T}^\beta(\cdot)=\tr[\cdot]\gamma^\beta$. Some appropriate generalized channel divergences for our purpose are shown to be the Umegaki relative entropy and max-relative entropy. We also discuss smoothed versions of free energy functions derived from these channel divergences and the $\varepsilon$-hypothesis-testing relative entropy.

We develop the dynamical resource theory of athermality for (square) quantum channels. The absolutely thermal channel $\mc{T}^\beta$ is considered a free object, where $\mc{T}^\beta$ depends on the inverse temperature of the bath and the Hamiltonian of the output system under consideration. This is a natural assumption because of the inherent thermalization process that equilibrates quantum systems with the bath. Even near isolation of quantum systems from the bath for a short time requires engineering and interventions, and perfect closed quantum dynamics are idealistic. The free operations are Gibbs preserving superchannels $\Theta^\beta$ which map absolutely thermal channel $\mc{T}^\beta_{A'\to A}$ to absolutely thermal channel $\mc{T}^{\beta}_{B'\to B}$ of the same inverse temperature $\beta$. A quantum channel $\n$ is athermal if it is not absolutely thermal $\mc{T}^\beta$ and is deemed resourceful. For clarity,  we use $(\n,\mc{T}^\beta)$ to discuss the thermodynamic resourcefulness of $\n$ with respect to $\mc{T}^\beta$~\cite{DS25}.

The resource-theoretic (and thermodynamic) free energy and max-free energy are maximum for a square quantum channel if and only if it's a unitary channel (Proposition~\ref{thm:unitary-free-energy}). Among all possible Hamiltonians, the free energy of a unitary channel $\mc{U}_{A'\to A}$ is least if the output Hamiltonian is fully degenerate, $\widehat{H}_A=c\mathbbm{1}_A$ for $c\in\mathbbm{R}$. As the Hamiltonian can be arbitrary (and large), we advocate that a golden unit of the dynamical resource theory of athermality is the resource channel $(\id_m,\mc{R}^\pi)$, where ${\n}_m$ denotes that input and output Hilbert spaces of channel $\n$ are $m$-dimensional and $\mc{R}^\pi$ is the uniformly mixing channel that only outputs the maximally mixed state $\pi$. All unitary resource channels $(\mc{U}_m,\mc{R}^\pi)$ are equivalent (equally resourceful) to $(\id_m,\mc{R}^\pi)$. In fact, the interconversion between $(\mc{U},\mc{T}^\beta)\leftrightarrow (\mc{V},\mc{T}^\beta)$ is possible via Gibbs preserving superchannels (free operations) for any pair of unitary channels $\mc{U}$ and $\mc{V}$.

We find that the single-shot athermality distillation $\mathrm{Dist}^\varepsilon (\n,\mathcal{T}^\beta)$ and formation $\mathrm{Cost}^\varepsilon (\n,\mathcal{T}^\beta)$ of a quantum resource channel $(\mc{N},\mc{T}^\beta)$, up to allowed error $\varepsilon\in(0,1)$ under the action of Gibbs preserving superchannels, are $\frac{1}{2}D_{\infty}^\varepsilon[\n\Vert\mathcal{T}^\beta]$ and $\frac{1}{2}D_{H}^{\varepsilon^2}[\n\Vert\mathcal{T}^\beta]$, respectively (Theorem~\ref{thm:dist_cost}). We have $F^{\beta,\varepsilon}_H[\n]=\frac{2}{\beta}\mathrm{Dist}^{\sqrt{\varepsilon}} (\n,\mathcal{T}^\beta)$ and $ F^{\beta,\varepsilon}_{\infty}[\n]=\frac{2}{\beta}\mathrm{Cost}^\varepsilon (\n,\mathcal{T}^\beta)$, which provide exact operational meaning to the resource-theoretic hypothesis-testing free energy and the max-free energy of a quantum channel. The asymptotic athermality distillation and formation rates of a quantum resource channel $(\n,\mc{T}^\beta)$, under parallel uses of channels, are equal to $\frac{1}{2}D[\mc{N}\Vert\mc{T}^\beta]$~(Theorem~\ref{thm:rev}). The resource-theoretic free energy of a channel $\n$,  $F^\beta[\n]=\beta^{-1}D[\mc{N}\Vert\mc{T}^\beta]$, and thus its asymptotic athermality distillation and formation rates are both equal to $\frac{\beta}{2}F^\beta[\n]$, providing the exact operational interpretation to the resource-theoretic free energy of a quantum channel. Both asymptotic athermality distillation and formation rates of a quantum channel are the same, implying that the athermality of quantum channels under Gibbs preserving superchannels is (asymptotically) reversible.

We get the resource theory of purity for quantum channels as a special case in the limit $\beta\to 0^+$ or Hamiltonian of the channel output is trivial. We discuss the dependence of the free energy of a quantum channel on the quantum transmission ability of the channel and free energy of its output (Proposition~\ref{prop:mi}). We elucidate on the novel relations between the free energy of a quantum channel and its energy and entropy. The thermal free energy $F^\beta_{\rm T}[\n]:=\beta^{-1}D[\n\|\widehat{\mc{T}}^\beta]$, where $\widehat{\mc{T}}^\beta$ is a replacer (or preparation) map that always outputs thermal operator $\widehat{\gamma}^\beta$ (an unnormalized thermal state), of a quantum channel $\n$ and its entropy $S[\n]=-D[\n\|\mc{R}^\mathbbm{1}]$~\cite{GW21,SPSD25}, where $\mc{R}^\mathbbm{1}(\cdot)=\tr[\cdot]\mathbbm{1}$, and thermodynamic energy $E[\n]$ satisfy the relation $F^{\beta}_{\rm T}[\n]\leq E[\n]-\beta^{-1}S[\n]$ when the channel output is noninteracting with its reference (Theorem~\ref{thm:fets}).

We prove that the resource-theoretic free energy of a quantum channel $\n$ is the difference of the thermal free energy of the channel and the absolutely thermal channel, $F^\beta[\n]=F^{\beta}_{\rm T}[\n]-F^{\beta}_{\rm T}[\mathcal{T}^\beta]$, and has operational interpretation as the maximum thermodynamic utility of the channel for maximal extractable work (Theorem~\ref{thm:excess}). We believe our framework, results, and discussions will provide deeper insights into the thermodynamical aspects of quantum processes and interconnectedness with their quantum information processing capabilities.

\subsection{Outline}\label{sec:outline}
The paper is organized as follows. We introduce standard notations, definitions, and facts in Section~\ref{sec:prem} that are used to derive main results and discussions surrounding them. In Section~\ref{sec:freeenergy}, we employ an axiomatic approach to define the resource-theoretic free energy functions of a quantum channel and provide reasoning behind axioms. We discuss some notable definitions, namely, the (resource-theoretic) free energy, max-free energy, and hypothesis-testing free energy of quantum channels, and prove some of their important properties. We introduce the resource theory of athermality for square quantum channels in Section~\ref{sec:resourcetheory} and prove that the resource theory of athermality under Gibbs preserving superchannels is asymptotically reversible. We inspect quantum transmission capabilities of quantum channels and its free energy in Section~\ref{sec:aspects}. We discuss the relation between the resource theory of athermality with private randomness distillation and the dynamical resource theory of purity. In Section~\ref{sec:energy}, we delve deeper into the relations between the resource-theoretic free energy, thermal free energy, entropy, and energy of quantum channels. We discuss the operational meaning of the resource-theoretic free energy of a quantum channel with its thermodynamic utility for the maximal extractable work. We finally conclude in Section~\ref{sec:discussion}. The detailed proofs of several results are discussed in the Appendix.

\section{Preliminaries}\label{sec:prem}
We consider quantum systems associated with separable Hilbert spaces with tensor-product structure. For a bipartite state $\rho_{AB}$, $\rho_A=\tr_B(\rho_{AB})$. The energy of a quantum state $\rho_A$ associated with Hamiltonian $\widehat{H}_A$ is $\langle \widehat{H}\rangle_{\rho}:=\tr[\widehat{H}_A\rho_A]$. Let $\St(A)$ and $\Pos(A)$ denote the set of all density operators (quantum states) and the set of all positive semidefinite operators on $A$, respectively. Let $\Ch(A',A)$ denote the set of all quantum channels from $A'$ to $A$. Let $\SCh((A',A),(B',B))$ denote the set of all quantum superchannels mapping $\Ch(A',A)$ to $\Ch(B',B)$. Any quantum superchannel can be written as concatenation of preprocessing and postprocessing quantum channels~\cite{CDP09}, i.e., for any $\Theta\in\SCh((A',A),(B',B))$, there exists preprocessing channel $\mathcal{P}\in\Ch(B',RA')$ and postprocessing channel $\mathcal{Q}\in\Ch(RA,B)$ such that $\Theta(\mathcal{N})=\mathcal{Q}\circ\mathcal{N}\circ\mathcal{P}$. A unitary superchannel is composed of unitary postprocessing and preprocessing channels.

 $\id_R$ denotes the identity channel on $R$ and $\mathbbm{1}_R$ denotes the identity operator on $R$. Let $\Gamma^{\mathcal{N}}_{RA}$ denote a maximally entangled operator, $\Gamma^{\mathcal{N}}_{RA}:=\sum_{i,j=0}^{d-1}\ket{ii}\bra{jj}_{RA}$ for $d=\min\{|R|,|A|\}$; a maximally entangled state $\Phi_{RA}=\frac{1}{d}\Gamma_{RA}$ if $d<\infty$. For a linear supermap $\mc{N}_{A'\to A}$, $\Gamma^{\mc{N}}_{RA}=\id_R\otimes\mc{N}(\Gamma_{RA'})$ denotes its Choi operator. If $\mc{N}\in\Ch(A',A)$ is a finite-dimensional quantum channel, then $\Phi^{\mc{N}}_{RA}=\frac{1}{d}\Gamma^{\mc{N}}_{RA}$ is its Choi state. For $|A|<\infty$, $\pi_A:=\frac{1}{|A|}\mathbbm{1}_{A}$ is the maximally mixed state. $\norm{X}_p$ denotes the Schatten p-norm of an operator $X$. For any two Hermiticity-preserving maps $\mathcal{N},\mathcal{M}$, $\norm{\mc{N}-\mc{M}}_\diamond:=\sup_{\psi\in\St(RA')}\norm{\id_R\otimes\mc{N}(\psi_{RA'})-\id_R\otimes\mc{M}(\psi_{RA'})}_1$. The fidelity between $\rho,\sigma\in\St(A)$ is $F(\rho,\sigma):=\norm{\sqrt{\rho}\sqrt{\sigma}}_1^2$. Let $\langle\widehat{O}\rangle_{\rho_A}:=\tr[\widehat{O}_A\rho_A]$. For a quantum channel $\mc{N}_{A'\to A}$ and a bipartite state $\rho_{RA'}$, $\mc{N}(\rho_{RA'}):=\id_R\otimes\mc{N}(\rho_{RA'})$.

 \textit{Generalized state and channel divergences}.--- Let $\mathbf{D}(\cdot\Vert\cdot)$ be the generalized state divergence, i.e., for an arbitrary pair of quantum states $\rho,\sigma$ and an arbitrary quantum channel $\mathcal{N}$, we have $\mathbf{D}(\rho\Vert\sigma)\geq \mathbf{D}(\mathcal{N}(\rho)\Vert\mathcal{N}(\sigma))$. The generalized channel divergence $\bf{D}[\cdot\Vert\cdot]$ is derived from the generalized state divergence: For any two quantum channels $\mathcal{N},\mc{M}\in\Ch(A'A)$ and the identity channel $\id$,
\begin{equation}
    \mathbf{D}[\mathcal{N}\Vert\mathcal{M}]:=\sup_{\psi\in\mathrm{St}(RA')}\mathbf{D}(\id_R\otimes\mathcal{N}(\psi)\Vert\id_R\otimes\mathcal{M}(\psi)),
\end{equation}
where it suffices to optimize over pure states $\psi_{RA'}$~\cite{LKDW18}. Some examples of the generalized state divergences~\cite{Tom21,WWY14,Dat09,LR12}: For a state $\rho_A$ and a positive semidefinite operator $\sigma_A$, whenever $\operatorname{supp}(\rho) \subseteq \operatorname{supp}(\sigma)$ (otherwise $+\infty$) (a) the quantum relative entropy $D(\rho\Vert\sigma):=\operatorname{tr}\left[\rho (\ln \rho - \ln \sigma)\right]$, (b) the max-relative entropy $D_{\infty}(\rho\Vert\sigma):=\ln \norm{\sigma^{-1/2} \rho \, \sigma^{-1/2}}_{\infty}$, (c) the sandwiched R\'enyi relative entropy ${D}_\alpha(\rho \Vert \sigma) := \frac{1}{\alpha - 1} \ln \left( \operatorname{Tr} \left[ \left( \sigma^{\frac{1 - \alpha}{2\alpha}} \rho \, \sigma^{\frac{1 - \alpha}{2\alpha}} \right)^\alpha \right] \right)$ for $\alpha\in[\frac{1}{2},1)\cup(1,\infty)$, (d) the $\varepsilon$-hypothesis-testing relative entropy $D^{\varepsilon}_{\mathrm{H}}(\rho\Vert\sigma):=-\ln \inf_{0 \leq \Lambda \leq \mathbbm{1}} \left\{ \operatorname{tr}[\Lambda \sigma] \, : \, \operatorname{tr}[\Lambda \rho] \geq 1 - \varepsilon \right\}$, (e) the purified distance $\mathrm{P}(\rho,\sigma):=\sqrt{1-F(\rho,\sigma)}$. $\ln$ denotes natural logarithm. We have $\lim_{\alpha\to 1}D_{\alpha}(\rho\|\sigma)=D(\rho\|\sigma)$ and $\lim_{\alpha\to \infty}D_{\alpha}(\rho\|\sigma)=D_{\infty}(\rho\|\sigma)$. 

The generalized channel divergences are monotonically nonincreasing under the action of quantum superchannels~\cite[Theorem 1]{DGP24}. If the generalized state divergence is nonnegative, then the corresponding channel divergence is also nonnegative. If the generalized state divergence is faithful, i.e., minimum if and only the states are equal, then the corresponding channel divergence is also faithful, i.e., attains minimum if and only if the channels are equal.

An $\varepsilon$-ball around a state $\rho_A$, for $\varepsilon\in[0,1]$, is defined as $\mathcal{B}^\varepsilon(\rho_A)=\{\sigma_A:~ \sigma_A\ge 0, \tr(\sigma_A)\le 1, P(\rho_A,\sigma_A)\le \varepsilon\}$. An $\varepsilon$-ball around a channel $\mc{N}_{A'\to A}$ is defined as $\mathcal{B}^\varepsilon[\mc{N}]=\{\mc{M}\in\Ch(A',A): P[\mc{N},\mc{M}]\leq \varepsilon\}$. For $\rho\in\St(A)$ and $\sigma\in\Pos(A)$, $D_{\infty}^\epsilon(\rho\|\sigma) = \inf_{\omega\in\mc{B}^{\varepsilon}(\rho)} D_{\infty}(\omega\|\sigma)$. $D^\varepsilon_{\infty}(\cdot\Vert\cdot)$ is a generalized state divergence and $D^{\varepsilon}_{\infty}[\cdot\|\cdot]$ is a generalized channel divergence.

The generalized entropy function $\mathbf{S}[\n]$ of a quantum channel $\n_{A'\to A}$ is defined as $\mathbf{S}[\n]:=-\mathbf{D}[\mc{N}\Vert\mc{R}^{\mathbbm{1}}]$~\cite{GW21,SPSD25}, where a completely positive map $\mc{R}^{\omega}_{A'\to A}$ denotes the replacer (or preparation) map that outputs $\omega\in\Pos(A)$, $\mc{R}^{\omega}_{A'\to A}(\rho_{A'})=\tr[\rho_{A'}]\omega_A$ for all $\rho_{A'}\in\Pos(A')$. For a quantum channel $\n_{A'\to A}$ with $|A|<\infty$, $S_{\alpha}[\n]:=\ln|A|-D_{\alpha}[\n\Vert\mc{R}^{\pi}]$~\cite{GW21}, where $\alpha\in[\frac{1}{2},1)\cup (1,\infty)$. We have $\lim_{\alpha\to 1}D_\alpha[\n\Vert\mc{R}^{\pi}]=D[\n\Vert\mc{R}^{\pi}]$. The (von Neumann) entropy of a quantum state $\rho_A$ is $S(\rho_A):=S(A)_{\rho}:=-D(\rho\|\mathbbm{1})=-\tr[\rho\ln\rho]$. $\mc{R}^\pi$ is called uniformly mixing or completely depolarizing channel as it always outputs the maximally mixed state $\pi (\propto \mathbbm{1})$.

\section{Dynamical free energy}\label{sec:freeenergy}
The dynamical free energy, i.e., the free energy of quantum channels, should satisfy certain desirable properties analogous to the free energy of quantum states. It becomes operationally meaningful if it is related to athermality distillation or formation tasks under relevant free operations. It is also preferable that the channel free energy finds operational meaning in a thermodynamic task like work extraction.
  
\textit{Gibbs preservation}.--- For an inverse temperature $\beta\in\mathbbm{R}^+$, the absolutely thermal channel $\mathcal{T}^{\beta}_{A'\to A}$ is a replacer (or preparation) channel that only outputs the thermal state $\gamma^\beta_{A}=\exp(-\beta \widehat{H}_A)/{Z^\beta}$, where $\widehat{H}_A$ denotes the Hamiltonian of $A$ and $Z_A^{\beta}:=\tr[\exp(-\beta \widehat{H}_A)]$ denotes the associated partition function~\cite{Len78}. The thermal state $\gamma^\beta_A$ remains invariant under the translation of the Hamiltonian, $\widehat{H}_A\to \widehat{H}_A+c\mathbbm{1}_A$ for $c\in\mathbbm{R}$. This implies that the absolutely thermal channel $\mc{T}^\beta_{A'\to A}$ also remains invariant under the translation of the output Hamiltonian, $\widehat{H}_A\to \widehat{H}_A+c\mathbbm{1}_A$. We refer to a superchannel $\Theta^{\beta}\in\SCh((A',A),(B',B))$ as Gibbs-subpreserving if $\Theta^{\beta}(\mc{T}^\beta_{A'\to A})\leq \mc{T}^{\beta}_{B'\to B}$, i.e.,
\begin{equation}
 \forall~\psi\in\St(RB'),\quad   \Theta^{\beta}(\mc{T}^\beta_{A'\to A})(\psi_{RB'})\leq \psi_R\otimes\gamma^\beta_B,
\end{equation}
and Gibbs preserving if $\Theta^{\beta}(\mc{T}^\beta_{A'\to A})= \mc{T}^{\beta}_{B'\to B}$. 

For $|A|<\infty$ and $\widehat{H}_A=c\mathbbm{1}_A$, the thermal state is the maximally mixed state, $\gamma^{\beta}=\pi_A$ for all $\beta\in\mathbbm{R}^+$; $\gamma^{\beta}_A=\pi_A$ is true also for arbitrary $\widehat{H}_A$ when $\beta\to 0^+$. The entropy $S[\mathcal{N}]$ of a quantum channel $\mathcal{N}_{A'\to A}$ with the fixed energy, $\sup_{\rho\in\St(A')}\langle \widehat{H}\rangle_{\mathcal{N}(\rho)}=E$, is maximum if and only if the channel is absolutely thermalizing $\mathcal{T}^{\beta}$ with the thermal state of energy $\langle \widehat{H}\rangle_{\gamma^{\beta}}=E$~\cite{Jay57a,PB21,DS25}.

Taking a cue from the discussions in~\cite{DS25,SPSD25,DGP24}, we introduce the notion of the free energy of a quantum channel, which is pivotal to build the framework for a thermodynamic resource theory of quantum channels.
 
 \textit{Thermodynamic and resource-theoretic axioms}.--- Let $f[\mathcal{N}]$ denote a physically motivated free energy function of a quantum channel $\mathcal{N}_{A'\to A}$ in a resource-theoretic paradigm. The function $f:\Ch(A',A)\to \mathbbm{R}^+$ should satisfy the following properties, (A1-A3) essentially and also (A4-A6) preferably, for a fixed inverse temperature $\beta$:
 \begin{enumerate}
     \item[(A1)] Monotonically nonincreasing under the action of a Gibbs preserving superchannel $\Theta^{\beta}$, $\Theta^\beta(\mathcal{T}^{\beta}_{A'\to A})=\mathcal{T}^{\beta}_{B'\to B}$, $f[\Theta^\beta(\mathcal{N})]\leq f[\mathcal{N}]$.
        \item[(A2)] Reduction to the free energy function of the output state for a replacer channel $\mathcal{R}^{\omega}$, $\mathcal{R}^{\omega}(\rho_{A'}):=\omega\in\St(A)$ for all input states $\rho_{A'}$, $f[\mathcal{R}^{\omega}]=f(\omega)$.
     \item[(A3)] Minimum is attained if and only if the channel is absolutely thermal.
      \item[(A4)] Uniform continuity: For any two channels $\mathcal{N},\mathcal{M}$ such that $\frac{1}{2}\norm{\mathcal{N}-\mathcal{M}}_{\diamond}\leq \varepsilon\in[0,1]$, $\abs{f[\mathcal{N}]-f[{\mc{M}}]}\to 0$ as $\varepsilon\to 0$.
     \item[(A5)] Additive under tensor-product channels $\mathcal{N}_{A_1'\to A_1}\otimes\mathcal{M}_{A_2'\to A_2}$ with $\widehat{H}_{A_1A_2}=\widehat{H}_{A_1}\otimes\mathbbm{1}_{A_2}+\mathbbm{1}_{A_1}\otimes\widehat{H}_{A_2}$, $f[\mathcal{N}\otimes\mathcal{M}]=f[\mathcal{N}]+f[\mathcal{M}]$.
     \item[(A6)] Convexity: $f[p\mathcal{N}+(1-p)\mathcal{M}]\leq pf[\mathcal{N}]+(1-p)f[\mathcal{M}]$ for $p\in[0,1]$, where $\mathcal{N}_{A'\to A}$ and $\mathcal{M}_{A'\to A}$ are quantum channels.
 \end{enumerate}
 
The thermalization of a quantum system in contact with a heat bath (surroundings) arises naturally from thermodynamic principles. In a thermodynamic resource theory, it is crucial to fix the inverse temperature, as the thermal state of the quantum system must necessarily have the same inverse temperature as the heat bath for there to be no heat flow between them, ensuring that both the system and the bath are in equilibrium. The thermodynamic utility of a quantum state $\rho_A$ is described by its distinguishability from the thermal state $\gamma^{\beta}_A$, where $\gamma^\beta_A$ represents the thermal state of $A$ in equilibrium with the bath. Resource-theoretic free energy $F^\beta(\rho)$ of the state $\rho_A$ is $F^\beta(\rho)=\beta^{-1}D(\rho\Vert\gamma^{\beta})$~\cite{BHO+13,BHN+15} for $\gamma^{\beta}=\exp(-\beta\widehat{H}_A)/Z^{\beta}_A$. It is related to the thermodynamic non-equilibrium free energy $F^\beta_{\rm T}(\rho)$ in the following way, $\Delta F^\beta_{\rm T}(\rho):= F^\beta_{\rm T}(\rho)-F^\beta_{\rm T}(\gamma^{\beta})=\beta^{-1}D(\rho\|\gamma^{\beta})=F^\beta(\rho)$, where $F^\beta_{\rm T}(\rho)=\langle \widehat{H}\rangle_{\rho}-\beta^{-1}S(\rho)$ for $S(\rho):=-D(\rho\Vert\mathbbm{1})$. Alternatively, $F^\beta_{\rm T}(\rho)=\beta^{-1}D(\rho\|\widehat{\gamma}^{\beta})$, where $\widehat{\gamma}^{\beta}:=\exp(-\beta\widehat{H})$. In resource theory of athermality of quantum states, the generalized free energy function is $\mathbf{F}^\beta(\rho):=\beta^{-1}\mathbf{D}(\rho\Vert\gamma^\beta)$~\cite{BHN+15,Gou24}.
 
 To construct a resource-theoretic framework and coin the dynamical free energy, it is important that the free channel is the absolutely thermal channel with its output in thermal equilibrium with bath. That is, we need to fix the inverse temperature corresponding to the bath. The free energy function of a quantum channel should capture how distinguishable it is from the absolutely thermal channel~\cite{DS25}, and should therefore be monotonically nonincreasing under the actions of Gibbs preserving superchannels. Moreover, the thermodynamic resource theory of quantum states should be recoverable from the thermodynamic resource theory of quantum channels, since quantum states can be seen as preparation channels. For the free energy function of quantum channels to be physically meaningful, it should satisfy essential properties such as continuity and additivity, and preferably convexity (or quasi-convexity), which are well-known for free energy functions of states. The axiomatic properties (A1–A6) are designed to capture these requirements.

\begin{definition}[Generalized free energy]
For a fixed $\beta$, the (resource-theoretic) generalized free energy function $\mathbf{F}^{\beta}[\mathcal{N}]$ of a quantum channel $\mathcal{N}_{A'\to A}$ is defined as the inverse temperature times the generalized divergence of athermality $\mathbf{D}[\n\|\mc{T}^\beta]$ of the quantum channel $\mathcal{N}$,
\begin{equation}\label{eq:f-channel}
    \mathbf{F}^{\beta}[\mc{N}]:={\beta}^{-1}\mathbf{D}[\mathcal{N}\Vert\mathcal{T}^{\beta}],
\end{equation}
$\mathcal{T}^{\beta}_{A'\to A}(\rho_{A'})=\gamma^{\beta}_A$ for all input states $\rho_{A'}$. 
\end{definition} 
The generalized free energy $\mathbf{F}^{\beta}[\mathcal{N}]$ of a quantum channel $\mc{N}_{A'\to A}$ can also be expressed as
\begin{align}\label{eq:f-state}
    \mathbf{F}^{\beta}[\mc{N}] =\frac{1}{\beta}\sup_{\psi\in\St(RA')}\mathbf{D}(\id_R\otimes\mathcal{N}(\psi_{RA'})\Vert\psi_R\otimes\gamma^{\beta}_A),
\end{align}
where it suffices to optimize over pure states $\psi_{RA'}$. The absolutely thermal channel $\mc{T}^\beta_{A'\to A}$ reflects the thermal state of the system $A$ that is in equilibrium with the bath; it entails both, the inverse temperature $\beta$ associated with the bath and the Hamiltonian $\widehat{H}_A$ of the output system.

In the definition of the generalized free energy $\mathbf{F}^{\beta}[\mathcal{N}]$, Eqs.~\eqref{eq:f-channel} and \eqref{eq:f-state}, we could possibly fix the Hamiltonian of the reference $R$ to the channel $\mc{N}$ as per need. For instance, we can always consider $\widehat{H}_R=0$ with the total Hamiltonian of the reference and output to be $\widehat{H}_{RA}=\mathbbm{1}_R\otimes\widehat{H}_A$, if we want the energy of the process $\id_R\otimes\mc{N}$ and $\mc{N}$ to be the same:
\begin{align}
  &  \sup_{\psi\in\St(RA')}\tr[\mathbbm{1}_R\otimes\widehat{H}_A(\id_R\otimes\mc{N}(\psi_{RA'}))] \nonumber\\
  & \qquad \qquad =\sup_{\phi\in\St(A')}\tr[\widehat{H}_A\mc{N}(\phi_{A'})].
\end{align}
See Section~\ref{sec:energy} for a detailed description on the relation between the free energy, entropy, and energy of a quantum channel.

For a quantum channel $\mathcal{N}_{A'\to A}$, its generalized channel divergence with respect to the absolutely thermal channel provides quantification of athermality (nonthermality) in the channel $\n$; we are particularly interested in (a) the free energy $F^{\beta}[\mathcal{N}]:=\beta^{-1}D[\mathcal{N}\Vert\mathcal{T}^\beta]$, (b) the max-free energy $F^{\beta}_{\infty}[\mathcal{N}]:=\beta^{-1}D_{\infty}[\mathcal{N}\Vert\mathcal{T}^{\beta}]$, (c) the sandwiched R\'enyi free energy $F^{\beta}_{\alpha}[\mc{N}]=\beta^{-1}D_{\alpha}[\mathcal{N}\Vert\mathcal{T}^{\beta}]$, (d) the $\varepsilon$-hypothesis-testing free energy ${F}^{\beta,\varepsilon}_{\mathrm{H}}[\mathcal{N}]:=\beta^{-1}D^{\varepsilon}_{\mathrm{H}}[\mc{N}\Vert\mc{T}^{\beta}]$ for $\varepsilon\in[0,1]$, (e) ${F}^{\beta,\varepsilon}_{\infty}[\mathcal{N}]:=\beta^{-1}\inf_{\mc{M}\in\mc{B}^{\varepsilon}[\mc{N}]}D^{\varepsilon}_{\infty}[\mc{M}\Vert\mc{T}^{\beta}]$ for $\varepsilon\in[0,1]$. 

For tensor-product quantum channels $\mathcal{N}_{A'\to A}\otimes\mathcal{M}_{B'\to B}$, we say that the channels $\mc{N}$ and $\mc{M}$ are noninteracting when the total Hamiltonian of the outputs is of the form $\widehat{H}_{AB}=\widehat{H}_A\otimes\mathbbm{1}_B+\mathbbm{1}_A\otimes\widehat{H}_B$, i.e., the interaction Hamiltonian $\widehat{H}^{\rm int}_{AB}=0$. When the total Hamiltonian of the output is $\widehat{H}_{AB}=\widehat{H}_A\otimes\mathbbm{1}_B+\mathbbm{1}_A\otimes\widehat{H}_B$, the absolutely thermal channel $\mathcal{T}^{\beta}_{A'B'\to AB}$ can be written as the tensor-product of local absolutely thermal channels, $\mathcal{T}^{\beta}_{A'B'\to AB}=\mc{T}^{\beta}_{A'\to A}\otimes\mc{T}^\beta_{B'\to B}$.

We observe that the free energy of a quantum channel appears as the maximal conditional free energy associated with the channel, analogous to the channel entropy~\cite{DJKR06,GW21}.
\begin{proposition}\label{prop:free-energy-difference}
The free energy $F^\beta[\n]$ of a quantum channel $\mc{N}_{A'\to A}$ is equal to the maximum channel output free energy conditioned on its reference,
\begin{equation}
F^\beta[\n]=\sup_{\psi\in\St(RA')}\left[F^\beta(\id_R\otimes\mc{N}(\psi_{RA'})) -F^\beta(\psi_{R})\right],
\end{equation}
where it suffices to optimize over pure states $\psi_{RA'}$, $|R|=|A'|$.
\end{proposition}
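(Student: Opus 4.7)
\textit{Proof proposal.} The plan is a direct algebraic unpacking of both sides using the definition of the Umegaki relative entropy, together with the observation that the marginal of $\id_R\otimes\mc{N}(\psi_{RA'})$ on $R$ is exactly $\psi_R$, since $\id_R$ acts trivially on the reference. I will first fix the Hamiltonian conventions consistent with the discussion preceding the proposition: take $\widehat{H}_R=0$ on the reference so that $\gamma^\beta_R=\pi_R$ and the total Hamiltonian on $RA$ is $\widehat{H}_{RA}=\mathbbm{1}_R\otimes \widehat{H}_A$, whence $\gamma^\beta_{RA}=\pi_R\otimes\gamma^\beta_A$.

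Starting from the channel-divergence expression in Eq.~\eqref{eq:f-state}, for each pure $\psi_{RA'}$ set $\rho_{RA}:=\id_R\otimes\mc{N}(\psi_{RA'})$, so $\rho_R=\psi_R$. Using $\ln(\psi_R\otimes\gamma^\beta_A)=\ln\psi_R\otimes\mathbbm{1}_A+\mathbbm{1}_R\otimes\ln\gamma^\beta_A$ on the joint support, I expand
\begin{align}
D(\rho_{RA}\Vert \psi_R\otimes\gamma^\beta_A)
&=-S(\rho_{RA})+S(\psi_R)-\tr[\rho_A\ln\gamma^\beta_A].
\end{align}
On the other hand, with $\gamma^\beta_R=\pi_R$,
\begin{align}
D(\rho_{RA}\Vert \gamma^\beta_{RA})-D(\psi_R\Vert \gamma^\beta_R)
&=\bigl[-S(\rho_{RA})+\ln|R|-\tr[\rho_A\ln\gamma^\beta_A]\bigr]\nonumber\\
&\quad -\bigl[\ln|R|-S(\psi_R)\bigr],
\end{align}
which collapses to the same expression. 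Multiplying by $\beta^{-1}$ identifies the two forms $F^\beta(\rho_{RA})-F^\beta(\psi_R)$ and $\beta^{-1}D(\rho_{RA}\Vert\psi_R\otimes\gamma^\beta_A)$. Taking the supremum over $\psi_{RA'}\in\St(RA')$ on both sides then gives the claim; restriction to pure states with $|R|=|A'|$ is standard for channel divergences (noted just after the definition of $\mathbf{D}[\cdot\Vert\cdot]$) and suffices by purification plus data processing under partial trace.

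The only genuine care point, and the step I expect to be the main obstacle, is the support condition implicit in using the logarithmic identity: when $\psi_R$ is not full rank one must restrict to the support of $\psi_R$ (equivalently the support of $\rho_R$), verify that $\supp(\rho_{RA})\subseteq \supp(\psi_R)\otimes\supp(\gamma^\beta_A)$, and argue via the standard $\varepsilon$-regularisation $\sigma\to\sigma+\varepsilon\mathbbm{1}$ used in the definition of $D(\cdot\Vert\cdot)$ in the preliminaries. Once this is handled, e.g.\ by first treating full-rank $\psi_R$ (which is generic under the purification $|R|=|A'|$) and then passing to a limit, the equality is immediate and the optimisation over $\psi_{RA'}$ transfers identically to both sides, yielding the desired identity.
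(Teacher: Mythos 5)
Your proof is correct and follows essentially the same route as the paper's: both hinge on the identity $\ln(\sigma_R\otimes\omega_A)=\ln\sigma_R\otimes\mathbbm{1}_A+\mathbbm{1}_R\otimes\ln\omega_A$ together with $\rho_R=\psi_R$, so that $D(\rho_{RA}\Vert\psi_R\otimes\gamma^\beta_A)=D(\rho_{RA}\Vert\gamma^\beta_R\otimes\gamma^\beta_A)-D(\psi_R\Vert\gamma^\beta_R)$ pointwise before taking the supremum. The only cosmetic difference is that you specialize to $\widehat{H}_R=0$ (so $\gamma^\beta_R=\pi_R$) and write everything out in von Neumann entropies, whereas the paper keeps $\gamma^\beta_R$ general; your handling of the support/regularisation issue is consistent with the paper's definition of $D$.
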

\begin{proof}
    The proof follows from the definition of the free energy of a quantum channel along with the facts that $F(\rho)=\beta^{-1}D(\rho\Vert\gamma^{\beta})$ and
    \begin{align}
       &\sup_{\psi\in\St(RA')}D(\mc{N}(\psi_{RA'})\Vert\psi_R\otimes\gamma^\beta_A)\nonumber\\
        =&\sup_{\psi\in\St(RA')}\left[D(\mc{N}(\psi_{RA'})\Vert\gamma^\beta_R\otimes\gamma^\beta_A)-D(\psi_R\Vert\gamma^\beta_R)\right],
    \end{align}
    where we used the relation:
    \begin{align}
\tr[\rho_{RA}\ln(\sigma_R\otimes\omega_A)]=\tr[\rho_R\ln\sigma_R]+\tr[\rho_A\ln\omega_A].
    \end{align}
\end{proof}

\subsection{Properties of the dynamical free energy}
All the generalized free energy functions $\mathbf{F}^{\beta}[\mc{N}]$ satisfy (A1) and (A2) readily, whenever $\beta^{-1}\mathbf{D}(\rho\|\gamma^{\beta})$ represents some resource-theoretic free energy function $f(\rho)$ for states $\rho$~\cite{BHN+15,Gou24}. The generalized free energy $\mathbf{F}^{\beta}[\mc{N}]$ remains invariant under the action of Gibbs preserving unitary superchannels. Faithfulness (A3) is satisfied by the generalized free energy whose generalized channel divergence is faithful. For all $\alpha\in[\frac{1}{2},\infty)$, using $\lim_{\alpha\to 1}$ as $\alpha=1$, the sandwiched R\'enyi free energy $F^{\beta}_{\alpha}[\mc{N}]\geq 0$ for an arbitrary quantum channel $\mc{N}_{A'\to A}$, and $F^{\beta}_{\alpha}[\mc{N}]= 0$ iff $\mc{N}=\mc{T}^{\beta}$. For $\alpha\geq 1$ and $n\in\mathbbm{N}$, $F_{\alpha}^{\beta}[\mc{N}^{\otimes n}]=nF^{\beta}_{\alpha}[\mathcal{N}]$ if the outputs of the parallel uses of the channel $\mc{N}$ are noninteracting (i.e., the interaction Hamiltonian between the channel outputs is zero). These assertions follow from \cite[Theorem 1]{DGP24} and \cite[Theorem 4.2]{FGR25}.

\begin{theorem}\label{thm:axioms-satisifaction}
Let $\mc{N}_{A'\to A}$ be an arbitrary quantum channel. The free energy $F^{\beta}[\mc{N}]$ satisfies all the thermodynamical (resource-theoretic) axiomatic properties (A1-A6).
    
For $\alpha\in(1,\infty)$, the sandwiched R\'enyi free energy $F_{\alpha}^{\beta}[\mc{N}]$ satisfies (A1-A5) and quasi-convexity, i.e., for quantum channel $\mc{N}=\sum_x p_x\mc{N}^x$ where $\{p_x\}_x$ is some probability distribution, $F_{\alpha}^{\beta}[\mc{N}]\leq \max_xF_{\alpha}^{\beta}[\mc{N}^x]$. 
\end{theorem}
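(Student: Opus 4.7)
The plan is to verify axioms (A1)--(A6) (resp. (A1)--(A5) and quasi-convexity) one at a time, reducing each to known properties of the underlying state divergence that is lifted to the channel divergence via the stabilized supremum over pure input states. Throughout, I would exploit the fact that $\mc{T}^\beta_{A'\to A}(\rho_{RA'})=\rho_R\otimes\gamma^\beta_A$, so that the channel divergence reduces to a supremum of the familiar state divergences against product states of the form $\psi_R\otimes\gamma^\beta_A$.

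First I would dispatch (A1), (A2), (A3) uniformly for both $F^\beta[\mc{N}]$ and $F^\beta_\alpha[\mc{N}]$. For (A1), I would invoke the data-processing inequality for the Umegaki (resp.\ sandwiched R\'enyi with $\alpha\in[1/2,1)\cup(1,\infty)$) channel divergence under superchannels~\cite[Theorem 1]{DGP24}: since a Gibbs-preserving superchannel satisfies $\Theta^\beta(\mc{T}^\beta_{A'\to A})=\mc{T}^\beta_{B'\to B}$, monotonicity of $\mathbf{D}[\cdot\Vert\mc{T}^\beta]$ under $\Theta^\beta$ is immediate. For (A2), evaluating $\id_R\otimes\mc{R}^\omega(\psi_{RA'})=\psi_R\otimes\omega_A$ and using the additivity $\mathbf{D}(\psi_R\otimes\omega_A\Vert\psi_R\otimes\gamma^\beta_A)=\mathbf{D}(\omega_A\Vert\gamma^\beta_A)$ (which holds both for the Umegaki and sandwiched R\'enyi relative entropies against product states) collapses the supremum over $\psi$ and yields $\mathbf{F}^\beta[\mc{R}^\omega]=\mathbf{F}^\beta(\omega)$. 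For (A3), faithfulness is inherited: the Umegaki and sandwiched R\'enyi state divergences (for $\alpha\in[1/2,1)\cup(1,\infty)$) are faithful, hence the associated channel divergence is faithful by~\cite[Theorem 1]{DGP24}, and $\mathbf{F}^\beta[\mc{N}]=0$ iff $\mc{N}=\mc{T}^\beta$.

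Next I would handle the structural axioms (A5) and (A6). For additivity (A5) when $\widehat{H}_{A_1A_2}=\widehat{H}_{A_1}\otimes\mathbbm{1}+\mathbbm{1}\otimes\widehat{H}_{A_2}$, the key observation is $\mc{T}^\beta_{A_1'A_2'\to A_1A_2}=\mc{T}^\beta_{A_1'\to A_1}\otimes\mc{T}^\beta_{A_2'\to A_2}$, reducing the claim to additivity of $\mathbf{D}[\cdot\Vert\cdot]$ under tensor products of channels; for the Umegaki case this is standard, and for the sandwiched R\'enyi channel divergence with $\alpha\geq 1$ I would invoke~\cite[Theorem 4.2]{FGR25} (as already noted in the excerpt before the theorem statement). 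For (A6) applied to $F^\beta[\mc{N}]$, I would combine joint convexity of the Umegaki relative entropy in its first argument with the observation that $\id_R\otimes(p\mc{N}+(1-p)\mc{M})(\psi_{RA'})$ is a convex combination with second arguments fixed at $\psi_R\otimes\gamma^\beta_A$; the supremum of a convex function of $\mc{N}$ remains convex, giving $F^\beta[p\mc{N}+(1-p)\mc{M}]\leq pF^\beta[\mc{N}]+(1-p)F^\beta[\mc{M}]$. For the sandwiched R\'enyi case with $\alpha>1$, convexity fails in general, but quasi-convexity $D_\alpha(\sum_x p_x\rho^x\Vert\sigma)\leq\max_x D_\alpha(\rho^x\Vert\sigma)$ when the second argument is fixed is known; lifting this through the supremum over $\psi$ yields the stated quasi-convexity at the channel level.

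The genuine obstacle is (A4), uniform asymptotic continuity in the diamond norm. I would reduce this to a channel-level Alicki--Fannes--Winter-type bound: if $\tfrac12\Vert\mc{N}-\mc{M}\Vert_\diamond\leq\varepsilon$, then for any $\psi_{RA'}$ one has $\tfrac12\Vert\id_R\otimes\mc{N}(\psi)-\id_R\otimes\mc{M}(\psi)\Vert_1\leq\varepsilon$, and one can compare $D(\id_R\otimes\mc{N}(\psi)\Vert\psi_R\otimes\gamma^\beta_A)$ to $D(\id_R\otimes\mc{M}(\psi)\Vert\psi_R\otimes\gamma^\beta_A)$ using the continuity bound of Winter/Shirokov for the relative entropy with a fixed second argument whose inverse spectral data are governed by $\gamma^\beta_A$; the explicit bound involves a term proportional to $\varepsilon\Vert\ln\gamma^\beta_A\Vert_\infty$ plus a binary-entropy correction $h(\varepsilon)$, both of which tend to $0$ as $\varepsilon\to 0$. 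Taking the supremum over $\psi$ on both sides (which preserves the $\varepsilon$-bound uniformly) gives $|F^\beta[\mc{N}]-F^\beta[\mc{M}]|\to 0$. For the sandwiched R\'enyi case with $\alpha>1$, I would use the analogous continuity estimates for $D_\alpha$ against a fixed second argument; these are more delicate, and this is the step where I would spend most of the work, since getting a $\Vert\cdot\Vert_\diamond$-based bound uniform over pure input states requires care with the dependence on the minimum eigenvalue of $\gamma^\beta_A$ and on $\alpha$. Having established (A4) for both divergences, together with the earlier items, the theorem follows.
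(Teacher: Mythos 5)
Your plan tracks the paper's proof closely: (A1) via data processing of the channel divergence under superchannels, (A2) by direct evaluation on the replacer channel, (A3) from faithfulness, (A6) from joint (quasi-)convexity with the second argument held fixed, and (A4) from a first-argument continuity bound for the relative entropy. For (A4) the paper uses the bound of \cite[Theorem 1]{BLT25} with a constant $K\geq\max\{D_\infty[\mc{N}\Vert\mc{T}^\beta],D_\infty[\mc{M}\Vert\mc{T}^\beta]\}$ (and the bounds of \cite{BCGM24} for the max- and R\'enyi cases), which sidesteps any dependence on the minimum eigenvalue of the varying factor $\psi_R$ in the second argument $\psi_R\otimes\gamma^\beta_A$; your AFW-type route also works, but only once you note that the $R$-marginals of $\mc{N}(\psi)$ and $\mc{M}(\psi)$ both equal $\psi_R$, so that term cancels and only $\Vert\ln\gamma^\beta_A\Vert_\infty$ survives uniformly over inputs.

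The one step that would fail as written is (A5): additivity of the Umegaki \emph{channel} divergence under tensor products is \emph{not} standard --- it is false for general pairs of channels, which is precisely why regularized channel divergences appear elsewhere in the literature. It holds here only because the second argument is a replacer channel, and the nontrivial ``$\leq$'' direction requires a chain-rule argument: the paper invokes \cite[Lemma 38 and Proposition 41]{WBHK20}, namely $D_\alpha(\mc{N}(\rho_{RA'})\Vert\sigma_R\otimes\tau_A)\leq D_\alpha[\mc{N}\Vert\mathcal{R}^\tau]+D_\alpha(\rho_{RA'}\Vert\sigma_{RA'})$ for $\alpha\in[1,\infty)$, applied with an enlarged reference $R'=RA_2$ to peel off one channel at a time. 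Your citation of \cite[Theorem 4.2]{FGR25} for $\alpha\geq 1$ (which includes the Umegaki case as the $\alpha\to1$ limit) would rescue the step, and the paper itself points to that result in the main text; but ``standard'' should be replaced by an explicit appeal to such a chain rule or to the replacer structure of $\mc{T}^\beta$, since without it the hard direction of additivity has no justification.
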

 We provide a detailed proof in Appendix~\ref{proof:axioms-satisfaction}.

 \begin{lemma} \label{lem:monotonicity_subpreserving}
    The sandwiched R\'enyi free energy $F^{\beta}_{\alpha}[\mc{N}]$, $\alpha\in[\frac{1}{2},\infty)$, is nonincreasing under the action of Gibbs-subpreserving superchannel. It remains invariant under the action of Gibbs preserving unitary superchannels.
\end{lemma}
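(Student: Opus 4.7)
The plan is to combine two standard properties of the sandwiched R\'enyi relative entropy, both valid for $\alpha\in[\frac{1}{2},\infty)$ (and at $\alpha=1$ by taking the limit): (i) \emph{antitonicity in the second argument}, i.e., $\sigma_1\leq\sigma_2$ implies $D_\alpha(\rho\|\sigma_1)\geq D_\alpha(\rho\|\sigma_2)$, and (ii) \emph{data processing under superchannels} at the level of channel divergences, $D_\alpha[\Theta(\mc{N})\|\Theta(\mc{M})]\leq D_\alpha[\mc{N}\|\mc{M}]$, which is a specialization of the generalized channel-divergence monotonicity already invoked in the excerpt via \cite[Theorem 1]{DGP24}. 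Neither ingredient is new.

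\textbf{Step 1 (monotonicity).} Let $\Theta^\beta\in\SCh((A',A),(B',B))$ be Gibbs-subpreserving, so that for every $\psi\in\St(RB')$,
\begin{equation}
\id_R\otimes\Theta^\beta(\mc{T}^\beta_{A'\to A})(\psi_{RB'}) \;\leq\; \psi_R\otimes\gamma^\beta_B \;=\; \id_R\otimes\mc{T}^\beta_{B'\to B}(\psi_{RB'}).
\end{equation}
Apply (i) pointwise in $\psi$ to the states $\id_R\otimes\Theta^\beta(\mc{N})(\psi_{RB'})$ against the two second arguments above, and then take the supremum over $\psi\in\St(RB')$, to obtain
\begin{equation}
D_\alpha[\Theta^\beta(\mc{N})\|\mc{T}^\beta_{B'\to B}] \;\leq\; D_\alpha[\Theta^\beta(\mc{N})\|\Theta^\beta(\mc{T}^\beta_{A'\to A})].
\end{equation}
Bound the right-hand side by (ii) to get $D_\alpha[\Theta^\beta(\mc{N})\|\Theta^\beta(\mc{T}^\beta_{A'\to A})]\leq D_\alpha[\mc{N}\|\mc{T}^\beta_{A'\to A}]$. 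Dividing by $\beta$ gives $F^\beta_\alpha[\Theta^\beta(\mc{N})]\leq F^\beta_\alpha[\mc{N}]$.

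\textbf{Step 2 (invariance under unitary Gibbs preserving superchannels).} I would exploit invertibility. Any unitary superchannel acts as $\Theta^\beta(\mc{N})=\mc{U}_{A\to B}\circ\mc{N}\circ\mc{V}_{B'\to A'}$ for some unitaries $U,V$, with inverse $\Theta^{\beta,-1}(\mc{N}')=\mc{U}^\dagger\circ\mc{N}'\circ\mc{V}^\dagger$, again a unitary superchannel. The Gibbs-preserving condition $\Theta^\beta(\mc{T}^\beta_{A'\to A})=\mc{T}^\beta_{B'\to B}$ forces $U\gamma^\beta_A U^\dagger=\gamma^\beta_B$, which immediately gives $\Theta^{\beta,-1}(\mc{T}^\beta_{B'\to B})=\mc{T}^\beta_{A'\to A}$; hence the inverse is also Gibbs preserving. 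Applying Step~1 in both directions yields
\begin{equation}
F^\beta_\alpha[\Theta^\beta(\mc{N})]\;\leq\;F^\beta_\alpha[\mc{N}]\;=\;F^\beta_\alpha[\Theta^{\beta,-1}(\Theta^\beta(\mc{N}))]\;\leq\;F^\beta_\alpha[\Theta^\beta(\mc{N})],
\end{equation}
forcing equality.

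The main subtlety lies in step~1: antitonicity has to be invoked when the second argument $\id_R\otimes\Theta^\beta(\mc{T}^\beta)(\psi_{RB'})$ is a positive semidefinite operator that is only sub-normalized with respect to $\id_R\otimes\mc{T}^\beta(\psi_{RB'})$, rather than a bona fide quantum state. This is handled by the standard extension of $D_\alpha$ to $\Pos$ together with the $\varepsilon\to 0^+$ regularization in the definition of the divergence, after noting that $\supp(\Theta^\beta(\mc{T}^\beta)(\psi))\subseteq\supp(\mc{T}^\beta(\psi))$ whenever $\Theta^\beta(\mc{T}^\beta)\leq\mc{T}^\beta$. The endpoint $\alpha=\frac{1}{2}$ is covered by lower semicontinuity of $D_\alpha$, and $\alpha=1$ follows by taking the limit (or equivalently by directly applying the analogous properties of the Umegaki relative entropy). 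Nothing else in the argument is expected to be delicate.
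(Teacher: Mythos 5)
Your proof is correct and follows essentially the same route as the paper's: the monotonicity part combines antitonicity of $D_\alpha$ in the second (positive semidefinite) argument with data processing under superchannels, exactly as in the paper (which cites \cite[Proposition 4]{MDS+13} and \cite{G19} for these two ingredients). The only difference is cosmetic: for invariance under Gibbs preserving unitary superchannels the paper simply cites \cite[Proposition 1]{DGP24}, whereas you give a short self-contained invertibility argument, which is a valid substitute.
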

The proof is given in Appendix \ref{app:proof_lem_monotonicity_subpreserving}.

 We note that the expression for the generalized divergence of athermality $\mathbf{D}[\n\Vert\mc{T}^\beta]$ of a quantum channel $\n_{A'\to A}$ was stated in \cite[Eqs. (67) \& (70)]{SPSD25} in relation to its thermal entropy $\mathbf{S}^\beta[\n]=-\mathbf{D}[\n\Vert\mc{T}^\beta]$. Albeit in a different context, a quantity termed as the robustness of athermality $R^\beta_T[\n]$ of a quantum channel, a particular example of the generalized divergence of athermality of the quantum channel, is studied in~\cite{LMB25}. The robustness of athermality is directly related to the max-relative entropy of athermality (and max-free energy), $D_{\infty}[\n\|\mc{T}^\beta]=\ln(1+R^\beta_T)$ (cf.~\cite{Dat09,LKDW18,LW19,LMB25}).

\begin{lemma}\label{prop:max-free-energy-choi}
    The max-free energy $F^{\beta}_{\infty}[\n]$ of a quantum channel $\n_{A'\to A}$ is equal to
    \begin{equation}\label{eq:max-c-1}
        F^{\beta}_{\infty}[\n]=\beta^{-1}D_{\infty}(\Phi^{\mathcal{N}}_{RA}\Vert\pi_A\otimes\gamma^\beta_A).
    \end{equation}
   For total Hamiltonian of the reference to the channel and its output $\widehat{H}_{RA}=\mathbbm{1}_R\otimes (\widehat{H}_A+c\mathbbm{1}_A)$, where $c\in\mathbbm{R}$, the max-free energy of the channel is equal to the max-free energy of its Choi state (cf.~\cite{LMB25}), 
    \begin{equation}\label{eq:max-c-2}
         F^{\beta}_{\infty}[\n]=F^{\beta}_{\infty}(\Phi^{\mc{N}}_{RA}).
    \end{equation}
\end{lemma}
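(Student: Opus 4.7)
The plan is to obtain both identities from the Choi-state expression of the max-relative entropy for channels.

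First, I would prove the standard fact that for any $\mc{N},\mc{M}\in\Ch(A',A)$,
\begin{equation*}
D_{\infty}[\mc{N}\Vert\mc{M}]=D_{\infty}(\Phi^{\mc{N}}_{RA}\Vert\Phi^{\mc{M}}_{RA}).
\end{equation*}
Using the operator-inequality form $D_{\infty}(\rho\Vert\sigma)=\inf\{\lambda\in\mathbbm{R}:\rho\leq e^{\lambda}\sigma\}$, the channel divergence reduces to the infimum of $\lambda$ such that $(\id_R\otimes\mc{N})(\psi)\leq e^{\lambda}(\id_R\otimes\mc{M})(\psi)$ for every $\psi\in\St(RA')$. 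One direction is immediate by specialization to $\psi=\Phi_{RA'}$. For the converse, every pure input admits the form $\psi_{RA'}=(X_R\otimes\mathbbm{1}_{A'})\Phi_{RA'}(X_R^{\dagger}\otimes\mathbbm{1}_{A'})$ (absorbing its Schmidt coefficients on the reference); conjugation by $X_R\otimes\mathbbm{1}$ commutes with $\id_R\otimes\mc{N}$ and with $\id_R\otimes\mc{M}$ and preserves the operator inequality, so the inequality at $\Phi_{RA'}$ lifts to all pure (hence, by convexity of the cone inequality, all) inputs.

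Next, I would observe that
\begin{equation*}
\Phi^{\mc{T}^\beta}_{RA}=(\id_R\otimes\mc{T}^\beta)(\Phi_{RA'})=\pi_R\otimes\gamma^{\beta}_A,
\end{equation*}
since $\mc{T}^\beta$ replaces every input with $\gamma^{\beta}_A$ while the marginal of $\Phi_{RA'}$ on $R$ is $\pi_R$. Substituting into the identity above yields Eq.~\eqref{eq:max-c-1}, with the maximally mixed state on the reference identified with $\pi_A$ under the square-channel convention $|R|=|A'|=|A|$.

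For Eq.~\eqref{eq:max-c-2}, I would use the assumed joint Hamiltonian $\widehat{H}_{RA}=\mathbbm{1}_R\otimes(\widehat{H}_A+c\mathbbm{1}_A)$ to factorize the global thermal state:
\begin{equation*}
\gamma^{\beta}_{RA}=\frac{e^{-\beta\widehat{H}_{RA}}}{\tr[e^{-\beta\widehat{H}_{RA}}]}=\pi_R\otimes\gamma^{\beta}_A,
\end{equation*}
because the $c\mathbbm{1}_A$ shift cancels between the numerator and the partition function (translation invariance of the thermal state noted in the text) and the trivial $R$ tensor factor produces $\pi_R$. Since this factorized form coincides with the second argument in Eq.~\eqref{eq:max-c-1}, we obtain $F^{\beta}_{\infty}[\mc{N}]=\beta^{-1}D_{\infty}(\Phi^{\mc{N}}_{RA}\Vert\gamma^{\beta}_{RA})=F^{\beta}_{\infty}(\Phi^{\mc{N}}_{RA})$. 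The principal obstacle is the first step, the Choi-state reduction for $D_{\infty}$ of channels; although this is standard (cf.\ Datta; Wilde), it requires the Schmidt-absorption argument sketched above to collapse the supremum over pure inputs into a single evaluation at $\Phi_{RA'}$. The remaining steps are direct computations exploiting the replacer structure of $\mc{T}^\beta$ and the non-interacting form of $\widehat{H}_{RA}$.
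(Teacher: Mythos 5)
Your proof is correct and follows essentially the same route as the paper: reduce $D_{\infty}[\mc{N}\Vert\mc{T}^{\beta}]$ to the Choi states via $D_{\infty}[\mc{N}\Vert\mc{M}]=D_{\infty}(\Phi^{\mc{N}}_{RA}\Vert\Phi^{\mc{M}}_{RA})$, compute $\Phi^{\mc{T}^{\beta}}_{RA}=\pi_R\otimes\gamma^{\beta}_A$, and then identify $\pi_R\otimes\gamma^{\beta}_A$ with $\gamma^{\beta}_{RA}$ under the stated Hamiltonian. The only difference is that the paper cites the Choi-reduction identity from the literature whereas you supply the standard Schmidt-absorption argument for it, which is a valid (and welcome) self-contained addition rather than a different approach.
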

\begin{proof}
    The proof of Eq.~\eqref{eq:max-c-1} follows from the fact that $D_{\infty}[\mathcal{N}\Vert\mathcal{M}]=D_{\infty}(\Phi^{\mc{N}}_{RA}\Vert\Phi^{\mc{M}}_{RA})$~\cite{LKDW18} and \cite[Eq.~(70)]{SPSD25},
    \begin{align}
        D_{\infty}[\mathcal{N}\Vert\mathcal{T}^{\beta}] & = D_{\infty}({\mc{N}}(\Phi_{RA'})\Vert \mc{T}^{\beta}(\Phi_{RA'}))\\
        & = D_{\infty}(\Phi^{\mathcal{N}}_{RA}\Vert\pi_A\otimes\gamma^\beta_A). 
    \end{align}
    The proof of Eq.~\eqref{eq:max-c-2} as follows from the above. We notice that for $\widehat{H}_{RA}=\mathbbm{1}_R\otimes(\widehat{H}_A+c\mathbbm{1}_A)$ (or, $\widehat{H}_R=c\mathbbm{1}_R$), $\pi_R\otimes\gamma^\beta_A=\gamma^{\beta}_R\otimes\gamma^\beta_A=\gamma^{\beta}_{RA}$. Then~\cite{LMB25},
    \begin{align}
        F^{\beta}_{\infty}[\mc{N}] &=\beta^{-1}D_{\infty}(\Phi^{\n}_{RA}\Vert\gamma^{\beta}_{RA})=F^\beta_{\infty}(\Phi^{\n}_{RA}).
    \end{align}
\end{proof}

\begin{lemma}[\cite{LMB25}]\label{prop:unitary_free_energy}
    The max-free energy $F^{\beta}_{\infty}[\mc{U}]$ of a unitary quantum channel $\mc{U}_{A'\to A}$ is
    \begin{equation}
        F^{\beta}_{\infty}[\mc{U}]= \beta^{-1}\ln \tr\left[{(\gamma^{\beta}_A)}^{-1}\right].
    \end{equation}
    If the Hamiltonian of $A$ is fully degenerate, $\widehat{H}_A\propto \mathbbm{1}_A$, then $F^{\beta}_{\infty}[\mc{U}]=2\beta^{-1}\ln|A|$.
\end{lemma}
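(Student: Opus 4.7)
My plan is to reduce the statement to a short calculation using Lemma~\ref{prop:max-free-energy-choi}. Since a unitary channel $\mc{U}_{A'\to A}$ has a pure Choi state, namely $\Phi^{\mc{U}}_{RA}=\ket{\psi^{\mc{U}}}\!\bra{\psi^{\mc{U}}}_{RA}$ with $\ket{\psi^{\mc{U}}}=(\id_R\otimes U)\ket{\Phi}_{RA'}$ where $\ket{\Phi}_{RA'}=\frac{1}{\sqrt{d}}\sum_i\ket{ii}_{RA'}$ and $d=|A'|=|A|$, I would exploit the standard identity that for a pure state $\ket{\psi}\!\bra{\psi}$ and a positive definite operator $\sigma$, one has $D_{\infty}(\ket{\psi}\!\bra{\psi}\Vert\sigma)=\ln\langle\psi\vert\sigma^{-1}\vert\psi\rangle$; this follows directly from the defining formula $D_{\infty}(\rho\Vert\sigma)=\ln\Vert\sigma^{-1/2}\rho\sigma^{-1/2}\Vert_{\infty}$ together with the rank-one structure.

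First I would invoke Eq.~\eqref{eq:max-c-1} to write $F^{\beta}_{\infty}[\mc{U}]=\beta^{-1}D_{\infty}(\Phi^{\mc{U}}_{RA}\Vert\pi_R\otimes\gamma^{\beta}_A)$. Using $\pi_R^{-1}=d\,\mathbbm{1}_R$ and the pure-state identity above, this reduces to
\begin{equation}
F^{\beta}_{\infty}[\mc{U}]=\beta^{-1}\ln\!\left(d\,\langle\psi^{\mc{U}}\vert\,\mathbbm{1}_R\otimes(\gamma^{\beta}_A)^{-1}\,\vert\psi^{\mc{U}}\rangle\right).
\end{equation}
Next, I would expand $\ket{\psi^{\mc{U}}}$ in the computational basis and perform the inner product, noting that the sum collapses to a single trace because the reference-side orthonormality eliminates cross terms, giving
\begin{equation}
\langle\psi^{\mc{U}}\vert\,\mathbbm{1}_R\otimes(\gamma^{\beta}_A)^{-1}\,\vert\psi^{\mc{U}}\rangle=\tfrac{1}{d}\,\tr\!\left[U^{\dagger}(\gamma^{\beta}_A)^{-1}U\right]=\tfrac{1}{d}\,\tr\!\left[(\gamma^{\beta}_A)^{-1}\right],
\end{equation}
using cyclicity of the trace and unitarity of $U$. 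Substituting back cancels the factor of $d$ and produces exactly $F^{\beta}_{\infty}[\mc{U}]=\beta^{-1}\ln\tr[(\gamma^{\beta}_A)^{-1}]$.

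For the degenerate case $\widehat{H}_A\propto\mathbbm{1}_A$, I would simply observe $\gamma^{\beta}_A=\pi_A=\mathbbm{1}_A/|A|$, hence $(\gamma^{\beta}_A)^{-1}=|A|\,\mathbbm{1}_A$ and $\tr[(\gamma^{\beta}_A)^{-1}]=|A|^2$, yielding $2\beta^{-1}\ln|A|$. There is no real obstacle here; the only point needing slight care is justifying the pure-state form $D_{\infty}(\ket{\psi}\!\bra{\psi}\Vert\sigma)=\ln\langle\psi\vert\sigma^{-1}\vert\psi\rangle$ (which requires $\supp(\ket{\psi}\!\bra{\psi})\subseteq\supp(\sigma)$, guaranteed here because $\pi_R\otimes\gamma^{\beta}_A$ has full support whenever $\widehat{H}_A$ is bounded), and ensuring the Hamiltonian convention on $R$ is consistent with Lemma~\ref{prop:max-free-energy-choi}, which we are free to set as $\widehat{H}_R=0$.
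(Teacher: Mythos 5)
Your proposal is correct and follows essentially the same route as the paper's proof: both reduce $D_{\infty}[\mc{U}\Vert\mc{T}^\beta]$ to $D_{\infty}(\Phi^{\mc{U}}_{RA}\Vert\pi_R\otimes\gamma^\beta_A)$, apply the pure-state identity $D_{\infty}(\psi\Vert\sigma)=\ln\bra{\psi}\sigma^{-1}\ket{\psi}$, and evaluate the inner product via the collapse of cross terms and cyclicity of the trace. Your added remarks on the support condition and the Hamiltonian convention on $R$ are sound but not points of divergence.
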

See \cite[Lemma 6]{BD26} for the max-free energy of an isometry quantum channel. For the sake of completeness, the proof of Lemma~\ref{prop:unitary_free_energy} is given in Appendix~\ref{proof:prop:unitary_free_energy}.

\begin{proposition}\label{thm:unitary-free-energy}
    For $\alpha\in\{1,\infty\}$ and $|A'|=|A|$, the sandwiched R\'enyi free energy $F^{\beta}_{\alpha}[\mc{N}]$ of a quantum channel $\n_{A'\to A}$ is maximum if and only if $\n$ is a unitary channel. For all unitary channels $\mc{U}_{A'\to A}$, $F^{\beta}_{\alpha}[\mc{U}]=F^{\beta}_{\alpha}[\id_{A'\to A}]$.
\end{proposition}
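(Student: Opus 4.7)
The plan is to decompose the claim into three parts: (i) every unitary channel $\mc{U}_{A'\to A}$ satisfies $F^\beta_\alpha[\mc{U}]=F^\beta_\alpha[\id_{A'\to A}]$, (ii) this common value upper bounds $F^\beta_\alpha[\n]$ for every channel $\n_{A'\to A}$, and (iii) only unitaries saturate the bound. Parts (i) and (ii) follow cleanly from monotonicity of $F^\beta_\alpha$ under Gibbs-preserving superchannels; part (iii) is where the main technical work lies.

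For (i), write $\mc{U}=\id_{A'\to A}\circ\mc{V}$ with $\mc{V}$ a unitary channel on $A'$. The supermap $\Theta(\mc{M}):=\mc{M}\circ\mc{V}$ is a Gibbs-preserving superchannel, because the replacer structure of $\mc{T}^\beta$ gives $\mc{T}^\beta\circ\mc{V}=\mc{T}^\beta$. Monotonicity of $F^\beta_\alpha$ under Gibbs-preserving superchannels (Theorem~\ref{thm:axioms-satisifaction} for $\alpha=1$, Lemma~\ref{lem:monotonicity_subpreserving} for $\alpha=\infty$) then gives $F^\beta_\alpha[\mc{U}]\le F^\beta_\alpha[\id_{A'\to A}]$, and the symmetric argument using $\mc{V}^{-1}$ supplies the reverse inequality.

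For (ii), take a Stinespring dilation $\n(\rho)=\tr_E[V\rho V^\dagger]$ for an isometry $V\colon A'\to A\otimes E$. Set $R:=E$ and, using the identification $A'\cong A$, let the preprocessing be $\mc{P}(\rho):=V\rho V^\dagger$ (viewed as a channel $A'\to R\otimes A'$) and the postprocessing be $\mc{Q}:=\tr_R\colon R\otimes A\to A$. A direct check shows $\Theta(\id_{A'\to A})=\n$, while $\Theta(\mc{T}^\beta_{A'\to A})(\rho)=\mc{Q}(\mc{P}(\rho)_R\otimes\gamma^\beta_A)=\gamma^\beta_A$, so $\Theta$ is Gibbs-preserving. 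Monotonicity then yields $F^\beta_\alpha[\n]\le F^\beta_\alpha[\id_{A'\to A}]$, which combined with (i) gives the desired upper bound.

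For (iii), I would treat the two values of $\alpha$ separately. For $\alpha=\infty$, Lemma~\ref{prop:max-free-energy-choi} turns equality into $D_\infty(\Phi^\n_{RA}\|\pi_R\otimes\gamma^\beta_A)=\ln\tr[(\gamma^\beta_A)^{-1}]$; combining the variational characterization of $D_\infty$ with the sharp pure-state extremizer $|\mc{U}\rangle$ from Lemma~\ref{prop:unitary_free_energy} forces $\Phi^\n$ to be rank one along that optimal direction, hence an isometry and, since $|A'|=|A|$, unitary. For $\alpha=1$, expanding for pure $\psi_{RA'}$ using $\ln(\psi_R\otimes\gamma^\beta_A)=\ln\psi_R\otimes\mathbbm{1}+\mathbbm{1}\otimes\ln\gamma^\beta_A$ gives
\[
D(\n(\psi_{RA'})\|\psi_R\otimes\gamma^\beta_A)=-S(\n(\psi)_{RA})+S(\psi_R)+\beta\langle\widehat{H}_A\rangle_{\n(\psi_{A'})}+\ln Z_A^\beta;
\]
the Araki--Lieb bound $-S(\n(\psi)_{RA})\le S(\n(\psi_{A'}))-S(\psi_R)$ together with the Gibbs variational principle $\sup_\rho[S(\rho)+\beta\langle\widehat{H}_A\rangle_\rho]=\ln\tr[e^{\beta\widehat{H}_A}]$ yields $D[\n\|\mc{T}^\beta]\le\ln\tr[(\gamma^\beta_A)^{-1}]$, and saturation demands both Araki--Lieb saturation at an optimal $\psi$ and $\n(\psi_{A'})=e^{\beta\widehat{H}_A}/\tr[e^{\beta\widehat{H}_A}]$. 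Since the latter state is full rank for bounded $\widehat{H}_A$ and finite $\beta$, and Araki--Lieb saturation combined with $|R|=|A'|$ forces $\n(\psi)_{RA}$ to be pure on $R\otimes A$, one concludes that $\n$ maps the purification $\psi_{RA'}$ to a pure state, so $\n$ is an isometry on the full-rank support of $\psi_{A'}$, hence unitary. I anticipate that the $\alpha=\infty$ equality analysis will be the most delicate step; if the variational/rank-one argument turns out to be subtle, an alternative would be to invoke the saturation conditions for the data-processing inequality of $D_\infty$ under the Stinespring superchannel constructed in (ii), which produce a recovery map forcing the partial trace over $E$ to be reversible and hence $\n$ to be unitary.
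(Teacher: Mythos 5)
Your decomposition is sound, and parts (i)--(ii) together with the $\alpha=1$ case of (iii) form a correct argument that is genuinely different from the paper's. The paper proves the upper bound by writing the channel output as a mixture of pure states and invoking joint (quasi-)convexity of $D_\alpha$, whereas you get it from monotonicity alone via the Stinespring superchannel $\Theta(\id)=\n$; that is a cleaner route and works uniformly for every $\alpha$ for which (A1) holds. (Minor point: for $\alpha=\infty$ you cite Lemma~\ref{lem:monotonicity_subpreserving}, which is stated for $\alpha\in[\tfrac{1}{2},\infty)$; monotonicity of $D_\infty$ under superchannels is standard but should be invoked explicitly.) For the $\alpha=1$ equality analysis the paper appeals to the equality condition for joint convexity of the relative entropy, while your Araki--Lieb/Gibbs-variational route is more self-contained; note only that the step ``Araki--Lieb saturation forces $\n(\psi)_{RA}$ to be pure'' requires the Carlen--Lieb structure theorem ($\rho_{RA}=\rho_{R_1}\otimes\phi_{R_2A}$ with $\phi_{R_2A}$ pure), after which the dimension count $|R|=|A|=\operatorname{rank}\!\left(\n(\psi_{A'})\right)$ does force $|R_1|=1$ as you intend.

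The genuine gap is the $\alpha=\infty$ ``only if'' direction. The claim that attaining $D_\infty(\Phi^{\n}_{RA}\Vert\pi_R\otimes\gamma^\beta_A)=\ln\tr[(\gamma^\beta_A)^{-1}]$ ``forces $\Phi^{\n}$ to be rank one along the optimal direction'' is an assertion, not an argument: maximality of $\lambda_{\max}\bigl((\pi\otimes\gamma)^{-1/2}\Phi^{\n}(\pi\otimes\gamma)^{-1/2}\bigr)$ says nothing a priori about the rank of $\Phi^{\n}$, and the naive bound $\langle\psi|(\pi_R\otimes\gamma^\beta_A)^{-1}|\psi\rangle\le\tr[(\gamma^\beta_A)^{-1}]$ is false for general pure $\psi_{RA}$ (take $\psi_A$ concentrated on the top eigenvector of $(\gamma^\beta_A)^{-1}$ for a nondegenerate Hamiltonian), so any equality analysis must use the Choi constraint $\tr_A\Phi^{\n}=\pi_R$ in an essential way. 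Your fallback via saturation of data processing for $D_\infty$ is also delicate: the recoverability results for sandwiched R\'enyi divergences are formulated for $\alpha\in(\tfrac{1}{2},1)\cup(1,\infty)$ and do not transfer to $\alpha=\infty$ without additional work, and even granted a recovery map one must still argue that reversibility of $\tr_E$ on the relevant states makes the complementary channel trivial. The paper closes this case differently: it sandwiches $D_\infty$ of a mixture $\sum_x p_x\psi^x$ between $\max_x\ln\langle\psi^x|\sigma^{-1}|\psi^x\rangle$ and $\ln\sum_x p_x\langle\psi^x|\sigma^{-1}|\psi^x\rangle$ via a Gram-matrix computation (Lemma~\ref{lemma:convex_purestate}) and deduces that equality with the pure-state value forces purity of the output (also deferring to \cite{LMB25}). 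You would need either to adopt that route or to supply an actual equality argument for $D_\infty$ before this case can be considered proved.
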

    Details of the proof are in Appendix~\ref{app:proof-thm-unitary-free-energy}; the proof for $\alpha=\infty$ follows from \cite[Corollary A.4.2]{LMB25}, since $F_{\infty}^\beta[\n]=\beta^{-1}\ln(1+R^\beta_T[\n])$.

\begin{lemma}\label{lem:free-ordering}
   Let $\mc{N}$ be a quantum channel. The generalized free energies obey order $\mathbf{F}^{\beta,(x)}[\mc{N}]\leq \mathbf{F}^{\beta,(y)}[\mc{N}]$ if respective generalized state divergence obey the same order, i.e., $\mathbf{D}^{(x)}(\rho\Vert\sigma)\leq \mathbf{D}^{(y)}(\rho\Vert\sigma)$ for quantum states $\rho,\sigma$.
\end{lemma}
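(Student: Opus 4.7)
The plan is to unfold the definitions and reduce the inequality at the channel level to the pointwise hypothesis on states. By Definition of the generalized free energy (Eq.~\eqref{eq:f-channel}), we have $\mathbf{F}^{\beta,(x)}[\mc{N}]=\beta^{-1}\mathbf{D}^{(x)}[\mc{N}\Vert\mc{T}^\beta]$ and similarly for $(y)$. Since $\beta>0$, it suffices to show $\mathbf{D}^{(x)}[\mc{N}\Vert\mc{T}^\beta]\leq \mathbf{D}^{(y)}[\mc{N}\Vert\mc{T}^\beta]$.

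First I would invoke the definition of the generalized channel divergence as a supremum over reference-augmented input states,
\begin{equation}
\mathbf{D}^{(x)}[\mc{N}\Vert\mc{T}^\beta]=\sup_{\psi\in\St(RA')}\mathbf{D}^{(x)}\bigl(\id_R\otimes\mc{N}(\psi_{RA'})\,\bigl\Vert\,\psi_R\otimes\gamma^\beta_A\bigr),
\end{equation}
where I used that $\id_R\otimes\mc{T}^\beta(\psi_{RA'})=\psi_R\otimes\gamma^\beta_A$. For each fixed $\psi\in\St(RA')$, the output operators $\id_R\otimes\mc{N}(\psi_{RA'})$ and $\psi_R\otimes\gamma^\beta_A$ are quantum states on $RA$, so the hypothesis $\mathbf{D}^{(x)}(\rho\Vert\sigma)\leq \mathbf{D}^{(y)}(\rho\Vert\sigma)$ applies pointwise to yield
\begin{equation}
\mathbf{D}^{(x)}\bigl(\id_R\otimes\mc{N}(\psi_{RA'})\,\bigl\Vert\,\psi_R\otimes\gamma^\beta_A\bigr)\leq \mathbf{D}^{(y)}\bigl(\id_R\otimes\mc{N}(\psi_{RA'})\,\bigl\Vert\,\psi_R\otimes\gamma^\beta_A\bigr).
\end{equation}

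Finally, taking the supremum over $\psi\in\St(RA')$ on both sides preserves the inequality (the supremum of a pointwise dominated family is dominated by the supremum of the dominating family), giving $\mathbf{D}^{(x)}[\mc{N}\Vert\mc{T}^\beta]\leq \mathbf{D}^{(y)}[\mc{N}\Vert\mc{T}^\beta]$. Multiplying by $\beta^{-1}>0$ yields the claim. There is no real obstacle here: the lemma is essentially a monotonicity-preservation statement under the supremum, so the proof is a one-line definitional chase, and the only thing to be careful about is writing out the replacer-channel action $\id_R\otimes\mc{T}^\beta(\psi_{RA'})=\psi_R\otimes\gamma^\beta_A$ correctly so that the pointwise inequality on states applies verbatim.
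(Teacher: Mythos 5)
Your proof is correct and is essentially the argument the paper intends: the paper's proof simply cites the definition of the generalized free energy together with the ordering-inheritance argument of \cite[Lemma 1]{DGP24}, which is exactly the pointwise-inequality-plus-supremum chase you have written out explicitly.
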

\begin{proof}
    The proof follows from the definition of the generalized free energy and proof argument similar to \cite[Lemma 1]{DGP24}.
\end{proof}

A direct consequence of the above lemma is that, for all $\alpha_1, \alpha_2\in[\frac{1}{2},\infty)$ such that $\alpha_1\leq \alpha_2$, we have $F^{\beta}_{\alpha_1}[\n]\leq F^{\beta}_{\alpha_2}[\n]$. 

\begin{lemma}
    The free energy $F^\beta[\n]$ of a quantum channel $\n_{A'\to A}$ is lower bounded by the completely-bounded norm of the channel with the absolutely thermal channel $\mc{T}^\beta_{A'\to A}$,
    \begin{equation}
        F^\beta[\n] \geq \frac{1}{2\beta}\norm{\n-\mc{T}^\beta}_{\diamond}^2.
    \end{equation}
\end{lemma}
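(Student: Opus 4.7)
The plan is to deduce the bound from the quantum Pinsker inequality applied pointwise to each input state, and then promote the bound to the channel level by taking suprema on both sides.

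First I would recall the quantum Pinsker inequality: for any state $\rho$ and any state $\sigma$ on the same system,
\begin{equation}
D(\rho \Vert \sigma) \geq \tfrac{1}{2} \norm{\rho - \sigma}_1^2 .
\end{equation}
I would then apply this inequality to the output states $\id_R \otimes \mathcal{N}(\psi_{RA'})$ and $\id_R \otimes \mathcal{T}^\beta(\psi_{RA'}) = \psi_R \otimes \gamma^\beta_A$, for an arbitrary pure $\psi_{RA'} \in \St(RA')$ with $|R| = |A'|$.

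Next I would take the supremum over $\psi_{RA'}$ on both sides. On the left this produces $D[\mathcal{N} \Vert \mathcal{T}^\beta]$ by definition of the channel relative entropy (for which optimizing over pure states suffices). On the right, since the map $\psi_{RA'} \mapsto \norm{\id_R \otimes (\mathcal{N} - \mathcal{T}^\beta)(\psi_{RA'})}_1^2$ is monotone nondecreasing in the inner norm, the supremum equals the square of $\norm{\mathcal{N} - \mathcal{T}^\beta}_\diamond$ (again, optimizing over pure states with $|R| = |A'|$ suffices for the diamond norm of a Hermiticity-preserving map). Therefore
\begin{equation}
D[\mathcal{N} \Vert \mathcal{T}^\beta] \geq \tfrac{1}{2} \norm{\mathcal{N} - \mathcal{T}^\beta}_\diamond^2,
\end{equation}
and multiplying by $\beta^{-1}$ yields the claimed inequality $F^\beta[\mathcal{N}] \geq \frac{1}{2\beta} \norm{\mathcal{N} - \mathcal{T}^\beta}_\diamond^2$.

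There is no serious obstacle: the only subtle point is to verify that the supremum in the definition of the diamond norm can be interchanged with the squaring operation, which is immediate since both suprema are attained on the same compact set of pure states and $x \mapsto x^2$ is monotone on $[0,\infty)$. The proof is essentially one line once Pinsker's inequality is invoked at the state level.
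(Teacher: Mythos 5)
Your proposal is correct and follows exactly the paper's route: apply the state-level Pinsker inequality to $\mathcal{N}(\psi_{RA'})$ versus $\mathcal{T}^\beta(\psi_{RA'})$ and take suprema over pure inputs, using monotonicity of squaring to pass to the diamond norm. The paper states this more tersely but the argument is identical.
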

\begin{proof}
    The proof follows from Pinsker's inequality for any two states $\rho,\sigma$:
    \begin{equation}
        D(\rho\Vert\sigma)\geq \frac{1}{2}\norm{\rho-\sigma}_1^2.
    \end{equation}
    The Pinsker's inequality implies that, for any two quantum channels $\n,\mc{M}\in\Ch(A',A)$,
    \begin{equation}
        D[\n\Vert\mc{M}]\geq \frac{1}{2}\norm{\n-\mc{M}}_{\diamond}^2.
    \end{equation}
\end{proof}

\section{Dynamical Resource Theory of Athermality}\label{sec:resourcetheory}
We now introduce framework and characterization of the thermodynamic resource theory of square quantum channels. A quantum channel $\n_{A'\to A}$ is a square quantum channel if $|A'|=|A|$; in other words, a quantum channel with square Choi operator is called a square channel. To quantify the thermodynamic resourcefulness of a quantum channel $\n$, we consider its distinguishability from a thermal channel $\mc{T}^{\beta}$ whose output $\gamma^\beta$ is in thermal equilibrium with the bath~\cite{DS25}. The absolutely thermal channel $\mc{T}^{\beta}$ represents thermalization process with respect to the given bath at inverse temperature $\beta$. 

To characterize a resource theory of athermality of quantum channels, we identify our free object to be the absolutely thermal channel $\mc{T}^\beta$ and the free operations to be Gibbs preserving superchannels (GPSC). This is a relevant assumption as thermalization processes seem to be natural in practical scenarios where quantum systems inherently tend to thermalize upon interaction with the bath. The closed systems in isolation with the bath are nontrivial and idealistic. We write the athermal channel resource as $(\n,\mathcal{T}^\beta)$ to specify that the athermality of the channel $\n$ is being considered with respect to $\mc{T}^{\beta}$. We call two channel-resources $(\n,\mathcal{T}^\beta)$ and $(\mathcal{M},\mathcal{T}^{\beta'})$ to be equivalent, $(\n,\mathcal{T}^\beta)\sim(\mathcal{M},\mathcal{T}^{\beta'})$, if they are interconvertible through some GPSCs.

For a quantum superchannel $\Theta\in\SCh((A',A),(B',B))$ with the following action,
\begin{equation}
    \Theta(\n_{A'\to A})=\mc{Q}_{A\to B}\circ\n\circ\mc{P}_{B'\to A'},
\end{equation}
$\Theta$ is Gibbs preserving if and only if $\mc{Q}_{A\to B}$ is Gibbs preserving channel, i.e., $\mc{Q}(\gamma^\beta_A)=\gamma^{\beta}_B$. A quantum superchannel $\widetilde{\Theta}\in\SCh((A',A),(B',B))$, where $\widetilde{\Theta}(\n)=\widetilde{\mc{Q}}_{PA\to B}\circ\n\circ\widetilde{\mc{P}}_{B'\to PA'}$ is also Gibbs preserving if its postprocessing channel $\widetilde{\mc{Q}}_{PA\to B}$ is absolutely thermal. 

 \textit{Golden units of athermality}.--- For a square quantum channel $\n_{A'\to A}$, the free energy is maximum if and only if the channel is unitary. For an arbitrary pair $\mc{U}_{A'\to A},\mc{V}_{A'\to A}$ of unitary channels, there always exists a GPSC $\Theta^{\beta}\in\SCh((A',A),(A',A))$ such that $\Theta^{\beta}(\mc{U})=\mc{V}$, for an instance, $\Theta^{\beta}(\n)=\n\circ(\mc{U}^{\dag}\circ\mc{V})$. That is, all unitary channels $(\mc{U},\mc{T}^\beta)$ are equally resourceful, $(\id_{A'\to A},\mc{T}^\beta_{A'\to A})\sim(\mc{U}_{A'\to A},\mc{T}^\beta_{A'\to A})$. This is consistent with the free energy of all unitary channels $(\mc{U},\mc{T}^\beta)$ being the same. The free energy of a unitary channel is least when the output Hamiltonian is fully degenerate (see Lemma~\ref{lemma:free-max-energy}). With freedom of choice for golden units and the fact that GPSCs are free operations in our athermality framework, we mark our golden unit as  the identity resource channel $(\id_{A'\to A},\mathcal{R}^{\pi})$, where $\mc{R}^{\pi}_{A'\to A}$ is the absolutely thermal channel for fully degenerate output Hamiltonian ($\widehat{H}_A\propto \mathbbm{1}_A$). We denote $(\id_{A'\to A},\mc{R}^\pi_{A'\to A})$ as $(\id_m,\mc{R}^\pi)$ for $|A'|=|A|=m$.

 It is evident that $(\id_{A'\to A},\mathcal{R}^{\pi}_{A'\to A})$ is more resourceful than $(\id_{B'\to B},\mathcal{R}^{\pi}_{B'\to B})$ if $|A|>|B|$~\cite{GW21,DS25,SPSD25}, therefore, we will have to be considerate of the dimension of $\id$ in the golden unit. The identity channel $\id_m$ (or any equivalent unitary channel $\mc{U}_m$) allows for the perfect distribution of $m\times m$-dimensional quantum states between two parties~\cite{Sch96,ADHW09}, equivalent to the distribution of a maximally entangled state $\Phi$ of Schmidt rank $m$~\cite{DW05,ADHW09,KDWW19}.

The uniformly mixing channel $\mathcal{R}^\pi_{A'\to A}$, $|A'|=|A|=m$, can be obtained by the action of unitary channels $\mc{U}(\cdot)=U(\cdot){U}^\dag$ uniformly picked at random with respect to Haar measure,
\begin{equation}
    \mathcal{R}^\pi(\cdot)=\int_{U\in\mathbbm{U}(m)}{\operatorname{d}\!\mu_U}U(\cdot)U^\dag,
\end{equation}
where $\mathbbm{U}(m)$ is the set of all unitary operators $U_{A'\to A}$. We can also obtain $\mathcal{R}^\pi$ by taking a uniform mixture of Weyl unitary channels $\mathcal{W}^i(\cdot)=W^i(\cdot){W^i}^\dagger$,
\begin{align}
    \mathcal{R}^\pi(\cdot) &=\frac{1}{m^2}\sum_{i=0}^{m^2-1} {W}^i(\cdot){W^i}^\dag,
\end{align}
where $\{W^i\}_{i=0}^{m^2-1}$ is the set (group) of Weyl unitaries; $\{W^i\}_{i=0}^{m^2-1}$ forms a complete orthonormal basis for the space of all linear operators acting on $m$-dimensional Hilbert space. Let $\mc{W}^0=\mathrm{\id}_m$ and $\id_m^{\perp}:=\sum_{i=1}^{m^2-1}\mathcal{W}^i$, then
\begin{align}
 \mc{R}^\pi  & =\frac{1}{m^2}\id_m+\frac{1}{m^2}\sum_{i=1}^{m^2-1}\mathcal{W}^i=\frac{1}{m^2}(\id_m+\id_m^{\perp}).
\end{align}
This implies~(cf.~\cite[Proposition 11]{DKSW18})
\begin{equation}
    \frac{1}{2}\norm{\id_m-\mc{R}^\pi}_{\diamond}=\frac{1}{2}\norm{\mc{U}-\mc{R}^\pi}_{\diamond}=1-\frac{1}{m^2}.
\end{equation}

\textit{Channel conversion distance}.-- For a given resource channel $(\n,\mathcal{T}^{\beta}_1)$ being transformed to another resource channel $(\mc{M},\mathcal{T}^{\beta}_2)$, the conversion distance under GPSCs are defined as
    \begin{align}
      &  d_{\mathrm{GP}}((\n,\mathcal{T}^{\beta}_1)\rightarrow(\mc{M},\mathcal{T}^{\beta}_2))\nonumber\\
        &\qquad=\min_{\Theta}\left\{P[\mc{M},\Theta(\n)]~:~\Theta(\mathcal{T}^{\beta}_1)=\mathcal{T}^{\beta}_2\right\}.
    \end{align}
Here $P[\mc{M},\mc{K}]$ is the purified channel distance between the channels $\mc{M}$ and $\mc{K}$ given by
\begin{align}
    P[\mc{M},\mc{K}]:=\sup_{\psi\in \St(RA')} P(\mc{M}(\psi_{RA'}),\mc{K}(\psi_{RA'}))
\end{align}
where $P(·,·)$ is the purified distance, and it suffices to take $\psi_{RA'}$ to be pure.
\begin{definition}[One-shot athermality distillation and formation]
For an error $\varepsilon\in[0,1]$, the single-shot athermality distillation of a resource channel $(\n,\mc{T}^\beta)$ under GPSCs is defined as
\begin{align}
        \mathrm{Dist}&^\varepsilon (\n,\mathcal{T}^\beta)\nonumber \\
        &:=\sup_{m}\left\{\ln m:~d_{\mathrm{GP}}((\n,\mathcal{T}^\beta)\rightarrow(\id_m,\mathcal{R}^\pi))\le\varepsilon\right\}.
    \end{align}
For an error $\varepsilon\in[0,1]$, the single-shot athermality formation (cost) of a resource channel $(\n,\mathcal{T}^\beta)$ is defined as 
    \begin{align}
        \mathrm{Cost}&^\varepsilon (\n,\mathcal{T}^\beta)\nonumber \\
        &:=\inf_{m}\left\{\ln m:~    d_{\mathrm{GP}}((\id_m,\mathcal{R}^\pi)\rightarrow(\n,\mathcal{T}^\beta))\le\varepsilon\right\}.
    \end{align}
\end{definition}

The single-shot athermality distillation $\mathrm{Dist}^\varepsilon (\n,\mathcal{T}^\beta)$ provides the largest dimensional golden unit $(\id_m,\mc{R}^{\pi})$ that can be distilled from the resource channel $(\n,\mathcal{T}^\beta)$ under GPSCs, up to allowed error $\varepsilon$. We could possibly also interpret $(\ln 2)^{-1}\mathrm{Dist}^\varepsilon (\n,\mathcal{T}^\beta)$ as the maximum number of elementary golden units $(\id_2,\mc{R}^{\pi})$ distillable from the resource channel $(\n,\mathcal{T}^\beta)$ under GPSCs. The single-shot athermality formation $\mathrm{Cost}^\varepsilon (\n,\mathcal{T}^\beta)$ provides the least dimensional golden unit $(\id_m,\mc{R}^{\pi})$ needed to form the resource channel $(\n,\mathcal{T}^\beta)$ under GPSCs, up to allowed error $\varepsilon$. We could possibly also interpret $(\ln 2)^{-1}\mathrm{Cost}^\varepsilon (\n,\mathcal{T}^\beta)$ as the minimum number of elementary golden units $(\id_2,\mc{R}^{\pi})$ spent to form the resource channel $(\n,\mathcal{T}^\beta)$ under GPSCs.

\begin{theorem}[Athermality distillation and formation]\label{thm:dist_cost}
For any error $\varepsilon\in[0,1]$ and a given resource channel $(\n,\mathcal{T}^\beta)$, the single-shot athermality distillation and formation are proportional to the $\varepsilon$-hypothesis-testing free energy and the $\varepsilon$-max-free energy of the channel $\n$, respectively,
\begin{align}
    \mathrm{Dist}^\varepsilon (\n,\mathcal{T}^\beta)&=\frac{1}{2}D_{H}^{\varepsilon^2}[\n\Vert\mathcal{T}^\beta],\\
     \mathrm{Cost}^\varepsilon (\n,\mathcal{T}^\beta)&=\frac{1}{2}D_{\infty}^\varepsilon[\n\Vert\mathcal{T}^\beta],
\end{align}
$F^{\beta,\varepsilon}_H[\n]=\frac{2}{\beta}\mathrm{Dist}^{\sqrt{\varepsilon}} (\n,\mathcal{T}^\beta)$ and $F^{\beta,\varepsilon}_{\infty}[\n]=\frac{2}{\beta}\mathrm{Cost}^\varepsilon (\n,\mathcal{T}^\beta)$.
\end{theorem}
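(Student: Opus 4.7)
The plan is to prove each equality by pairing a converse bound (data-processing) with an achievability construction, exploiting that every Gibbs preserving superchannel (GPSC) factors as $\Theta(\mc{N})=\mc{Q}\circ\mc{N}\circ\mc{P}$ with $\mc{Q}$ a Gibbs preserving channel, and repeatedly invoking two golden-unit divergence evaluations: (i) $D_\infty[\id_m\Vert\mc{R}^\pi]=2\ln m$ from Lemma~\ref{prop:unitary_free_energy} in the trivial-output-Hamiltonian case, and (ii) $D_H^\varepsilon(\Phi_{RM}\Vert\mathbbm{1}_{RM}/m^2)\geq 2\ln m$, obtained via the test operator $\Lambda=\Phi_{RM}$.

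For the distillation converse, I would apply $\Theta(\mc{N})$ to the maximally entangled state $\Phi_{RM}$ and set $\xi_{RA'}:=(\id_R\otimes\mc{P})(\Phi_{RM})$, noting $\xi_R=\pi_R$. Gibbs preservation of $\mc{Q}$ then maps $\mc{T}^\beta(\xi_{RA'})=\pi_R\otimes\gamma^\beta_A$ to $\mathbbm{1}_{RM}/m^2$, while $\mc{N}(\xi_{RA'})$ is mapped to $\Theta(\mc{N})(\Phi_{RM})$, which is close to $\Phi_{RM}$ in trace distance by hypothesis. Chaining the definition of $D_H^\varepsilon[\mc{N}\Vert\mc{T}^\beta]$ on the input $\xi_{RA'}$, data-processing under $\mc{Q}$, and the golden-unit evaluation (ii) yields $D_H^\varepsilon[\mc{N}\Vert\mc{T}^\beta]\geq 2\ln m$, i.e., $\mathrm{Dist}^\varepsilon\leq\tfrac{1}{2}D_H^\varepsilon[\mc{N}\Vert\mc{T}^\beta]$. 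For achievability, let $\psi^*_{RA'}$ and $\Lambda^*_{RA}$ attain the optimum in $D_H^\varepsilon[\mc{N}\Vert\mc{T}^\beta]$, with $m:=\exp(\tfrac{1}{2}D_H^\varepsilon[\mc{N}\Vert\mc{T}^\beta])$, so that $\tr[\Lambda^*(\psi^*_R\otimes\gamma^\beta_A)]=1/m^2$. Take $\mc{P}_{M\to A'}$ to coherently prepare $\psi^*_{A'}$ from the $M$-register of $\Phi_{RM}$ (identifying $R$ with $M$), and $\mc{Q}_{A\to M}$ to carry out the binary test $\{\Lambda^*,\mathbbm{1}-\Lambda^*\}$ jointly with $R$: on ``success'' output the ideal register, and on ``failure'' output $\pi_M$. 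Gibbs preservation $\mc{Q}(\gamma^\beta_A)=\pi_M$ is then calibrated exactly by the $1/m^2$ failure weight on $\mc{T}^\beta(\psi^*_{RA'})$.

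For the formation converse, monotonicity of $D_\infty$ under superchannels gives
\begin{equation*}
    D_\infty^\varepsilon[\mc{N}\Vert\mc{T}^\beta]\leq D_\infty[\Theta(\id_m)\Vert\Theta(\mc{R}^\pi)]\leq D_\infty[\id_m\Vert\mc{R}^\pi]=2\ln m,
\end{equation*}
so $\mathrm{Cost}^\varepsilon\geq\tfrac{1}{2}D_\infty^\varepsilon[\mc{N}\Vert\mc{T}^\beta]$. For achievability, take $\mc{M}\in\mc{B}^\varepsilon[\mc{N}]$ attaining $D_\infty^\varepsilon[\mc{N}\Vert\mc{T}^\beta]$ and set $m=\exp(\tfrac{1}{2}D_\infty^\varepsilon[\mc{N}\Vert\mc{T}^\beta])$; the inequality $\mc{M}\leq m^2\mc{T}^\beta$ then supplies a channel $\mc{E}$ with $\mc{T}^\beta=\tfrac{1}{m^2}\mc{M}+(1-\tfrac{1}{m^2})\mc{E}$. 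The GPSC I build uses the $M$-register coherently as a flag, executing $\mc{M}$ on a distinguished rank-one subspace of $M$ and $\mc{E}$ on its complement. Under $\id_m$ the flag is preserved, selecting $\mc{M}$; under $\mc{R}^\pi$ the flag is averaged to $\pi_M$, reproducing the convex combination $\mc{T}^\beta$.

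The principal obstacle is the two achievability constructions: the post-processing channels must simultaneously be CPTP, satisfy the Gibbs preservation constraint, and saturate the converse bounds. Careful bookkeeping of the smoothing parameter between the diamond norm, trace distance, and the purified distance used by $\mc{B}^\varepsilon$ is also required. The factor of $\tfrac{1}{2}$ in both equalities has a unified origin: the $m$-dimensional golden unit $\id_m$ has Choi dimension $m^2$, contributing $2\ln m$ nats of divergence against the maximally mixed/thermal reference, so each unit of channel log-dimension corresponds to two nats of relative entropy.
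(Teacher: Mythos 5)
Your overall architecture — a converse by data processing against the golden unit and an achievability via an explicit test-based GPSC, with the factor $\tfrac12$ traced to the $m^2$-dimensional Choi register — is the same as the paper's, and your two converse arguments are correct (indeed more carefully justified than the paper's, which simply posits the measure-and-prepare form of the optimal superchannel). The gaps are in both achievability constructions, and they are concrete calibration errors, not bookkeeping. For distillation, your failure branch outputs $\pi_M$, i.e., applies the replacer $\mc{R}^\pi$ to the stored golden-unit input. Then
\begin{equation*}
\Theta(\mc{T}^\beta)=\tfrac{1}{m^2}\,\id_m+\bigl(1-\tfrac{1}{m^2}\bigr)\mc{R}^\pi\neq\mc{R}^\pi,
\end{equation*}
since $\mc{R}^\pi=\tfrac{1}{m^2}(\id_m+\id_m^{\perp})$ and the Choi state of the left-hand side carries an excess $\Phi$ component of weight $\tfrac{1}{m^2}$; the superchannel is therefore not Gibbs preserving and is not a free operation. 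The paper's fix is to apply, on failure, the uniform mixture $\tfrac{1}{m^2-1}\id_m^{\perp}$ of the non-identity Weyl unitary channels, which is orthogonal to $\id_m$ at the Choi level, so the success weight $\tr[\Lambda^*\mc{T}^\beta(\psi^*)]=1/m^2$ recombines exactly to $\mc{R}^\pi$ while the conversion distance remains $1-\tr[\Lambda^*\n(\psi^*)]\le\varepsilon$. Separately, the pre-processing cannot ``coherently prepare $\psi^*_{A'}$ from the $M$-register'': for a general optimizer the marginal $\psi^*_R$ is not $\pi_R$, so no local channel on $M$ applied to $\Phi_{RM}$ produces $\psi^*_{RA'}$; you must store the golden-unit input in memory and prepare $\psi^*_{RA'}$ fresh.

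For formation, the flag you describe is a rank-one subspace of the $m$-dimensional register $M$, which $\mc{R}^\pi$ populates with weight $1/m$, not $1/m^2$. That only certifies the decomposition $\mc{T}^\beta=\tfrac{1}{m}\mc{M}+(1-\tfrac{1}{m})\mc{E}$, i.e., it requires $\mc{M}\le m\,\mc{T}^\beta$ and yields $\mathrm{Cost}^\varepsilon=D_\infty^\varepsilon[\n\Vert\mc{T}^\beta]$ without the factor $\tfrac12$, contradicting your own converse. The flag must be the maximally entangled projector on the full $m^2$-dimensional Choi pair: feed $\Phi_{R_0M'}$ through the golden unit, test $\{\Phi_{R_0M},\mathbbm{1}-\Phi_{R_0M}\}$, and the failure weight under $\mc{R}^\pi$ becomes $1-\tfrac{1}{m^2}$, matching $\mc{M}\le m^2\mc{T}^\beta$. (Equivalently, the paper extracts the operator inequality $m^2\mc{T}^\beta\ge\Theta(\id_m)$ directly from $\Theta(\mc{R}^\pi)=\tfrac{1}{m^2}\sum_i\Theta(\mc{W}^i)=\mc{T}^\beta$.) With these two repairs, and the smoothing-metric bookkeeping you already flag, your argument coincides with the paper's proof.
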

We provide the detailed proof of the theorem in Appendix~\ref{app:proof_thm_dist_cost}. The zero-error cost is proportional to the channel max-free energy $F_\infty^\beta[\n]$, that can be expressed as a semidefinite program, see Appendix~\ref{app:sdp}.

The one-shot athermality distillation and formation of a resource channel $(\n,\mc{T}^\beta)$ are associated with how well the channel $\n$ can be discriminated from $\mc{T}^\beta$ in an asymmetric channel discrimination task~\cite{CMW16,DW19b,WBHK20,BKSD23}, where $\mc{N}$ is null hypothesis and $\mc{T}^\beta$ is alternate hypothesis. Broadly speaking, there are two different strategies one can apply when multiple uses of channels are allowed, namely, adaptive and nonadaptive~\cite{CMW16,BKSD23}. Channel discrimination under nonadaptive strategy can be thought of as a special case of discrimination task with adaptive strategy. Nonadaptive strategy allows for parallel uses of a channel with free preprocessing and postprocessing operations, while adaptive strategy allows for sequential uses of channel with successive uses of channel being interleaved with a free adaptive channel~\cite{CMW16,DW19b}.

It intuitively holds that the asymptotic resource distillation rate is lesser than or equal to the formation rates in a resource theory where golden units (most resourceful objects) are distilled from lesser resource objects and spent as costs to form lesser resource objects (cf.~\cite{HSDW24}). It follows from the relation between the min relative entropy and max relative entropy, $D^0_{\rm H}(\rho\|\sigma)\leq D_{\infty}(\rho\|\sigma)$, that the zero-error athermality distillation and formation satisfy $\mathrm{Dist}^{0} (\n,\mathcal{T}^\beta)\leq \mathrm{Cost}^{0} (\n,\mathcal{T}^\beta)$.

Based on the assistance of adaptive and nonadaptive strategies for the channel discrimination tasks (see~\cite{CMW16,FFRS20,BKSD23} for formal discussion), for an error $\varepsilon\in(0,1)$, we define the adaptive (${\rm ad}$) and nonadaptive ($\parallel$) asymptotic athermality distillation and formation rates of a resource channel $(\n,\mc{T}^\beta)$ as
\begin{align}
   \mathscr{C}_{\mathrm{distill}}^{{\varepsilon,\rm ad}}[\mathcal{N}] 
   &:= \limsup_{n \to \infty} \frac{1}{n} \, \mathrm{Distill}^{\varepsilon}_{\rm ad} 
   \left( \mathcal{N}^{(n)}, \left( \mathcal{T}^\beta \right)^{(n)} \right),\\
    \mathscr{C}^{\varepsilon,{\rm ad}}_{\mathrm{cost}}[\mathcal{N}] 
   &:= \liminf_{n \to \infty} \frac{1}{n} \, \mathrm{Cost}^{\varepsilon}_{\rm ad} 
   \left( \mathcal{N}^{(n)}, \left( \mathcal{T}^\beta \right)^{(n)} \right),\\
    \mathscr{C}_{\mathrm{distill}}^{\varepsilon,\parallel}[\mathcal{N}] 
   &:= \limsup_{n \to \infty} \frac{1}{n} \, \mathrm{Distill}^{\varepsilon}_{\parallel} 
   \left( \mathcal{N}^{\otimes n}, \left( \mathcal{T}^\beta \right)^{\otimes n} \right),\\
    \mathscr{C}^{\varepsilon,\parallel}_{\mathrm{cost}}[\mathcal{N}] 
   &:= \liminf_{n \to \infty} \frac{1}{n} \, \mathrm{Cost}^{\varepsilon}_{\parallel} 
   \left( \mathcal{N}^{\otimes n}, \left( \mathcal{T}^\beta \right)^{\otimes n} \right).
\end{align}
It directly follows from the definitions that $\mathscr{C}_{\mathrm{distill}}^{\varepsilon,\parallel}[\mathcal{N}] \leq \mathscr{C}_{\mathrm{distill}}^{\varepsilon,{\rm ad}}[\mathcal{N}]$ and $ \mathscr{C}^{\varepsilon,\parallel}_{\mathrm{cost}}[\mathcal{N}] \leq\mathscr{C}^{\varepsilon,{\rm ad}}_{\mathrm{cost}}[\mathcal{N}]$, and $\mathscr{C}_{\mathrm{distill}}^{\varepsilon,\parallel}[\mathcal{N}]\leq \mathscr{C}^{\varepsilon,\parallel}_{\mathrm{cost}}[\mathcal{N}] \leq\mathscr{C}^{\varepsilon,{\rm ad}}_{\mathrm{cost}}[\mathcal{N}]$. We prove that the asymptotic distillation and formation rates are equal under nonadaptive strategies. 

\begin{theorem}[Asymptotic reversibility]\label{thm:rev}
The resource theory of athermality of quantum channels with GPSCs as free operations is asymptotically reversible under parallel uses of channels, i.e., for any $\varepsilon\in(0,1)$ and an arbitrary (square) quantum resource channel $(\n,\mc{T}^\beta)$,
  \begin{align}
       \mathscr{C}_{\mathrm{distill}}^{\varepsilon,\parallel}[\mathcal{N}]=\mathscr{C}_{\mathrm{distill}}^{\varepsilon,{\rm ad}}[\mathcal{N}]=\mathscr{C}_{\mathrm{cost}}^{\varepsilon,\parallel}[\mathcal{N}]&=\frac{1}{2}D[\n\Vert\mc{T}^\beta]\nonumber\\
       &=\frac{\beta}{2} F^\beta[\n].
  \end{align} 
\end{theorem}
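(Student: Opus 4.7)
The plan is to translate the asymptotic rates into smoothed channel divergences via Theorem~\ref{thm:dist_cost}, then prove an asymptotic equipartition property (AEP) for the hypothesis-testing channel divergence against the replacer channel $\mc{T}^\beta$, and finally close the formation side using the single-shot yield--cost trade-off of Proposition~\ref{prop:trade_off}. By Theorem~\ref{thm:dist_cost}, $\tfrac{1}{n}\mathrm{Dist}^\varepsilon(\n^{\otimes n},(\mc{T}^\beta)^{\otimes n}) = \tfrac{1}{2n}D_H^\varepsilon[\n^{\otimes n}\Vert(\mc{T}^\beta)^{\otimes n}]$, and analogously for $\mathrm{Cost}^\varepsilon$ with $D_\infty^\varepsilon$. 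Since $\mc{T}^\beta$ is a replacer (preparation) map, $(\mc{T}^\beta)^{\otimes n}$ factorizes as a replacer with tensor-product Gibbs output (noninteracting outputs), so the sandwiched R\'enyi channel divergence additivity $D_\alpha[\n^{\otimes n}\Vert(\mc{T}^\beta)^{\otimes n}] = n\, D_\alpha[\n\Vert\mc{T}^\beta]$ from \cite{FGR25} (already recorded just above) applies.

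For the parallel distillation rate, the lower bound is obtained by plugging an $\eta$-optimizer $\phi_{RA'}$ of $D[\n\Vert\mc{T}^\beta]$ as a tensor-power input, using $D_H^\varepsilon[\n^{\otimes n}\Vert(\mc{T}^\beta)^{\otimes n}] \geq D_H^\varepsilon((\n(\phi))^{\otimes n}\Vert(\mc{T}^\beta(\phi))^{\otimes n})$, invoking the state quantum Stein's lemma $\tfrac{1}{n}D_H^\varepsilon(\rho^{\otimes n}\Vert\sigma^{\otimes n})\to D(\rho\Vert\sigma)$, and sending $\eta\to 0$. The strong-converse upper bound uses the standard state inequality $D_H^\varepsilon(\rho\Vert\sigma)\leq D_\alpha(\rho\Vert\sigma) + \tfrac{1}{\alpha-1}\ln\tfrac{1}{1-\varepsilon}$ for $\alpha>1$, lifted to channels by supremizing over pure inputs, combined with the additivity above; taking $n\to\infty$ and then $\alpha\to 1^+$ yields $\limsup_n\tfrac{1}{n}D_H^\varepsilon[\n^{\otimes n}\Vert(\mc{T}^\beta)^{\otimes n}]\leq D[\n\Vert\mc{T}^\beta]$. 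Together with Theorem~\ref{thm:dist_cost}, this establishes $\mathscr{C}_{\mathrm{distill}}^{\varepsilon,\parallel}[\n] = \tfrac{1}{2}D[\n\Vert\mc{T}^\beta]$ for every $\varepsilon\in(0,1)$.

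To upgrade to adaptive distillation, I invoke amortization-collapse of the channel R\'enyi divergence against a preparation-map alternate hypothesis \cite{CMW16,FFRS20,BKSD23}: because $(\mc{T}^\beta)^{(n)}$ always emits a product Gibbs state independently of the interleaving adaptive maps, every $n$-round adaptive strategy is still bounded by $nD_\alpha[\n\Vert\mc{T}^\beta] + \tfrac{1}{\alpha-1}\ln\tfrac{1}{1-\varepsilon}$, giving $\mathscr{C}_{\mathrm{distill}}^{\varepsilon,\mathrm{ad}}[\n]\leq \tfrac{1}{2}D[\n\Vert\mc{T}^\beta]$; the reverse inequality is trivial since parallel is a special case of adaptive. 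For the formation rate, I apply Proposition~\ref{prop:trade_off} to $\n^{\otimes n}$ and $(\mc{T}^\beta)^{\otimes n}$ with parameter $\delta\in(0,1)$, giving $\mathrm{Cost}^{\sqrt{\delta}}(\n^{\otimes n},(\mc{T}^\beta)^{\otimes n})\leq \mathrm{Dist}^{1-\delta}(\n^{\otimes n},(\mc{T}^\beta)^{\otimes n}) + \tfrac{1}{2}\ln\tfrac{1}{1-\delta}$, divide by $n$, let $n\to\infty$, and read off $\mathscr{C}_{\mathrm{cost}}^{\sqrt{\delta},\parallel}[\n]\leq \mathscr{C}_{\mathrm{distill}}^{1-\delta,\parallel}[\n]=\tfrac{1}{2}D[\n\Vert\mc{T}^\beta]$; since $\delta\in(0,1)$ is arbitrary, this holds for every $\varepsilon\in(0,1)$. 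The complementary lower bound is the elementary $\mathscr{C}_{\mathrm{cost}}^{\varepsilon,\parallel}[\n]\geq \mathscr{C}_{\mathrm{distill}}^{\varepsilon,\parallel}[\n]$, and the final identification $\tfrac{1}{2}D[\n\Vert\mc{T}^\beta]=\tfrac{\beta}{2}F^\beta[\n]$ is simply the definition of $F^\beta[\n]$. I expect the amortization-collapse step for the adaptive rate to be the main technical obstacle, as it is where the replacer (preparation-map) nature of $\mc{T}^\beta$ is essential beyond mere tensor factorization --- without it, adaptive strategies could a~priori yield a strictly larger Stein exponent than parallel ones.
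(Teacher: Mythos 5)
Your argument is correct, but it takes a genuinely more self-contained route than the paper, whose proof of Theorem~\ref{thm:rev} consists of citing two external results wholesale: \cite[Theorem 1]{CMW16} for $\mathscr{C}_{\mathrm{distill}}^{\varepsilon,\parallel}=\mathscr{C}_{\mathrm{distill}}^{\varepsilon,\rm ad}=\frac{1}{2}D[\n\Vert\mc{T}^\beta]$ and \cite[Theorem 4.1]{FGR25} for $\mathscr{C}_{\mathrm{cost}}^{\varepsilon,\parallel}=\frac{1}{2}D[\n\Vert\mc{T}^\beta]$. You instead rebuild the parallel distillation rate from first principles (tensor-power inputs plus the state Stein's lemma for achievability; the $D_H^\varepsilon\le D_\alpha+O(1)$ bound plus R\'enyi-channel additivity for the strong converse), and --- this is the most interesting divergence --- you obtain the formation rate not from the smoothed max-divergence AEP of \cite{FGR25} but from the paper's own yield--cost trade-off (Proposition~\ref{prop:trade_off}) applied to $\n^{\otimes n}$, combined with the distillation result at error $1-\delta$ and the elementary ordering $\mathscr{C}_{\mathrm{cost}}^{\varepsilon,\parallel}\ge\mathscr{C}_{\mathrm{distill}}^{\varepsilon,\parallel}$ that the paper asserts just before the theorem. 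This exhibits reversibility as a structural corollary of the single-shot trade-off rather than of an independent converse-side AEP, which is a nice insight the paper's proof hides. Your identification of the amortization-collapse inequality as the crux of the adaptive converse is also apt: the needed inequality $D_\alpha(\n(\rho_{RA'})\Vert\sigma_R\otimes\gamma^\beta_A)\le D_\alpha[\n\Vert\mc{T}^\beta]+D_\alpha(\rho_{RA'}\Vert\sigma_{RA'})$ is exactly the \cite{WBHK20} lemma the paper already invokes in its additivity proof (A5), so iterating it over the $n$ adaptive rounds closes that case without further imports. Two small points to tighten: the step $\lim_{\alpha\to 1^+}D_\alpha[\n\Vert\mc{T}^\beta]=D[\n\Vert\mc{T}^\beta]$ involves exchanging a limit with the supremum over input states and deserves a word (it holds in finite dimensions, consistent with the paper's own use of the analogous limit for $\mc{R}^\pi$), and the error-parameter bookkeeping in the composition argument behind $\mathscr{C}_{\mathrm{cost}}^{\varepsilon,\parallel}\ge\mathscr{C}_{\mathrm{distill}}^{\varepsilon,\parallel}$ (errors degrade from $\varepsilon$ to $2\varepsilon$ under composition of the formation and distillation superchannels) should be made explicit if you do not want to lean on the paper's unproved assertion.
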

\begin{proof}
    For any $\varepsilon\in(0,1)$, we have from~\cite[Theorem 1]{CMW16},
    \begin{equation}
        \mathscr{C}_{\mathrm{distill}}^{\varepsilon,\parallel}[\mathcal{N}]=\mathscr{C}_{\mathrm{distill}}^{\varepsilon,{\rm ad}}[\mathcal{N}]=\frac{1}{2} D[\n\Vert\mc{T}^\beta],
    \end{equation}
  and from~\cite[Theorem 4.1]{FGR25},
    \begin{equation}
        \mathscr{C}_{\mathrm{cost}}^{\varepsilon,\parallel}[\mathcal{N}]=\frac{1}{2}D[\n\Vert\mc{T}^\beta].
    \end{equation}
    We conclude the proof by recalling that $F^\beta[\n]=\beta^{-1}D[\n\|\mc{T}^\beta]$.
\end{proof}
It follows from \cite[Theorem 2]{CMW16} that for any athermality distillation scheme which seeks to distill at a rate
strictly higher than $\frac{1}{2}D[\n\|\mc{T}^\beta]$, the probability of successful distillation decays to zero exponentially fast with the number of channel uses of 
$\n$. That is, $\frac{1}{2}D[\n\|\mc{T}^\beta]$ also serves the strong converse bound on the asymptotic athermality distillation rate of the channel $\n$, irrespective of whether the strategy used is adaptive or nonadaptive.

Under parallel uses of channels, both the athermality distillation and formation rates of a quantum resource channel $(\n,\mc{T}^\beta)$ converge to the same value, which is half the relative entropy between the channel and the absolutely thermal channel $\mc{T}^\beta$. Therefore, the resource theory of athermality under GPSCs is reversible. The asymptotic athermality distillation and formation rates of a tensor-product quantum resource channel $(\n\otimes\mc{M},\mc{T}^\beta)$, where the channel outputs of quantum channels $\mc{N}_{A'\to A}$ and $\mc{M}_{B'\to B}$ are noninteracting and $\mc{T}^\beta_{A'B'\to AB}=\mc{T}^\beta_{A'\to A}\otimes\mc{T}^\beta_{B'\to B}$, are equal to $\frac{\beta}{2}\left(F^\beta[\n]+F^\beta[\mc{M}]\right)$. This is a straightforward implication of the additivity of the free energy under tensor-product channels with noninteracting channel outputs.

\subsection{Remarks on the resource theory of athermality}
We note that in \cite{NG15}, the restricted form of the free energy associated to athermal channel, where reference to the channel are not considered, is shown to be related to the work distillation. The allowed free operations for their distillation tasks are thermal operations, a strict subset of Gibbs preserving channels. They showed that for the asymptotically many uses of each channel, the distillable work is an additive function of the considered channels. In \cite{NG15,FBB21}, the identity channel has zero distillable work which is in stark contrast with our framework where the identity channel is one of the most resourceful athermal channel. We take an alternate approach where access to a reference to the channel is allowed along with adaptive strategies for the uses of the channel in the distillation tasks. The access to the channel reference and adaptive strategy for distillation tasks provide a general approach in resource-theoretic framework that involves channel discrimination. The rates of distillation will never be less than that of the restricted case with parallel uses of channels and no allowance of the reference to the channel. The sandwiched R\'enyi free energy is additive for tensor-product channels with noninteracting outputs for $\alpha\in[\frac{1}{2},\infty)$, including the free energy ($\alpha=1$). This additivity property of the free energy makes the asymptotic rates of distillation additive in our resource theory of athermality. 

Our dynamical resource theory of athermality generalizes the static thermodynamic resource theory, i.e., resource theory of athermality of states, under Gibbs preserving channels as free operations. The one-shot and asymptotic distillation and formation (costs) of a quantum state under Gibbs preserving channels~\cite{Gou24} can be obtained from our framework by considering preparation (or replacer) channel which always outputs the given state. We have to be cognizant (careful) that a golden unit in static thermodynamic resource theory is $(\op{m},\pi)$~\cite{Gou24}, where $\op{m}_A$ is a pure energy eigenstate, and all energy eigenstates $\op{i}$ are equally probable when $\widehat{H}_A=0$, $|A|=m$. In regard with the dynamical resource theory, a golden unit is $(\id_m,\mc{R}^\pi)$, we allow for reference in the case of channel which effectively means that we are dealing with quantum systems of size $|A'A|=|A'||A|=m^2$. The Choi state of $\mc{R}^\pi$ is the maximally mixed state $\pi_{A'}\otimes\pi_A=\pi_{A'A}$ and the Choi state of $\id$ is a maximally entangled state $\Phi_{A'A}$, both are density operators on $m^2$-dimensional Hilbert space. We may say that the dynamical resource $(\id_m,\mc{R}^\pi)$ is equivalent to static resource $(\Phi_m,\pi_{A'}\otimes\pi_A)$, where $\Phi_m$ is a maximally entangled state of Schmidt rank $m$. To calculate the asymptotic distillation and formation rates of a given state, they are respectively going to be twice of the asymptotic distillation and formation rates of the replacer channel that outputs the given state.

The dynamical thermodynamic resource theory exhibit some features beyond the static thermodynamic resource theory. When Hamiltonian $\widehat{H}_A$ is non-degenarate, i.e., no two distinct energy eigenstates have same eigenvalues, then a state has the maximum free energy if and only if it is the most excited energy eigenstate (eigenstate with the highest energy eigenvalue). The state with maximum free energy for a quantum system with non-degenerate Hamiltonian is unique. This is characteristically different from the dynamical case where all the unitary channels $\mc{U}_{A'\to A}$ have maximum free energy. We do have the resource theory of entanglement of states where there are many golden units called maximally entangled states~\cite{HHHH09}, which are all equivalent to each other up to the action of local unitary channels.

The conversion of a resource channel $(\mc{N}_{A'\to A},\mc{T}^\beta_{A'\to A})$ to $(\mc{T}^\beta_{A'\to A},\mc{T}^\beta_{A'\to A})$ is an informational thermodynamic task of erasing the channel $\n$ via thermalization~\cite{Ben03,MDP22,BGC+25,JGW25}, whereas the conversion of the absolutely thermal channel $(\mc{T}^{\beta},\mc{T}^\beta)$ to $(\n,\mc{T}^{\beta})$ is an informational thermodynamic task of preparing or forming the channel.

\section{Informational aspects of free energy}\label{sec:aspects}
We discuss connection of the dynamical free energy with the channel capacity and the free energy of the channel output. We observe connections of the resource theory of athermality with the information processing tasks of private randomness and purity distillation.

The quantum mutual information $I(A;B)_{\rho}$ of a quantum state $\rho_{AB}$ is defined as
\begin{align}
    I(A;B)_{\rho}& =D(\rho_{AB}\Vert\rho_A\otimes\rho_B)\nonumber\\
    &=S(A)_{\rho}+S(B)_{\rho}-S(AB)_{\rho}.
\end{align}
It is the minimal rate of noise needed to erase the total correlations in the state $\rho_{AB}$ so that the result is a product state $\rho_A\otimes\rho_B$~\cite{GPW05}.

The quantum mutual information $I(\n,\rho)$ of a quantum channel $\n_{A'\to A}$ and an input state $\rho_{RA'}$ to $\id_R\otimes\n$ is defined as the quantum mutual information $I(R;A)_{\n(\rho_{RA'})}$, $I(\n,\rho):=I(R,A)_{\n(\rho_{RA'})}=D(\n(\rho_{RA'})\Vert\rho_R\otimes\n(\rho_{A'}))$. The quantum mutual information of a quantum channel $\n_{A'\to A}$ is defined as
\begin{equation}
I[\n]:=\sup_{\psi\in\St(RA')}I(R;A)_{\n(\psi_{RA'})},    
\end{equation}
where it suffices to optimize over pure state $\psi_{RA'}$. The quantum mutual information of a channel indicates how well the channel can preserve the total correlation (quantum+classical) between the reference and the input. It finds operational meaning as the entanglement-assisted classical capacity of the channel~\cite{BSST02} (see also~\cite{DW19,DW19b} for connection with quantum rebound capacity) and the asymptotic quantum simulation cost under no-signalling assisted codes~\cite{FWT20}.

\begin{proposition}\label{prop:mi}
    The free energy $F^\beta[\n]$ of a quantum channel $\n_{A'\to A}$ satisfies
    \begin{equation}
        F^\beta[\n] = \sup_{\psi\in\St(RA')}\left[\beta^{-1} I(R;A)_{\mc{N}(\psi)}+F^\beta(\n(\psi_{A'}))\right],
    \end{equation}
    where it suffices to optimize over pure states $\psi_{RA'}$, $|R|=|A'|$.
\end{proposition}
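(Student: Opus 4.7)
The plan is to prove the identity at the level of a fixed pure input $\psi_{RA'}$ and then take the supremum. The key algebraic fact is that the relative entropy with respect to a product reference state decomposes cleanly because $\ln(\sigma_R\otimes\omega_A)=\ln\sigma_R\otimes\mathbbm{1}_A+\mathbbm{1}_R\otimes\ln\omega_A$, which was already invoked in the proof of Proposition~\ref{prop:free-energy-difference}. I will use it in a slightly different way to introduce the marginal $\mc{N}(\psi_{A'})$ as an intermediate reference state.

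First, I would start from the channel-state expression for $F^\beta[\n]$ given in Eq.~\eqref{eq:f-state}, writing
\begin{equation}
F^\beta[\n]=\frac{1}{\beta}\sup_{\psi\in\St(RA')}D(\mc{N}(\psi_{RA'})\Vert\psi_R\otimes\gamma^\beta_A),
\end{equation}
where the supremum may be restricted to pure states with $|R|=|A'|$. For any such pure $\psi_{RA'}$, the reduction of $\mc{N}(\psi_{RA'})$ on $R$ is $\psi_R$ (since $\mc{N}$ acts only on $A'$), and its reduction on $A$ is $\mc{N}(\psi_{A'})$. Using the product-logarithm identity on the reference operator $\psi_R\otimes\gamma^\beta_A$ and inserting and subtracting the term $\tr[\mc{N}(\psi_{A'})\ln\mc{N}(\psi_{A'})]$ gives, after a short calculation,
\begin{align}
D(\mc{N}(\psi_{RA'})\Vert\psi_R\otimes\gamma^\beta_A)
&= D(\mc{N}(\psi_{RA'})\Vert\psi_R\otimes\mc{N}(\psi_{A'})) \nonumber\\
&\quad+ D(\mc{N}(\psi_{A'})\Vert\gamma^\beta_A).
\end{align}
The first term on the right-hand side is exactly $I(R;A)_{\mc{N}(\psi)}$ by definition, and the second term is $\beta F^\beta(\mc{N}(\psi_{A'}))$.

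Dividing through by $\beta$ and taking the supremum over pure $\psi_{RA'}$ with $|R|=|A'|$ then yields the claimed identity. The support condition $\supp(\mc{N}(\psi_{RA'}))\subseteq\supp(\psi_R\otimes\gamma^\beta_A)$ needed for the Umegaki relative entropy to be finite is automatic: $\gamma^\beta_A$ is full rank on $A$, and the $R$-marginal of $\mc{N}(\psi_{RA'})$ coincides with $\psi_R$, so the support in $R$ matches exactly. There is no genuine obstacle in this proof; the only thing to be careful about is that the decomposition must be done before taking the supremum, since the same optimizer $\psi$ appears in both summands on the right-hand side, and this is precisely why the $\sup$ does not split into two independent suprema.
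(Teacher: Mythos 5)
Your proposal is correct and follows essentially the same route as the paper: starting from the channel-state expression in Eq.~\eqref{eq:f-state}, splitting $D(\mc{N}(\psi_{RA'})\Vert\psi_R\otimes\gamma^\beta_A)$ into $I(R;A)_{\mc{N}(\psi)}+D(\mc{N}(\psi_{A'})\Vert\gamma^\beta_A)$ via the product-logarithm identity, and then taking the supremum. Your added remarks on the support condition and on keeping the decomposition inside the single supremum are accurate but not needed beyond what the paper does.
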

\begin{proof}
    The proof follows from the following observations. $F[\n]=\beta^{-1}\sup_{\psi\in\St(RA')}D(\n(\psi_{RA'})\Vert\psi_R\otimes\gamma^\beta_A)$, where it suffices to optimize over pure states $\psi_{RA'}$, and
\begin{align}
    &   \beta^{-1}\sup_{\psi\in\St}D(\n(\psi_{RA'})\Vert\psi_R\otimes\gamma^\beta_A)\nonumber\\
    = &\beta^{-1}\sup_{\psi\in\St(RA')}\left[D(\psi^{\n}_{RA}\Vert\psi_R\otimes\psi^{\n}_A)+D(\psi^{\n}_A\Vert\gamma^\beta_A) \right]\nonumber\\
    =& \sup_{\psi\in\St(RA')}\left[\beta^{-1}I(R;A)_{\n(\psi_{RA'})}+ F(\n(\psi_{A'})) \right],
\end{align}
where $\psi^{\n}_{RA}:=\n(\psi_{RA'})$.
\end{proof}

The corollary given below follows directly from the above proposition.
\begin{corollary}\label{cor:lower-bound}
    The free energy ${F}^{\beta}[\n]$ of a quantum channel $\n_{A'\to A}$ is lower bounded as
    \begin{equation}
        F^{\beta}[\n]\geq \beta^{-1}I(R;A)_{\Phi^{\n}}+F(\n(\pi_{A'})),
    \end{equation}
    and upper bounded as
    \begin{equation}
        F^{\beta}[\n]\leq \beta^{-1}I[\n]+\sup_{\psi\in\St(A')}F^{\beta}(\mc{N}(\psi)).
    \end{equation}
    The free energy of a unitary channel $\mc{U}_{A'\to A}$ is lower bounded as
    \begin{equation}
        F^\beta[\mc{U}]\geq \beta^{-1}\left(\ln|A|+\ln Z^{\beta}_A\right)+\frac{1}{|A|}\tr[\widehat{H}_A],
    \end{equation}
    where $Z^{\beta}_A:= \tr[\exp(-\beta\widehat{H}_A)]$.
\end{corollary}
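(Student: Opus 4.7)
The plan is to derive all three inequalities directly from Proposition~\ref{prop:mi} by appropriate specialization or by a subadditivity-of-sup estimate. For the first (lower) inequality, I would evaluate the sup-formula in Proposition~\ref{prop:mi} at the single test state $\psi_{RA'}=\Phi_{RA'}$ (with $|R|=|A'|$), for which $\psi_{A'}=\pi_{A'}$ and $\id_R\otimes\mc{N}(\psi_{RA'})=\Phi^{\mc{N}}_{RA}$ is the Choi state. Since the supremum dominates any particular choice, the two bracketed terms immediately yield $\beta^{-1}I(R;A)_{\Phi^{\mc{N}}}+F^\beta(\mc{N}(\pi_{A'}))$ as a lower bound on $F^\beta[\mc{N}]$.

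For the second (upper) inequality, the plan is to apply the elementary estimate $\sup(f+g)\leq\sup f+\sup g$ to the same formula. The supremum of the first summand over pure $\psi_{RA'}$ is exactly $\beta^{-1}I[\mc{N}]$ by the definition of the channel mutual information given just above the proposition. For the second summand, note that $F^\beta(\mc{N}(\psi_{A'}))$ depends only on the marginal state on $A'$, and every state in $\St(A')$ is the marginal of some pure state on $RA'$, so the supremum over pure $\psi_{RA'}$ reduces to $\sup_{\psi\in\St(A')}F^\beta(\mc{N}(\psi))$, which gives the claimed upper bound.

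For the third inequality, the plan is to specialize the first inequality to $\mc{N}=\mc{U}$ and compute the two contributions explicitly. The Choi state $\Phi^{\mc{U}}_{RA}=(\id_R\otimes U)\Phi_{RA'}(\id_R\otimes U^\dagger)$ is pure and maximally entangled, so $S(R)=S(A)=\ln|A|$ and $S(RA)=0$, yielding $I(R;A)_{\Phi^{\mc{U}}}=2\ln|A|$. Moreover $\mc{U}(\pi_{A'})=\pi_A$, and using $\ln\gamma^\beta_A=-\beta\widehat{H}_A-(\ln Z^\beta_A)\mathbbm{1}_A$ one obtains $D(\pi_A\Vert\gamma^\beta_A)=-\ln|A|+\ln Z^\beta_A+\frac{\beta}{|A|}\tr[\widehat{H}_A]$, and hence $F^\beta(\pi_A)=\beta^{-1}(-\ln|A|+\ln Z^\beta_A)+\frac{1}{|A|}\tr[\widehat{H}_A]$. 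Adding the two contributions produces precisely the stated bound. There is no substantive obstacle; the only care required is bookkeeping of signs so that the $2\ln|A|$ from the mutual information combines with the $-\ln|A|$ inside $F^\beta(\pi_A)$ to leave the single $\ln|A|$ appearing in the claim.
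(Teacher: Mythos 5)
Your proposal is correct and matches the paper's (implicit) argument: the paper simply states that the corollary "follows directly" from Proposition~\ref{prop:mi}, and your three steps --- evaluating the supremum at $\Phi_{RA'}$, applying $\sup(f+g)\leq \sup f+\sup g$, and the explicit computation of $I(R;A)_{\Phi^{\mc{U}}}=2\ln|A|$ together with $F^\beta(\pi_A)=\beta^{-1}(-\ln|A|+\ln Z^\beta_A)+\frac{1}{|A|}\tr[\widehat{H}_A]$ --- are exactly the intended specializations, with the bookkeeping done correctly.
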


\begin{proposition}
Given a unitary channel $\mc{U}_{A'\to A}$, the minimum generalized free energy of the channel is achieved for the output system Hamiltonian $\widehat{H}_A=c\mathbbm{1}_A$, In particular, for $\alpha\in[1,\infty)$,
    \begin{align}
\min_{\widehat{H}_A}F^\beta_\alpha[\mc{U}]=2\beta^{-1}\ln|A|,~\forall~\alpha\in [1,\infty).
    \end{align}
\end{proposition}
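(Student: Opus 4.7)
The plan is to show that, for every unitary $\mc{U}_{A'\to A}$ and every $\alpha\in[1,\infty)$, the sandwiched R\'enyi free energy collapses to the $\alpha$-independent closed form $F^\beta_\alpha[\mc{U}] = \beta^{-1}\ln\tr[(\gamma^\beta_A)^{-1}]$, after which a short convexity argument identifies the minimizing Hamiltonian. The $\alpha=\infty$ endpoint is already handed to us by Lemma \ref{prop:unitary_free_energy}, and Lemma \ref{lem:free-ordering} sandwiches $F^\beta[\mc{U}]\le F^\beta_\alpha[\mc{U}]\le F^\beta_\infty[\mc{U}]$, so the only technical step is to show that the $\alpha=1$ value already equals the $\alpha=\infty$ value.

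For that $\alpha=1$ computation, I would expand the Umegaki divergence on the pure output $\mc{U}(\psi_{RA'})$. Writing $\tilde\gamma := U^\dag\gamma^\beta_A U$ and using $\ln(\psi_R\otimes\gamma^\beta_A)=\ln\psi_R\otimes\mathbbm{1}+\mathbbm{1}\otimes\ln\gamma^\beta_A$ together with unitary invariance of trace gives
\[
D(\mc{U}(\psi_{RA'})\|\psi_R\otimes\gamma^\beta_A) = S(\psi_{A'}) - \tr[\psi_{A'}\ln\tilde\gamma],
\]
since $S(\psi_R)=S(\psi_{A'})$ for pure $\psi_{RA'}$. Because every $\sigma\in\St(A')$ is the $A'$-marginal of some pure $\psi_{RA'}$, the supremum defining $F^\beta[\mc{U}]$ equals $\beta^{-1}\sup_{\sigma\in\St(A')}\{S(\sigma)-\tr[\sigma\ln\tilde\gamma]\}$. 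A standard Lagrange-multiplier argument yields the stationary point $\sigma^\ast=\tilde\gamma^{-1}/\tr[\tilde\gamma^{-1}]$ and the optimum value $\ln\tr[\tilde\gamma^{-1}]$; unitary invariance of the trace gives $\tr[\tilde\gamma^{-1}]=\tr[(\gamma^\beta_A)^{-1}]$, closing the sandwich.

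To finish, I would write the spectrum of $\gamma^\beta_A$ as a probability vector $\{g_i\}_{i=1}^{|A|}$ with $g_i>0$ and $\sum_i g_i=1$. Convexity of $x\mapsto 1/x$ (equivalently, the arithmetic--harmonic mean inequality) gives $\tr[(\gamma^\beta_A)^{-1}]=\sum_i g_i^{-1}\ge |A|^2$, with equality iff $g_i=1/|A|$ for every $i$, iff $\gamma^\beta_A=\pi_A$, iff $\widehat{H}_A=c\mathbbm{1}_A$ for some $c\in\mathbbm{R}$. Hence $\min_{\widehat{H}_A}F^\beta_\alpha[\mc{U}] = \beta^{-1}\ln|A|^2 = 2\beta^{-1}\ln|A|$. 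The main obstacle is the Lagrange calculation in the middle step: once it forces the R\'enyi ordering to collapse on unitary channels, the Hamiltonian-level minimization reduces to a one-line spectral inequality.
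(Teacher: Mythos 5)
Your proof is correct, and it takes a genuinely different route from the paper's. The paper never computes $F^\beta_\alpha[\mc{U}]$ exactly for $\alpha<\infty$: it lower-bounds $F^\beta[\mc{U}]$ by $f(\widehat{H}_A)=\beta^{-1}(\ln|A|+\ln Z^\beta_A)+|A|^{-1}\tr[\widehat{H}_A]$ via Corollary~\ref{cor:lower-bound} (the mutual-information decomposition evaluated on the maximally entangled input), shows by a directional-derivative argument that $\min_{\widehat{H}_A}f=2\beta^{-1}\ln|A|$, and then sandwiches $\min_{\widehat{H}_A}F^\beta_\alpha[\mc{U}]$ between this and $\min_{\widehat{H}_A}F^\beta_\infty[\mc{U}]=2\beta^{-1}\ln|A|$ from Lemma~\ref{lemma:free-max-energy}. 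You instead evaluate the $\alpha=1$ divergence in closed form on an arbitrary pure input, reduce the optimization to $\sup_{\sigma\in\St(A')}\{S(\sigma)-\tr[\sigma\ln\tilde\gamma]\}=\ln\tr[\tilde\gamma^{-1}]$ (which is cleanest seen as $-D(\sigma\Vert\tilde\gamma^{-1})$, maximized at $\sigma=\tilde\gamma^{-1}/\tr[\tilde\gamma^{-1}]$, rather than via Lagrange multipliers), and conclude $F^\beta[\mc{U}]=F^\beta_\infty[\mc{U}]=\beta^{-1}\ln\tr[(\gamma^\beta_A)^{-1}]$ for \emph{every} Hamiltonian. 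That collapse of the whole R\'enyi family on unitary channels is strictly more information than the paper extracts (the paper only matches the two minima over Hamiltonians), and it makes the final step a one-line Cauchy--Schwarz/AM--HM inequality $\sum_i g_i^{-1}\ge|A|^2$ with the equality condition identifying $\widehat{H}_A=c\mathbbm{1}_A$ — the same inequality the paper uses inside Lemma~\ref{lemma:free-max-energy}, but applied once rather than embedded in a stationarity analysis. The only point worth making explicit is the support condition justifying $\ln(\psi_R\otimes\gamma^\beta_A)=\ln\psi_R\otimes\mathbbm{1}+\mathbbm{1}\otimes\ln\gamma^\beta_A$ on $\supp(\psi_R)\otimes\mathcal{H}_A$, which holds since $\gamma^\beta_A$ is full rank and the output marginal on $R$ is $\psi_R$; this is the same identity the paper already invokes in Proposition~\ref{prop:free-energy-difference}.
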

\begin{proof}
From Corollary~\ref{cor:lower-bound}, we have $F^\beta[\mc{U}]\ge f(\widehat{H}_A)$ where 
\begin{align}
f(\widehat{H}_A)=\beta^{-1}(\ln|A|+\ln Z_A^\beta+\frac{\beta}{|A|}\tr(\widehat{H}_A)).
\end{align}
To find the global minima of the above operator function, we analyze the operator derivative $\nabla_{\widehat{H}_A}f(\widehat{H}_A)$. Using arguments similar to Lemma~\ref{lemma:free-max-energy} in Appendix~\ref{app:lemma-free-max-energy}, we can show that
\begin{align}
\min_{\widehat{H}_A}f(\widehat{H}_A)=f(c\mathbbm{1}_A)=2\beta^{-1}\ln|A|.
\end{align}
Combining the above result with that of Lemma~\ref{lemma:free-max-energy} and Lemma~\ref{lem:free-ordering}, we have the following for $\alpha\in (1,\infty)$, 
\begin{align}
    2\beta^{-1}\ln|A|\le&\min_{\widehat{H}_A} F^\beta[\mc{U}]\le \min_{\widehat{H}_A}F_\alpha^\beta[\mc{U}]\\\nonumber
    \le& \min_{\widehat{H}_A}F_\infty^\beta[\mc{U}]=  2\beta^{-1}\ln|A|\\
  \implies  \min_{\widehat{H}_A} F^\beta[\mc{U}]= &\min_{\widehat{H}_A}F_\alpha^\beta[\mc{U}]= \min_{\widehat{H}_A}F_\infty^\beta[\mc{U}]=  2\beta^{-1}\ln|A|.
\end{align}
\end{proof}
\textit{Private randomness distillation}.---  The private randomness capacity of a channel $\n_{A'\to A}$ is the maximum rate, in an asymptotic setting of i.i.d. uses of the channel, at which the receiver accessing output $A$ can extract private randomness, against an eavesdropper accessing extension $E$ of an isometric channel extension $\mc{V}^{\n}_{A'\to AE}$ of the channel $\n$, when the sender sends states through the channel~\cite{YHW19}. The private randomness capacity $P[\n]=D[\n\|\mc{R}^\pi]=\lim_{\beta\to 0^+}D[\n\|\mc{T}^\beta]$. The private randomness capacity of a square quantum channel is twice its asymptotic thermodynamic distillation or formation rate under the action of GPSCs in the limit inverse temperature tends to vanish.

\textit{Dynamical resource theory of purity}.---  By considering $\beta\to 0^+$ or the Hamiltonian of the channel output to be proportional to $\mathbbm{1}$ in our dynamical resource theory of athermality, we obtain the dynamical resource theory of purity~\cite{YHW19,LY20,RT21,YZGZ20} where unitary channels are golden units and the uniformly mixing channels are free objects. When $\widehat{H}_A\propto \mathbbm{1}_A$, then $\mc{T}^\beta_{A'\to A}=\mc{R}^{\pi}_{A'\to A}$ as $\gamma^\beta_A=\pi_A$ due to $\widehat{H}_A$ being fully degenerate Hamiltonian, i.e., all energy eigenstates have the same eigenvalue. The asymptotic purity distillation and formation rate of a square quantum channel $\mc{N}_{A'\to A}$ is equal to $\frac{1}{2}D[\n\|\mc{R}^\pi]=\frac{1}{2}\lim_{\beta\to 0^+}D[\n\|\mc{T}^\beta]$~\cite{YHW19,GW21,YZGZ20}. 

\textit{Entropy and thermal entropy}.--- The entropy $S[\n]$ of a quantum channel $\n_{A'\to A}$ and its private randomness capacity $P[\n]$ satisfy the following duality or trade-off relation~\cite{DGP24,DS25}:
\begin{equation}
    S[\n]+P[\n]=\ln|A|.
\end{equation}
We have a similar dual relationship between the thermal entropy $S^{\beta}[\n]:=\ln Z^{\beta}_A-D[\n\|\mathcal{T}^{\beta}]$~\cite{SPSD25} and the resource-theoretic free energy $F^{\beta}[\n]$ of a quantum channel $\n_{A'\to A}$:
\begin{equation}
    S^{\beta}[\n]+\beta F^{\beta}[\n]=\ln Z^{\beta}_A,
\end{equation}
where $Z^{\beta}_A=\tr[\exp(-\beta\widehat{H}_A)]$. For the sandwiched R\'enyi free energy $F^\beta_{\alpha}[\n]=\beta^{-1} D_{\alpha}[\mc{N}\Vert\mc{T}^{\beta}]$, where $|A|<\infty$ and $\alpha\in[\frac{1}{2},\infty)$, we have
\begin{equation}
    \lim_{\beta\to 0^+}D_{\alpha}[\mc{N}\Vert\mc{T}^{\beta}]=D_{\alpha}[\mc{N}\Vert\mc{R}^{\pi}]=\ln|A|-S_{\alpha}[\n].
\end{equation}
If we consider the Hamiltonian of the output $A$ of the channel $\n_{A'\to A}$ to be trivial, $\widehat{H}_{A}=0$ and $\widehat{H}_{RA}=\widehat{H}_R\otimes\mathbbm{1}_A$, then for any $\beta>0$ and $\forall~\alpha\in[\frac{1}{2},\infty)$,
\begin{equation}
S_{\alpha}[\n]+\beta F^{\beta}_{\alpha}[\n]=\ln|A|.
\end{equation}
The min-entropy $S_{\infty}[\n]$ of a quantum channel $\n_{A'\to A}$ is associated with the decoupling ability of the channel~\cite[Theorem 1 \& Proposition 1]{BGC+25} and the environment-assisted erasure cost $W[A|E]_{\n}$ of the output $A$ of the channel when access to the environment $E$ is provided to the erasure~\cite[Theorem 3 \& Proposition 2]{BGC+25}. These operational interpretations of the min-entropy of a quantum channel $\n_{A'\to A}$ directly provides operational interpretations for given information processing tasks as the max-free energy of the channel $F^\beta_{\infty}[\n]=\beta^{-1}(\ln |A|-S_{\infty}[\n])$ for $\widehat{H}_A=0$. 

\section{Thermodynamical free energy and work extraction}\label{sec:energy}
For a quantum state $\rho_A$, the entropy $S(\rho)$ is $S(\rho):=S(A)_{\rho}:=-D(\rho_A\Vert\mathbbm{1}_A)$ and the energy $E(\rho)$ is $E(\rho)=\langle \widehat{H}_A\rangle_{\rho}=\tr[\widehat{H}_A\rho_A]$. For a fixed inverse temperature $\beta$, the thermodynamic non-equilibrium free energy $F^\beta_{\rm T}(\rho)$ of a quantum state $\rho_A$ is related to its entropy and energy as
\begin{equation}
    F^\beta_{\rm T}(\rho)=E(\rho)-\beta^{-1}S(\rho)=\beta^{-1}D(\rho\Vert\widehat{\gamma}^{\beta}),
\end{equation}
where $\widehat{\gamma}^{\beta}_A=\exp(-\beta \widehat{H}_A)$ is the thermal operator (unnormalized thermal state).

The entropy $S[\n]$ of a quantum channel $\n_{A'\to A}$ is defined as~\cite{DJKR06,Yua19,GW21,SPSD25} 
\begin{align}
    S[\n]&:=-D[\n\Vert\mc{R}^\mathbbm{1}]\nonumber\\
    &= \inf_{\psi\in\St(RA')}[S(RA)_{\mc{N}(\psi)}-S(R)_{\psi}],
\end{align}
where it suffices to optimize only over pure states $\psi_{RA'}$.

\begin{definition}[Thermal free energy]
    The generalized thermal free energy $\mathbf{F}^{\beta}_{\rm T}[\n]$ of an arbitrary quantum channel $\n_{A'\to A}$ is defined as
    \begin{equation}
        \mathbf{F}^\beta_{\rm T}[\n]:=\beta^{-1}\mathbf{D}[\n\|\widehat{\mc{T}}^\beta],
    \end{equation}
    where $\widehat{\mc{T}}^\beta=\mc{R}^{\widehat{\gamma}^\beta}$, i.e., $\widehat{\mc{T}}^\beta(\rho_{A'})=\tr[\rho_{A'}]\widehat{\gamma}^\beta_A$ is a replacer map that is completely positive but not trace-preserving. The thermodynamic non-equilibrium free energy $F^\beta_{\rm T}[\n]$ of a quantum channel $\n_{A'\to A}$ is $F^\beta_{\rm T}[\n]=\beta^{-1}D[\n\|\widehat{\mc{T}}^\beta]$, obtained by taking generalized channel divergence to be relative entropy.
\end{definition}

The function $\mathbf{D}[\n\|\widehat{\mc{T}}^\beta]=-\mathbf{S}^{\beta}[\n]$, where $\mathbf{S}^\beta[\n]$ is called the generalized thermal entropy of the channel $\n$~\cite{SPSD25}. We see that $\mc{F}^{\beta}_{\rm T}[\n]=\beta^{-1}\mc{S}^{\beta}[\n]$. The thermodynamic sandwiched R\'enyi free energy $F^\beta_{{\rm T},\alpha}[\n]=\beta^{-1}D_{\alpha}[\n\|\widehat{\mc{T}}^\beta]$ for $\alpha\in[\frac{1}{2},1)\cup(1,\infty]$. The sandwiched R\'enyi free energy of the absolutely thermal channel reduces to the Helmholtz free energy, 
\begin{equation}
    F^\beta_{{\rm T},\alpha}[\mc{T}^\beta]=-\beta^{-1}\ln Z^{\beta}_A,
\end{equation}
for $\alpha\in[\frac{1}{2},\infty)$, where $\alpha=1$ is taken as limit $\alpha\to 1$. For a quantum channel $\n_{A'\to A}$ with $\widehat{H}_A\propto \mathbbm{1}_A$ and noninteracting reference to the channel output, the thermodynamic sandwiched R\'enyi free energy reduces to the sandiwched R\'enyi entropy (up to a multiplicative factor of the negative inverse temperature),
\begin{equation}
    F^\beta_{{\rm T},\alpha}[\n]=-\beta^{-1}S_{\alpha}[\n].
\end{equation}

The resource-theoretic sandwiched R\'enyi free energy and the thermodynamic sandwiched R\'enyi free energy of a quantum channel $\n_{A'\to A}$ are related as
\begin{equation}
    F^\beta_{\alpha}[\n]=F^\beta_{{\rm T},\alpha}[\n]+\beta^{-1}\ln Z^\beta_A,
\end{equation}
for $\alpha\in[\frac{1}{2},\infty]$. $F^\beta_{\rm T}[\n]$ satisfies all the axiomatic properties (A1-A6) as desired from the free energy function. We also have
\begin{align}
    F^\beta[\n]& =\beta^{-1}D[\n\|\mc{T}^\beta]\nonumber\\
    &= \sup_{\psi\in\St(RA')}[F^\beta(\mc{N}(\psi_{RA'}))-F^\beta(\psi_R)],
\end{align}
where it suffices to optimize only over pure states $\psi_{RA'}$. The thermodynamic sandwiched R\'enyi free divergence $F_{{\rm T},\alpha}[\n]$, for $\alpha\in[\frac{1}{2},1)\cup(1,\infty)$ satisfies the axiomatic properties $(A1-A5)$ and quasi-convexity.

The following lemma follows from the definitions of the resource-theoretic and thermal free energies.
\begin{lemma}
    The resource-theoretic sandwiched R\'enyi free energy $F_{\alpha}[\n]$, for $\alpha\in[\frac{1}{2},\infty)$, is equal to the difference of the thermodynamic sandwiched R\'enyi free energy of the channel and the absolutely thermal channel,
    \begin{equation}
     F^\beta_{{\rm T},\alpha}[\n]-F^\beta_{{\rm T},\alpha}[\mc{T}^\beta]=F^{\beta}_{\alpha}[\n].
    \end{equation}
\end{lemma}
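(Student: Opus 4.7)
The plan is to derive the identity directly from the two relations that have already been established in the preceding discussion, together with a short computation of $F^\beta_{{\rm T},\alpha}[\mc{T}^\beta]$. The key observation is that $\widehat{\gamma}^\beta_A=Z^\beta_A\,\gamma^\beta_A$, so the replacer map $\widehat{\mc{T}}^\beta$ is just a positive rescaling of $\mc{T}^\beta$ by the partition function, i.e., $\widehat{\mc{T}}^\beta = Z^\beta_A\,\mc{T}^\beta$. Since the sandwiched Rényi divergence obeys the scaling rule $D_\alpha(\rho\|c\sigma)=D_\alpha(\rho\|\sigma)-\ln c$ for any $c>0$ and $\alpha\in[\tfrac{1}{2},\infty)$, and since this lifts to channel divergences via the optimization over pure bipartite inputs, the same scaling gives
\begin{equation}
D_\alpha[\n\|\widehat{\mc{T}}^\beta]=D_\alpha[\n\|\mc{T}^\beta]-\ln Z^\beta_A.
\end{equation}
This already recovers the identity $F^\beta_{\alpha}[\n]=F^\beta_{{\rm T},\alpha}[\n]+\beta^{-1}\ln Z^\beta_A$ quoted in the text.

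The next step is to compute $F^\beta_{{\rm T},\alpha}[\mc{T}^\beta]=\beta^{-1}D_\alpha[\mc{T}^\beta\|\widehat{\mc{T}}^\beta]$. For any pure input $\psi_{RA'}$, both channels produce product states of the form $\psi_R\otimes\gamma^\beta_A$ and $\psi_R\otimes\widehat{\gamma}^\beta_A$, respectively. By tensorization/invariance of $D_\alpha$ on the common $\psi_R$ factor, and by the scaling rule applied to $\widehat{\gamma}^\beta_A=Z^\beta_A\gamma^\beta_A$, the supremum evaluates as
\begin{equation}
D_\alpha[\mc{T}^\beta\|\widehat{\mc{T}}^\beta]=D_\alpha(\gamma^\beta_A\|Z^\beta_A\gamma^\beta_A)=-\ln Z^\beta_A,
\end{equation}
so $F^\beta_{{\rm T},\alpha}[\mc{T}^\beta]=-\beta^{-1}\ln Z^\beta_A$. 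Subtracting this from the first relation yields
\begin{equation}
F^\beta_{{\rm T},\alpha}[\n]-F^\beta_{{\rm T},\alpha}[\mc{T}^\beta]=\bigl(F^\beta_\alpha[\n]-\beta^{-1}\ln Z^\beta_A\bigr)-\bigl(-\beta^{-1}\ln Z^\beta_A\bigr)=F^\beta_\alpha[\n],
\end{equation}
which is the claim.

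There is no real obstacle here: the content of the lemma is essentially bookkeeping about the relation between the normalized thermal state and the unnormalized Gibbs operator, and it follows from the elementary fact that $D_\alpha$ is invariant under rescaling the second argument up to an additive logarithmic term. The only mild care needed is to justify the scaling rule at the level of channel divergences for the full range $\alpha\in[\tfrac{1}{2},\infty)$, which is immediate because $D_\alpha[\cdot\|\cdot]$ is defined as a supremum of the state-level $D_\alpha$ over pure inputs and the scaling shift $-\ln Z^\beta_A$ is input-independent and hence passes through the supremum. The same argument works at $\alpha=\infty$ should one wish to extend the statement to the max-free energy.
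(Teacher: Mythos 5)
Your proof is correct and follows the same route the paper intends: the paper simply asserts that the lemma "follows from the definitions," relying on the previously stated identities $F^\beta_{\alpha}[\n]=F^\beta_{{\rm T},\alpha}[\n]+\beta^{-1}\ln Z^\beta_A$ and $F^\beta_{{\rm T},\alpha}[\mc{T}^\beta]=-\beta^{-1}\ln Z^{\beta}_A$, which are exactly the two facts you derive via the scaling rule $D_\alpha(\rho\|c\sigma)=D_\alpha(\rho\|\sigma)-\ln c$ applied to $\widehat{\gamma}^\beta_A=Z^\beta_A\gamma^\beta_A$. Your write-up just makes explicit the bookkeeping the paper leaves implicit.
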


Based on the ``channelized" entropy and free energy definitions, we define an energy function called the thermodynamic energy $E[\n]$ of a quantum channel $\n$.
\begin{definition}
    The thermodynamic energy $E[\n]$ of a quantum channel $\n$ is defined as 
    \begin{equation}
    E[\n]:=\sup_{\psi\in\St(RA')}[E(\mc{N}(\psi_{RA'}))-E(\psi_R)],
\end{equation}
where it suffices to optimize only over pure states $\psi_{RA'}$ and $|R|=|A'|$.
\end{definition}

\begin{proposition}
    The thermodynamic energy $E[\n]$ of a quantum channel $\n_{A'\to A}$, with the reference-output Hamiltonian as $\widehat{H}_{RA}=\mathbbm{1}_R\otimes\widehat{H}_A+\widehat{H}_R\otimes\mathbbm{1}_A+\widehat{H}^{\rm int}_{RA}$, is
    \begin{equation}    E[\mc{N}]=\sup_{\psi\in\St(RA')}\left[\langle\widehat{H}_A\rangle_{\n(\psi_{A'})}+\langle\widehat{H}^{\rm int}_{RA}\rangle_{\n(\psi_{RA'})}\right],
    \end{equation}
    where it suffices to optimize over pure states $\psi_{RA'}$.
\end{proposition}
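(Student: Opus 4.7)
The proof is essentially a direct computation starting from the definition of $E[\mc{N}]$ together with the postulated decomposition of $\widehat{H}_{RA}$. My plan is to expand $E(\mc{N}(\psi_{RA'}))$ by linearity, identify the term that exactly cancels $E(\psi_R)$, and then argue that restricting the optimization to pure states loses no generality.

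First I would write out $E(\mc{N}(\psi_{RA'})) = \tr[\widehat{H}_{RA}\,(\id_R\otimes\mc{N})(\psi_{RA'})]$ and split it into three traces according to $\widehat{H}_{RA}=\mathbbm{1}_R\otimes\widehat{H}_A+\widehat{H}_R\otimes\mathbbm{1}_A+\widehat{H}^{\rm int}_{RA}$. The first term reduces to $\tr[\widehat{H}_A\,\mc{N}(\psi_{A'})]=\langle\widehat{H}_A\rangle_{\mc{N}(\psi_{A'})}$ after tracing out $R$. The key observation is that for the second term, the trace-preserving property of $\mc{N}_{A'\to A}$ gives $\tr_A[(\id_R\otimes\mc{N})(\psi_{RA'})]=\psi_R$, so that $\tr[(\widehat{H}_R\otimes\mathbbm{1}_A)(\id_R\otimes\mc{N})(\psi_{RA'})]=\tr[\widehat{H}_R\psi_R]=E(\psi_R)$, which is exactly the subtracted term in the definition. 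The third term is $\langle\widehat{H}^{\rm int}_{RA}\rangle_{\mc{N}(\psi_{RA'})}$ by definition. Combining these yields $E(\mc{N}(\psi_{RA'}))-E(\psi_R)=\langle\widehat{H}_A\rangle_{\mc{N}(\psi_{A'})}+\langle\widehat{H}^{\rm int}_{RA}\rangle_{\mc{N}(\psi_{RA'})}$, so taking the supremum reproduces the stated formula.

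To justify the restriction to pure states, I would invoke linearity. The map
\begin{equation}
\psi_{RA'}\mapsto \langle\widehat{H}_A\rangle_{\mc{N}(\psi_{A'})}+\langle\widehat{H}^{\rm int}_{RA}\rangle_{\mc{N}(\psi_{RA'})}
\end{equation}
is affine-linear in $\psi_{RA'}$, and $\St(RA')$ is convex with extreme points precisely the pure states. Hence the supremum over $\St(RA')$ is attained on pure states, and choosing $|R|=|A'|$ is sufficient for purification. There is no serious obstacle in this argument; the only point requiring care is tracking which partial traces are taken in the middle step, since the decomposition of $\widehat{H}_{RA}$ mixes operators on $R$ and $A$, and it must be explicitly verified that the $\widehat{H}_R$ contribution produces $\psi_R$ (and not something depending on $\mc{N}$) thanks to trace preservation of $\mc{N}$.
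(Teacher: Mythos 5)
Your proposal is correct and follows essentially the same route as the paper: split $\widehat{H}_{RA}$ into its three parts, use trace preservation of $\mc{N}$ to identify $\tr_A[(\id_R\otimes\mc{N})(\psi_{RA'})]=\psi_R$ so that the $\widehat{H}_R$ contribution cancels $E(\psi_R)$, and collect the remaining terms. Your additional affine-linearity argument for restricting to pure states is a correct (and slightly more explicit) justification of a point the paper leaves implicit.
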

\begin{proof}
    The proof follows from the definition of $E[\n]$ and the fact that, for state $\n_{A'\to A}(\psi_{RA'})$ and Hamiltonian $\widehat{H}_{RA}$, we have
    \begin{align}
    E(\n&(\psi_{RA'}))-E(\psi_R)\nonumber\\
       = &\tr[\widehat{H}_{RA}\n(\psi_{RA'})]-\tr[\widehat{H}_R\psi_R]\nonumber\\
        =& \tr[(\widehat{H}_R\otimes\mathbbm{1}_A)\n(\psi_{RA'})]+\tr[(\mathbbm{1}_R\otimes\widehat{H}_A)\n(\psi_{RA'})]\nonumber\\
        &+\tr[\widehat{H}_{RA}^{\rm int}\n(\psi_{RA'})]-\tr[\widehat{H}_R\psi_R]\nonumber\\
        =& \tr[\widehat{H}_R\psi_R]+\tr[\widehat{H}_A\n(\psi_{A'})]+\tr[\widehat{H}^{\rm int}_{RA}\n(\psi_{RA'})]\nonumber\\
        &-\tr[\widehat{H}_R\psi_R]\nonumber\\
        =&\tr[\widehat{H}_A\n(\psi_{A'})]+\tr[\widehat{H}^{\rm int}_{RA}\n(\psi_{RA'})].
    \end{align}
\end{proof}

The following corollary directly follows from the above proposition.
\begin{corollary}\label{cor:energy}
    For an arbitrary quantum channel $\n_{A'\to A}$, when the reference-output have noninteracting Hamiltonian $\widehat{H}_{RA}=\mathbbm{1}_R\otimes\widehat{H}_A+\widehat{H}_R\otimes\mathbbm{1}_A$ ($\widehat{H}^{\rm int}_{RA}=0$), then
    \begin{equation}
        E[\n]=\sup_{\psi\in\St(A')}\langle\widehat{H}_A\rangle_{\n(\psi)},
    \end{equation}
    where it suffices to consider optimization over all pure states $\psi_{A'}$.
\end{corollary}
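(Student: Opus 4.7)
The plan is simply to specialize the preceding proposition to the case of vanishing interaction Hamiltonian. With $\widehat{H}^{\rm int}_{RA}=0$, the proposition yields
\begin{equation}
E[\mc{N}]=\sup_{\psi\in\St(RA')}\langle\widehat{H}_A\rangle_{\mc{N}(\psi_{A'})},
\end{equation}
so the only task is to reduce the optimization from bipartite states on $RA'$ to states on $A'$ alone, and then further to pure states on $A'$.

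For the first reduction, I would observe that the quantity $\langle\widehat{H}_A\rangle_{\mc{N}(\psi_{A'})}=\tr[\widehat{H}_A\,\mc{N}(\psi_{A'})]$ depends on $\psi_{RA'}$ only through its marginal $\psi_{A'}=\tr_R[\psi_{RA'}]$, because the partial trace over $R$ commutes with $\id_R\otimes\mc{N}$. Since every state $\phi_{A'}\in\St(A')$ arises as the marginal of some (pure) state on $RA'$ via purification, and conversely every marginal of a state in $\St(RA')$ lies in $\St(A')$, the suprema over the two sets coincide:
\begin{equation}
\sup_{\psi\in\St(RA')}\langle\widehat{H}_A\rangle_{\mc{N}(\psi_{A'})}=\sup_{\phi\in\St(A')}\langle\widehat{H}_A\rangle_{\mc{N}(\phi)}.
\end{equation}

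For the second reduction, I would use linearity: the map $\phi\mapsto \tr[\widehat{H}_A\,\mc{N}(\phi)]$ is affine on the convex set $\St(A')$, so its supremum is attained at an extreme point, i.e., a pure state. Thus restricting to pure $\psi_{A'}$ does not change the value of the supremum. Combining the two reductions gives the claimed identity, with pure-state optimization sufficient.

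There is no genuine obstacle here; the entire content is already in the preceding proposition, and the corollary is a routine specialization. The only small point worth being explicit about is that the optimization can be further restricted to pure states on $A'$ (rather than on $RA'$), which follows from affinity of the objective and standard extreme-point arguments.
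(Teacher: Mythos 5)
Your proposal is correct and matches the paper's (implicit) argument: the paper states that the corollary ``directly follows'' from the preceding proposition, and your two reductions---from $\St(RA')$ to $\St(A')$ via marginals/purification, and then to pure states via affinity of $\phi\mapsto\tr[\widehat{H}_A\,\mc{N}(\phi)]$---are exactly the routine details being elided. No gaps.
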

q

\begin{theorem}\label{thm:fets}
    The thermal free energy $F^{\beta}_{\rm T}[\n]$ of a quantum channel $\n_{A'\to A}$ is upper bounded as, when the reference $R$ is noninteracting with $A$ ($\widehat{H}_{RA}=\widehat{H}_R\otimes\mathbbm{1}_A+\mathbbm{1}_R\otimes\widehat{H}_A$),
    \begin{equation}
        F^{\beta}_{\rm T}[\n]\leq E[\n]-\beta^{-1}S[\n].
    \end{equation}
    The bound is saturated for the replacer channels $\mc{R}^\omega_{A'\to A}$, for $\omega\in\St(A)$ and $\widehat{H}^{\rm int}_{RA}=0$, $F^{\beta}_{\rm T}[\mc{R}^\omega]= E[\mc{R}^\omega]-\beta^{-1}S[\mc{R}^\omega]$. 
\end{theorem}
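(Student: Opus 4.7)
\textit{Proof proposal.} The plan is to unpack the channel divergence $D[\mathcal{N}\|\widehat{\mathcal{T}}^\beta]$ into an energy-minus-entropy form on each test state $\psi_{RA'}$, then use a sup–inf split to separate the energy supremum from the entropy infimum. First, from the definition of the channel relative entropy we have
\begin{equation}
\beta F^\beta_{\rm T}[\mathcal{N}]=D[\mathcal{N}\|\widehat{\mathcal{T}}^\beta]=\sup_{\psi\in\St(RA')}D\bigl(\mathcal{N}(\psi_{RA'})\,\big\|\,\psi_R\otimes\widehat{\gamma}^\beta_A\bigr),
\end{equation}
where it suffices to take pure $\psi_{RA'}$. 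Denoting $\sigma_{RA}=\mathcal{N}(\psi_{RA'})$ and using that $\mathcal{N}$ acts only on $A'$ (hence $\sigma_R=\psi_R$) together with $\ln\widehat{\gamma}^\beta_A=-\beta\widehat{H}_A$, I would compute
\begin{align}
D(\sigma_{RA}\|\psi_R\otimes\widehat{\gamma}^\beta_A)
&=-S(RA)_\sigma-\tr[\sigma_R\ln\psi_R]-\tr[\sigma_A\ln\widehat{\gamma}^\beta_A]\notag\\
&=\beta\,\langle\widehat{H}_A\rangle_{\mathcal{N}(\psi_{A'})}-\bigl[S(RA)_{\mathcal{N}(\psi)}-S(R)_\psi\bigr].
\end{align}

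Next I would take the supremum over $\psi$ and apply the elementary bound $\sup_\psi[f(\psi)-g(\psi)]\le\sup_\psi f(\psi)-\inf_\psi g(\psi)$ to obtain
\begin{equation}
D[\mathcal{N}\|\widehat{\mathcal{T}}^\beta]\le\beta\sup_{\psi}\langle\widehat{H}_A\rangle_{\mathcal{N}(\psi_{A'})}-\inf_{\psi}\bigl[S(RA)_{\mathcal{N}(\psi)}-S(R)_\psi\bigr].
\end{equation}
The first supremum equals $E[\mathcal{N}]$ by Corollary~\ref{cor:energy} under the noninteracting assumption $\widehat{H}^{\rm int}_{RA}=0$, while the infimum is exactly the channel entropy $S[\mathcal{N}]$ by the formula recalled at the start of Section~\ref{sec:energy}. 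Dividing by $\beta$ yields the claimed inequality $F^\beta_{\rm T}[\mathcal{N}]\le E[\mathcal{N}]-\beta^{-1}S[\mathcal{N}]$.

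For saturation on a replacer channel $\mathcal{R}^\omega_{A'\to A}$, note that $\mathcal{R}^\omega(\psi_{RA'})=\psi_R\otimes\omega_A$, so both $\langle\widehat{H}_A\rangle_{\mathcal{R}^\omega(\psi_{A'})}=\tr[\widehat{H}_A\omega]$ and $S(RA)_{\mathcal{R}^\omega(\psi)}-S(R)_\psi=S(\omega)$ become constant in $\psi$. The sup–inf inequality is then an equality because both extrema are attained at the same (arbitrary) $\psi$. A direct check that $D(\psi_R\otimes\omega\|\psi_R\otimes\widehat{\gamma}^\beta)=D(\omega\|\widehat{\gamma}^\beta)=\beta\,E(\omega)-S(\omega)$ confirms $F^\beta_{\rm T}[\mathcal{R}^\omega]=E[\mathcal{R}^\omega]-\beta^{-1}S[\mathcal{R}^\omega]$.

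The only delicate points are bookkeeping ones: making sure the tensor-log identity really produces a pure energy term (which requires the noninteracting $\widehat{H}_{RA}$ so that $\widehat{H}_A$ appears unaccompanied by $\widehat{H}_R$ or $\widehat{H}^{\rm int}_{RA}$ in the trace), and justifying that the supremum in the definition of $D[\cdot\|\cdot]$ can be coordinated with the suprema defining $E[\mathcal{N}]$ and $S[\mathcal{N}]$ only via the inequality above, which is exactly where slackness occurs for general $\mathcal{N}$ and where it vanishes for replacer channels. I do not anticipate any harder obstacle; the result is essentially a channel-level rewriting of $D(\rho\|\widehat{\gamma}^\beta)=\beta E(\rho)-S(\rho)$ together with the decoupled structure of replacer outputs.
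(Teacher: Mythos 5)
Your proof is correct and follows essentially the same route as the paper's: decompose the divergence against the thermal replacer into an energy-minus-entropy expression on each test state, apply the $\sup(f-g)\le\sup f-\inf g$ split to identify $E[\mathcal{N}]$ and $S[\mathcal{N}]$ (using the noninteracting assumption via Corollary~\ref{cor:energy}), and observe that replacer outputs decouple so both extrema are attained simultaneously. The only cosmetic difference is that you work directly with the unnormalized operator $\widehat{\gamma}^\beta$, whereas the paper routes through $F^\beta[\mathcal{N}]=F^\beta_{\rm T}[\mathcal{N}]+\beta^{-1}\ln Z^\beta_A$ and cancels partition functions using $\ln Z^\beta_{RA}=\ln Z^\beta_R+\ln Z^\beta_A$.
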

\begin{proof}
    It follows from the definition of the free energy $F^{\beta}[\n]$ that
    \begin{align}
      &  F^{\beta}[\n] \nonumber\\ 
        & =\sup_{\psi\in\St(RA')}\left[E(\n(\psi_{RA'}))-E(\psi_R)-\beta^{-1}(S(\n(\psi_{RA'}))\right.\nonumber\\
        & \qquad\qquad\qquad \left.-S(\psi_R))+\beta^{-1}(\ln Z^{\beta}_{RA}-\ln Z^\beta_R)\right],\nonumber
    \end{align}
    and $F^{\beta}[\n]=F^{\beta}_{\rm T}[\n]+\beta^{-1}\ln Z^\beta_A$.
    
    If $\widehat{H}_{RA}=\widehat{H}_R\otimes\mathbbm{1}_A+\mathbbm{1}_R\otimes\widehat{H}_A$, then $\ln Z^\beta_{RA}=\ln Z^\beta_R+\ln Z^\beta_A$, and
    \begin{align}
       F^{\beta}_{\rm T}[\n]& =\sup_{\psi\in\St(RA')}\left[E(\n(\psi_{RA'}))-E(\psi_R)\right.\nonumber\\
        & \quad\qquad\qquad \left.-\beta^{-1}(S(\n(\psi_{RA'}))-S(\psi_R))\right]\nonumber\\
        & \leq E[\n]-\beta^{-1}S[\n].
    \end{align}
    We conclude the proof by observing that for a replacer channel $\mc{R}^\omega$, $F^{\beta}_{\rm T}[\mc{R}^\omega]=F^{\beta}_{\rm T}(\omega)= E(\omega)-\beta^{-1}S(\omega)=E[\mc{R}^\omega]-\beta^{-1}S[\mc{R}^\omega]$, using the fact that $R$ and $A$ are noninteracting.
\end{proof}

\begin{lemma}\label{lem:choi}
    The generalized resource-theoretic free energy $\mathbf{F}^{\beta}[\n]$ and the generalized thermal free energy $\mathbf{F}^{\beta}_{\rm T}[\n]$ of an arbitrary quantum channel $\n_{A'\to A}$ upper bounds the generalized resource-theoretic free energy $\mathbf{F}^{\beta}(\Phi^{\n}_{RA})$ and the generalized thermal free energy $\mathbf{F}^{\beta}_{\rm T}(\Phi^{\n}_{RA})$ of its Choi state $\Phi^{\n}_{RA}$, respectively,
    \begin{align}
        \mathbf{F}^{\beta}[\n]& \geq \mathbf{F}^{\beta}(\Phi^{\n}_{RA}),\\
         \mathbf{F}^{\beta}_{\rm T}[\n]& \geq \mathbf{F}^{\beta}_{\rm T}(\Phi^{\n}_{RA})+\beta^{-1}\ln |A'|,
    \end{align}
    when $\widehat{H}_{RA}=\mathbbm{1}_R\otimes(\widehat{H}_A+c\mathbbm{1}_A)$ for $c\in\mathbbm{R}$.
\end{lemma}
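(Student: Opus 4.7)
The plan is to exploit the variational definition of the generalized channel divergence and bound the supremum from below by choosing the maximally entangled input $\Phi_{RA'}$ with $|R|=|A'|$. Since
\begin{equation}
\mathbf{F}^{\beta}[\mc{N}] \,=\, \beta^{-1}\sup_{\psi\in\St(RA')}\mathbf{D}\bigl(\id_R\otimes\mc{N}(\psi_{RA'})\,\big\|\,\id_R\otimes\mc{T}^{\beta}(\psi_{RA'})\bigr),
\end{equation}
restricting to $\psi_{RA'}=\Phi_{RA'}$ is immediately a valid lower bound. Under this substitution $\id_R\otimes\mc{N}(\Phi_{RA'})=\Phi^{\mc{N}}_{RA}$ (the Choi state), while $\id_R\otimes\mc{T}^{\beta}(\Phi_{RA'})=\pi_R\otimes\gamma^{\beta}_A$ and $\id_R\otimes\widehat{\mc{T}}^{\beta}(\Phi_{RA'})=\pi_R\otimes\widehat{\gamma}^{\beta}_A$. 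This common structure handles both inequalities uniformly.

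For the first inequality, the hypothesis $\widehat{H}_{RA}=\mathbbm{1}_R\otimes(\widehat{H}_A+c\mathbbm{1}_A)$ forces $\gamma^{\beta}_R=\pi_R$ (Hamiltonian on $R$ proportional to $\mathbbm{1}_R$), and the absence of an interaction term gives $\gamma^{\beta}_{RA}=\gamma^{\beta}_R\otimes\gamma^{\beta}_A=\pi_R\otimes\gamma^{\beta}_A$. This is the identification already exploited in the proof of Lemma~\ref{prop:max-free-energy-choi}. Therefore
\begin{equation}
\mathbf{F}^{\beta}[\mc{N}]\,\geq\,\beta^{-1}\mathbf{D}\bigl(\Phi^{\mc{N}}_{RA}\,\big\|\,\gamma^{\beta}_{RA}\bigr)\,=\,\mathbf{F}^{\beta}(\Phi^{\mc{N}}_{RA}).
\end{equation}

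For the second inequality, the same Hamiltonian hypothesis gives $\widehat{\gamma}^{\beta}_{RA}=\exp(-\beta\widehat{H}_{RA})=\mathbbm{1}_R\otimes\widehat{\gamma}^{\beta}_A$, so that $\pi_R\otimes\widehat{\gamma}^{\beta}_A=|A'|^{-1}\widehat{\gamma}^{\beta}_{RA}$. Invoking the scaling identity $\mathbf{D}(\rho\|t\sigma)=\mathbf{D}(\rho\|\sigma)-\ln t$ for $t>0$, valid for the Umegaki, sandwiched R\'enyi, and max-relative entropies considered in this work, one pulls the factor $|A'|^{-1}$ out of the second argument to obtain
\begin{equation}
\beta^{-1}\mathbf{D}\bigl(\Phi^{\mc{N}}_{RA}\,\big\|\,\pi_R\otimes\widehat{\gamma}^{\beta}_A\bigr)\,=\,\mathbf{F}^{\beta}_{\rm T}(\Phi^{\mc{N}}_{RA})+\beta^{-1}\ln|A'|,
\end{equation}
from which the claimed bound follows.

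The main obstacle is the scaling-under-rescaling identity on $\mathbf{D}$ that appears in the second inequality: it is not automatic for an entirely abstract generalized divergence, but does hold for the concrete relative-entropy family of interest here, and this point should be flagged as a standing assumption on $\mathbf{D}$. The first inequality, by contrast, requires no such ingredient and rests purely on the variational definition together with the factorization of the joint Gibbs state under the given Hamiltonian structure.
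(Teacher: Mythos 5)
Your proof is correct and takes essentially the same route as the paper's: restrict the supremum defining the channel divergence to the maximally entangled input and identify $\pi_R\otimes\gamma^{\beta}_A$ with $\gamma^{\beta}_{RA}$ (respectively $|A'|^{-1}\widehat{\gamma}^{\beta}_{RA}$) under the stated Hamiltonian. Your explicit flagging of the scaling identity $\mathbf{D}(\rho\|t\sigma)=\mathbf{D}(\rho\|\sigma)-\ln t$ as a standing assumption on the generalized divergence is a detail the paper leaves implicit, and is a worthwhile clarification.
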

\begin{proof}
    The proof follows from the observation that for $\widehat{H}_R=\mathbbm{1}_R\otimes(\widehat{H}_A+c\mathbbm{1}_A)$, we have $\pi_R\otimes\gamma^{\beta}_{A}=\gamma^{\beta}_R\otimes\gamma^\beta_A=\gamma^{\beta}_{RA}=\frac{1}{|A'|}\mathbbm{1}_R\otimes\mathbbm{1}_A=\frac{1}{|A'|}\widehat{\gamma}_{RA}$ for $\widehat{H}_{RA}=\mathbbm{1}_R\otimes(\widehat{H}_A+c\mathbbm{1}_A)$.
\end{proof}
\subsection{Extractable work from a quantum channel}
We consider a quantum system $A$ with Hamiltonian $\widehat{H}$ in state $\rho$ in contact with a bath at the inverse temperature $\beta$. As the heat $\delta Q$ flows from the bath into the system and an external agent performs $\delta W$ amount of work on the system, the change $\delta E=\delta\tr(\rho \widehat{H})$ in the internal energy of the system is given by the first law of thermodynamics,
\begin{equation}
    \delta E=\delta W+\delta Q.
\end{equation}

The entropy $S(\rho)$ of the system in thermodynamics is defined through its change due to the heat flow into the system $\delta Q_{\rm rev}$ during a reversible process at inverse temperature $\beta$,
\begin{equation}
    \delta S:=\beta\delta Q_{\rm rev}.
\end{equation}
A reversible, isothermal process is necessarily quasistatic and the system is always in thermodynamic equilibrium with the bath. Therefore, during a reversible process, the work done on the system is given by
\begin{equation}
    \delta W=\delta E-\beta^{-1}\delta S=\delta F_{\rm T}^{\rm eq},
\end{equation}
where $F_{\rm T}^{\rm eq}(\rho_\beta):=E(\rho_\beta)-\beta^{-1}S(\rho_\beta)$ is the equilibrium free energy of the state $\rho_\beta$ that is in thermodynamical equilibrium with the bath. An example of such a state is the Gibbs state $\gamma_\beta=\frac{e^{-\beta \widehat{H}}}{Z^{\beta}}$. The equilibrium free energy is $F^{\rm eq}_{\rm T}(\gamma_{\beta})=-\beta^{-1}\ln Z^{\beta}$.

\textit{Non-equilibrium thermodynamics}.--- If the system is not in equilibrium with the bath, the change $\delta S$ in the entropy of the system has contributions from the heat flow $\delta Q$ to the environment, which is called the reversible entropy production $\delta_eS=\beta\delta Q$, and also from an irreversible entropy production $\delta_iS$ due to change in correlations with the environment~\cite{ELV10},
\begin{align}
    \delta S=\beta\delta Q+\delta_i S.
\end{align}
The work done during an irreversible process is given by
\begin{align}
    \delta W&=\delta E-\delta Q\\
    &=\delta E-\beta^{-1}\delta S+\beta^{-1}\delta_iS\\
    &=\delta F^\beta_{\rm T}+\beta^{-1}\delta_i S,
\end{align}
where $F^\beta_{\rm T}(\rho)=E(\rho)-\beta^{-1}S(\rho)$ is the non-equilibrium free energy, also called the thermal free energy. Let us now consider the thermalization of a system with Hamiltonian $\widehat{H}$ in the non-equilibrium state $\rho$ to the Gibbs state $\gamma_\beta$. This transformation can be carried in two steps:
\begin{enumerate}
    \item \textit{Quenching}: We instantaneously change the Hamiltonian from $\widehat{H}$ to $-\beta^{-1}\ln \rho$. With respect to this new Hamiltonian, the state $\rho$ is equilibrium state. Since this is an adiabatic process, the work done on the system during the quench is the change in the energy of the system, given by
\begin{align}
    \delta W_{\rm quench}&= \delta \tr(\rho \widehat{H})\\
    &= \tr(\rho(-\beta^{-1}\ln\rho-\widehat{H}))\\
    &=-\beta^{-1}\tr(\rho\ln\rho-\rho\ln\gamma_\beta)+\beta^{-1}\ln Z^{\beta}\\
    &=-F^\beta(\rho)-F_{\rm T}^{\rm eq}(\gamma_\beta)=-F_{\rm T}^\beta(\rho).
\end{align}
where $F^\beta(\rho)=\beta^{-1}\tr(\rho\ln\rho-\rho\ln\gamma_\beta)=\beta^{-1}D(\rho\Vert\gamma_\beta)$ is the resource-theoretic free energy of the state $\rho$. 
\item \textit{Reversible, isothermal process}: We quasistatically change the Hamiltonian from $-\beta^{-1}\ln \rho$ back to $\widehat{H}$. Since the state at the initial point of this process is at equilibrium, the state at the final point of the reversible process will be $\gamma_\beta$. The work done during this reversible process is given by
\begin{align}
    \delta W_{\rm rev}&=\delta F^{\rm eq}_{\rm T}=F^{\rm eq}_{\rm T}(\gamma_\beta)-F^{\rm eq}_{\rm T}(\rho)=F^{\rm eq}_{\rm T}(\gamma_\beta).
\end{align}
\end{enumerate}
The total work done on the system during this process is
\begin{align}
    W_{\rho\to\gamma}&=\delta W_{\rm quench}+\delta W_{\rm rev}\\
    &=-F^{\beta}_{\rm T}(\rho)+F_{\rm T}^{\rm eq}(\gamma_\beta)\\
    &=-F^\beta(\rho).
\end{align}
This is also the minimum work that must be done on the system during the transformation $\rho\to \gamma^\beta$~\cite{EV11}. Equivalently, the maximum work that can be extracted from an out-of-equilibrium state $\rho$ as it thermalizes to the Gibbs state $\gamma_\beta$ is given by
\begin{align}
    W^{\rm ext}_{\rho\to \gamma_\beta}=F^\beta(\rho).
\end{align}

\textit{Partial thermalization}: We now consider a bipartite state $\psi_{RA'}$ and a quantum channel $\n_{A'\to A}$ that acts locally on $A'$, say $\rho_{RA}:=\n(\psi_{RA'})$. We assume that the channel output and its reference are noninteracting, $\widehat{H}_{RA}=\widehat{H}_R\otimes\mathbbm{1}_A+\mathbbm{1}_R\otimes\widehat{H}_A=\widehat{H}_R+\widehat{H}_A$. 
The maximum amount of work that we can extract work from $\rho_{RA}$ via partial thermalization to the state $\rho_R\otimes\gamma^\beta_A=\mathcal{T}^\beta(\psi_{RA'})$ is given by the difference in their free energies~\cite{HO13,BHN+15},
\begin{align}
W^{\rm ext}_{\n(\psi_{RA'})\to \mathcal{T}^\beta(\psi_{RA'})}&=F^\beta(\n(\psi_{RA'}))-F^\beta(\mathcal{T}^\beta(\psi_{RA'}))\\
   &=\beta^{-1}I(A;R)_{\n(\psi_{RA'})}+F^\beta(\n(\psi_{A'})).
\end{align}
We define the extractable work $W^\beta_{\rm ext}[\n]$ from a quantum channel $\n_{A'\to A}$ via partial thermalization process with respect to the bath $\beta$ as detailed above, as
\begin{equation}
    W^\beta_{\rm ext}[\n]:=\sup_{\psi\in\St(RA')}W^{\rm ext}_{\n(\psi_{RA'})\to\mc{T}^\beta(\psi_{RA'})}.
\end{equation}

In the following theorem, we establish a direct relation between the resource-theoretic free energy $F^\beta[\n]$ of a quantum channel $\n_{A'\to A}$ and its maximal extractable work via partial thermalization. 
\begin{theorem}\label{thm:excess}
The resource-theoretic free energy $F^\beta[\n]$ of a quantum channel $\n$ measures the maximal extractable work of the channel via partial thermalization,
   \begin{align}
F^\beta[\n]=W^\beta_{\rm ext}[\n].
\end{align}
\end{theorem}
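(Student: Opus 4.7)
The plan is to recognize that Theorem~\ref{thm:excess} is essentially a direct consequence of Proposition~\ref{prop:mi} combined with the step-by-step work accounting already laid out in the paragraphs leading up to the theorem. First I would fix an arbitrary pure state $\psi_{RA'}$ and add up the three contributions to the extractable work in the partial thermalization $\id_R\otimes\n(\psi_{RA'})\to \psi_R\otimes\gamma^\beta_A$: (i) decoupling yields $\beta^{-1}I(A;R)_{\n(\psi_{RA'})}$; (ii) partial quenching from $\widehat{H}_A$ to $-\beta^{-1}\ln\n(\psi_{A'})$ yields $F^\beta_{\rm T}(\n(\psi_{A'}))$; and (iii) the reversible isothermal drive back to $\widehat{H}_A$ yields $-F^{\rm eq}_{\rm T}(\gamma^\beta_A)=\beta^{-1}\ln Z^\beta_A$.

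Next I would merge steps (ii) and (iii) using the identity $F^\beta_{\rm T}(\rho)+\beta^{-1}\ln Z^\beta_A = F^\beta(\rho)$, which is a direct computation from $\gamma^\beta_A = \widehat{\gamma}^\beta_A / Z^\beta_A$:
\begin{equation}
D(\rho\Vert\gamma^\beta_A) = D(\rho\Vert\widehat{\gamma}^\beta_A) + \ln Z^\beta_A.
\end{equation}
This produces the single-state formula
\begin{equation}
W^{\rm ext}_{\n(\psi_{RA'})\to \mc{T}^\beta(\psi_{RA'})} = \beta^{-1} I(R;A)_{\n(\psi_{RA'})} + F^\beta(\n(\psi_{A'})),
\end{equation}
which matches the expression stated just before the theorem.

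Finally, I would take the supremum over $\psi_{RA'}\in\St(RA')$ on both sides. The left side is by definition $W^{\rm ext}_{\n\to\mc{T}^\beta}$. The right side, after the supremum, is exactly the expression that Proposition~\ref{prop:mi} identifies with $F^\beta[\n]$, so the two coincide, completing the proof. Since it suffices to optimize over pure states $\psi_{RA'}$ in Proposition~\ref{prop:mi}, the restriction to pure inputs in the partial thermalization protocol entails no loss of generality.

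I do not expect any genuine obstacle here: the partial thermalization protocol has been engineered precisely so that its three contributions reproduce the mutual information plus output free energy decomposition of Proposition~\ref{prop:mi}. The only subtle point worth flagging is the noninteraction assumption $\widehat{H}^{\rm int}_{RA}=0$ invoked throughout the work accounting (so that the reference $R$ contributes no energy change during the local protocol on $A$), and the use of the fact that, for a replacer-type reference marginal, the decoupling bound of \cite{BPF+15,KMS+14} is saturated in the ideal reversible limit that we are implicitly adopting.
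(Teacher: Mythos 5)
Your proposal is correct and follows essentially the same route as the paper, which simply states that the theorem follows directly from Proposition~\ref{prop:mi} after the work-accounting formula $W^{\rm ext}_{\n(\psi_{RA'})\to\mc{T}^\beta(\psi_{RA'})}=\beta^{-1}I(R;A)_{\n(\psi_{RA'})}+F^\beta(\n(\psi_{A'}))$ has been established in the preceding text. Your write-up merely makes explicit the merging of the quench and reversible-drive terms via $D(\rho\Vert\gamma^\beta_A)=D(\rho\Vert\widehat{\gamma}^\beta_A)+\ln Z^\beta_A$ and the final supremum, which is exactly what the paper leaves implicit.
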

The proof follows directly from Proposition~\ref{prop:mi}. The theorem quantitatively establishes the direct relation of the resource-theoretic free energy of a channel with its thermodynamic utility in terms of extractable work~\cite{BGD25b}.
\section{Discussion}\label{sec:discussion}
 We develop an asymptotically reversible resource theory of athermality for quantum channels, using Gibbs-preserving superchannels as free operations. Table~\ref{tab:concepts} summarizes a comparison of basic thermodynamic concepts for states (as known in the literature) and channels (all except entropy are introduced in this work). The (resource-theoretic and thermodynamic) free energy of a quantum channel quantitatively and operationally capture its distinguishability from the absolutely thermal channel with respect to a bath, thereby providing general and concrete form to the suggestions put forward in \cite{NG15,SPSD25,DS25}. The discussions in \cite[Section IV. A.]{DGP24} and \cite[Sections V \& VI]{SPSD25} along with the findings in this work, provide a generic framework to extend the notion of the free energy of a quantum channel to the notion of the free energy of higher order quantum processes and the conditional free energy of a bipartite quantum processes. We will be exploring these directions rigorously and formally in future works.

Another future direction is to study a resource theory of athermality of quantum channels under other classes of physically and operationally motivated free operations, for example, superchannel generalization of thermal operations~\cite{BHN+15,Gou24}. The set of all superchannels that are  superchannel generalization of thermal operations will be a subset of Gibbs preserving superchannels. We want to see if the associated resource theory will be reversible or irreversible when the uses of catalysis are allowed (cf.~\cite{ST25}). Another future direction would be to inspect the behavior of free energy functions for quantum Markovian and non-Markovian dynamics and their implications on the capabilities of the dynamics to perform thermodynamic tasks~\cite{Pot24,BGC+25,BCB+25,TBB25,BGG+25}.

\begin{acknowledgments} 
The authors thank Felix C. Binder, Manabendra Nath Bera, Swati Choudhary, Karol Horodecki, Yutong Luo, and Uttam Singh for discussions. D.G.S. thanks the Department of Science and Technology, Government of India, for the INSPIRE fellowship. S.D. acknowledges support from the Science and Engineering Research Board (now ANRF), Department of Science and Technology, Government of India, under Grant No. SRG/2023/000217 and the Ministry of Electronics and Information Technology (MeitY), Government of India, under Grant No. 4(3)/2024-ITEA. S.D. also acknowledges support from the National Science Centre, Poland, grant Opus 25, 2023/49/B/ST2/02468. S.D. thanks the University of Gda\'nsk, Gda\'nsk, Poland for the hospitality during his visit.
\end{acknowledgments}

\appendix

\section{Detailed Proofs}
For sanwiched R\'enyi relative entropy $D_{\alpha}$, $\alpha=1$ is used to denote (Umegaki) relative entropy as $\lim_{\alpha\to 1}D_{\alpha}(\cdot\|\cdot)=D(\cdot\|\cdot)$. Additionally, given two completely positive maps $\n$ and $\mc{M}$, we say $\n\ge \mc{M}$ if the linear map $\n-\mc{M}$  is also completely positive. We now prove the results claimed in the main content. 
\subsection{Proof of Theorem~\ref{thm:axioms-satisifaction}}\label{proof:axioms-satisfaction}
We begin the proof by proving two useful lemmas employed to prove the theorem.
\begin{lemma}\label{lemma:free-energy-cont}
    Given two quantum channels $\n_{A' \to A}$ and $\mathcal{M}_{A' \to A}$ such that $\frac{1}{2}\|\n-\mathcal{M}\|_{\diamond} \le \varepsilon$, we have
    \begin{align}
        \abs{F^\beta[\n] -  F^\beta[\mathcal{M}]} \le {\beta}^{-1} (\varepsilon K + h_2(\varepsilon)),
        \end{align}
where $\varepsilon \in [0,1]$ and $h_2(\varepsilon) =-\varepsilon\ln(\varepsilon)-(1-\varepsilon)\ln(1-\varepsilon)$ is the binary Shannon entropy. $K$ is a positive real number such that
        \begin{equation}
       K\ge \max \{ D_\infty(\n(\psi^0_{RA'})\|\mathcal{T}^\beta(\psi^0_{RA'})),D_\infty(\mc{M}(\phi^0_{RA'})\Vert\mc{T}^\beta(\phi^0_{RA'}))\}.
        \end{equation}
where $\psi^0_{RA'}$ maximizes $D(\n(\psi_{RA'})\|\mathcal{T}^\beta(\psi_{RA'}))$ and $\phi^0_{RA'}$ maximizes $D(\mc{M}(\psi_{RA'})\|\mathcal{T}^\beta(\psi_{RA'}))$ respectively.
\end{lemma}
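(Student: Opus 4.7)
The plan is to reduce the channel-level continuity statement to a state-level continuity bound for the Umegaki relative entropy, via the variational expression $D[\mc{N}\Vert\mc{T}^\beta]=\sup_{\psi\in\St(RA')}D(\mc{N}(\psi_{RA'})\Vert\psi_R\otimes\gamma^\beta_A)$ (where it suffices to optimize over pure states). Since $F^\beta[\mc{N}]=\beta^{-1}D[\mc{N}\Vert\mc{T}^\beta]$, the $\beta^{-1}$ factor will come out at the very end and we may focus on bounding $|D[\mc{N}\Vert\mc{T}^\beta]-D[\mc{M}\Vert\mc{T}^\beta]|$ by $\varepsilon K + h_2(\varepsilon)$.

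The first ingredient is an input-by-input reduction. For every pure $\psi_{RA'}$, the definition of the diamond norm gives $\tfrac{1}{2}\Vert\mc{N}(\psi_{RA'})-\mc{M}(\psi_{RA'})\Vert_1\leq \tfrac{1}{2}\Vert\mc{N}-\mc{M}\Vert_\diamond\leq \varepsilon$, while $\mc{T}^\beta(\psi_{RA'})=\psi_R\otimes\gamma^\beta_A$ is common to both divergences. Moreover, by the variational characterization of $D_\infty[\cdot\Vert\cdot]$, we have $D_\infty(\mc{N}(\psi_{RA'})\Vert\psi_R\otimes\gamma^\beta_A)\leq D_\infty[\mc{N}\Vert\mc{T}^\beta]\leq K$, and likewise for $\mc{M}$. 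So if we can prove the state-level inequality
\begin{equation}
\bigl|D(\rho_1\Vert\omega)-D(\rho_2\Vert\omega)\bigr|\leq \varepsilon K + h_2(\varepsilon)
\end{equation}
for any two states $\rho_1,\rho_2$ with $\tfrac{1}{2}\Vert\rho_1-\rho_2\Vert_1\leq \varepsilon$ and any positive $\omega$ with $\max\{D_\infty(\rho_i\Vert\omega)\}\leq K$, then applying it to $\rho_i\in\{\mc{N}(\psi_{RA'}),\mc{M}(\psi_{RA'})\}$ and $\omega=\psi_R\otimes\gamma^\beta_A$ yields the claim pointwise in $\psi$.

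For the state-level bound, my approach is to use the standard ``trace-distance mixture" decomposition: write $\rho_1=(1-\varepsilon)\xi+\varepsilon\tau_1$ and $\rho_2=(1-\varepsilon)\xi+\varepsilon\tau_2$, with $\xi,\tau_1,\tau_2\in\St$ obtained from the Jordan-Hahn splitting of $\rho_1-\rho_2$. Joint convexity of $D(\cdot\Vert\omega)$ gives $D(\rho_1\Vert\omega)\leq (1-\varepsilon)D(\xi\Vert\omega)+\varepsilon D(\tau_1\Vert\omega)$, while the companion ``concavity gap" bound $D(p\mu_1+(1-p)\mu_2\Vert\omega)\geq pD(\mu_1\Vert\omega)+(1-p)D(\mu_2\Vert\omega)-h_2(p)$ (essentially the entropy-mixing inequality with a common second argument) gives $D(\rho_2\Vert\omega)\geq (1-\varepsilon)D(\xi\Vert\omega)+\varepsilon D(\tau_2\Vert\omega)-h_2(\varepsilon)$. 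Subtracting and using $D(\tau_i\Vert\omega)\geq 0$, together with $D(\tau_i\Vert\omega)\leq D_\infty(\tau_i\Vert\omega)\leq K$ (which one obtains by relating $D_\infty(\tau_i\Vert\omega)$ back to $D_\infty(\rho_i\Vert\omega)$ using $\tau_i\leq \rho_i/\varepsilon$ only where needed, or by selecting the decomposition so that this step is tight), yields the one-sided bound; swapping roles yields the other.

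Finally, taking the supremum over $\psi_{RA'}$ on both sides via the standard ``sup-difference" argument (near-optimal $\psi^*$ for $\mc{N}$ is a feasible input for $\mc{M}$ and vice versa) produces $|D[\mc{N}\Vert\mc{T}^\beta]-D[\mc{M}\Vert\mc{T}^\beta]|\leq \varepsilon K+h_2(\varepsilon)$, and dividing by $\beta$ completes the proof. The main technical hurdle is tightening the state-level inequality so that the $h_2$-term appears exactly once (rather than twice, which is what a naive convexity/anti-convexity pair gives), and so that $K$ can be taken as $\max\{D_\infty(\rho_1\Vert\omega),D_\infty(\rho_2\Vert\omega)\}$ rather than a larger quantity involving $\varepsilon^{-1}$; this is what forces the careful choice of decomposition and the use of the Jensen-gap bound for relative entropy on the appropriate side.
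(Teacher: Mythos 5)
Your overall skeleton---reduce to a state-level continuity bound for $D(\cdot\Vert\omega)$ in its first argument, apply it pointwise with $\omega=\psi_R\otimes\gamma^\beta_A$ using $D_\infty(\mc{N}(\psi)\Vert\mc{T}^\beta(\psi))\le D_\infty[\mc{N}\Vert\mc{T}^\beta]\le K$, and then pass to the channel divergences by the sup-difference argument---is exactly the route the paper takes. The difference is that the paper invokes the state-level continuity bound as a black box (a recent tightened continuity theorem for relative entropy in the first argument), whereas you attempt to prove it from scratch, and that attempt has two genuine gaps.

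First, the ``common-component'' decomposition $\rho_1=(1-\varepsilon)\xi+\varepsilon\tau_1$, $\rho_2=(1-\varepsilon)\xi+\varepsilon\tau_2$ with a single state $\xi$ and $\varepsilon=\tfrac12\norm{\rho_1-\rho_2}_1$ does not exist for general quantum states: it requires $(\rho_1-\rho_2)_+\le\rho_1$ (equivalently $(\rho_1-\rho_2)_-\le\rho_2$), which already fails for $\rho_1=\op{0}$, $\rho_2=\op{+}$. There $\varepsilon=1/\sqrt{2}$, yet any state $\tau_1$ with $\varepsilon\tau_1\le\op{0}$ must equal $\op{0}$ and likewise $\tau_2=\op{+}$, so $\varepsilon(\tau_1-\tau_2)\neq\rho_1-\rho_2$ and no such decomposition exists. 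What does exist is the Alicki--Fannes--Winter ``outer'' mixture $\omega=(1-p)\rho_1+p\Delta_1=(1-p)\rho_2+p\Delta_2$ with $p=\varepsilon/(1+\varepsilon)$, and the convexity/concavity-gap pairing would have to be rebuilt around that. Second, even granting a decomposition with $\varepsilon\tau_1\le\rho_1$, the operator inequality only yields $\tau_1\le\varepsilon^{-1}\rho_1\le\varepsilon^{-1}\mathrm{e}^{K}\omega$, hence $D(\tau_1\Vert\omega)\le D_\infty(\tau_1\Vert\omega)\le K+\ln(1/\varepsilon)$, so the one-sided bound degrades to $\varepsilon K+\varepsilon\ln(1/\varepsilon)+h_2(\varepsilon)$ rather than $\varepsilon K+h_2(\varepsilon)$. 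You flag both points yourself as ``the main technical hurdle,'' but resolving them is precisely the nontrivial content of the continuity theorem the paper cites; as written, your state-level step does not close. Either cite such a bound directly (as the paper does) or accept the weaker additive term that your decomposition actually delivers.
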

\begin{proof}
The condition $\frac{1}{2}\|\n-\mathcal{M}\|_{\diamond} \le \varepsilon$ implies that $\frac{1}{2}\|\n(\psi_{RA'})-\mathcal{M}(\psi_{RA'})\|_1 \le \varepsilon$ for all $\psi_{RA'}$. The continuity of relative entropy in the first argument has been tightened in \cite{BLT25,ABD+25}. Here we use a slightly simpler bound given in \cite[Theorem 1]{BLT25}, from which it follows that
    \begin{align}
        &D(\n(\psi_{RA'})\|\mathcal{T}^\beta(\psi_{RA'})) - D(\mathcal{M}(\psi_{RA'})\|\mathcal{T}^\beta(\psi_{RA'})) \nonumber\\
        &\le \varepsilon K + h_2(\varepsilon),
    \end{align}
where 
\begin{align}
    K\ge D_\infty(\n(\psi_{RA'}) \| \mathcal{T}^\beta(\psi_{RA'}))
\end{align}
is some positive real number. Therefore, we have
\begin{align}\label{eq:n-m_continuity}
    &D[\n\|\mathcal{T}^\beta] - D[\mathcal{M}\|\mathcal{T}^\beta] \nonumber\\
    &\le D(\n(\psi^0_{RA'})\|\mathcal{T}^\beta(\psi^0_{RA'})) - D(\mathcal{M}(\psi^0_{RA'})\|\mathcal{T}^\beta(\psi^0_{RA'}))\\
    & \le \varepsilon K_{\n} + h_2(\varepsilon), 
\end{align}
where the state $\psi^0_{RA'}$ maximizes $D(\n(\psi_{RA'})\|\mathcal{T}^\beta(\psi_{RA'}))$ and $K_{\n}\ge D_\infty(\n(\psi^0_{RA'})\|\mathcal{T}^\beta(\psi^0_{RA'}))$. Following the similar steps for the channel $\mathcal{M}$,
\begin{align}\label{eq:m-n_continuity}
    D[\mathcal{M}\|\mathcal{T}^\beta] - D[\n\|\mathcal{T}^\beta]
    \le  \varepsilon  K_{\mc{M}} + h_2(\varepsilon),
\end{align}
where $K_{\mc{M}}\ge D_\infty(\mc{M}(\phi^0_{RA'})\Vert\mc{T}^\beta(\phi^0_{RA'}))$ where the state $\phi^0_{RA'}$ maximizes $D(\mc{M}(\psi_{RA'})\|\mathcal{T}^\beta(\psi_{RA'}))$. From Eqs.~\eqref{eq:n-m_continuity} and~\eqref{eq:m-n_continuity}, we have
\begin{align}
    &|D[\n\|\mathcal{T}^\beta] - D[\mathcal{M}\|\mathcal{T}^\beta]| \le  \varepsilon K + h_2(\varepsilon)\\
   &\text{where,}~ K=\max\{K_{\n},K_{\mc{M}}\}.\nonumber
\end{align}
The proof of the lemma follows by using the definition of free energy $F[\n]=\beta^{-1} D[\n\Vert\mc{T}^\beta]$ in the above inequality.
\end{proof}
\begin{lemma}\label{lemma:max-free-energy-cont}
Given two quantum channels $\n_{A' \to A}$ and $\mathcal{M}_{A' \to A}$ such that $\frac{1}{2}\|\n-\mathcal{M}\|_{\diamond} \le \varepsilon$, we have
    \begin{equation}
        \abs{F^\beta_\infty[\n] -  F^\beta_\infty[\mathcal{M}]}
        \le \beta^{-1} \ln (1+\varepsilon |A'|Z^\beta \mathrm{e}^{\beta E_{\max}}),
    \end{equation}
where $E_{\max}$ is the maximum eigenvalue of the Hamiltonian $\widehat{H}_A$.  
\end{lemma}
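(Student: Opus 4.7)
\medskip

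\noindent\textbf{Proof proposal.} The plan is to reduce the channel-level statement to a max-relative entropy comparison between Choi states and then to exploit the lower-bound on the spectrum of the thermal state $\gamma^\beta_A$. First, I would invoke Lemma~\ref{prop:max-free-energy-choi} to rewrite
\begin{equation}
F^\beta_\infty[\n] = \beta^{-1} D_\infty(\Phi^{\n}_{RA}\Vert\pi_R\otimes\gamma^\beta_A),\qquad F^\beta_\infty[\mc{M}] = \beta^{-1} D_\infty(\Phi^{\mc{M}}_{RA}\Vert\pi_R\otimes\gamma^\beta_A),
\end{equation}
so that both divergences have the same second argument $\pi_R\otimes\gamma^\beta_A$. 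Next, from the hypothesis $\tfrac12\Vert\n-\mc{M}\Vert_\diamond\le\varepsilon$ and the fact that the Choi states arise from a maximally entangled input, I would deduce $\tfrac12\Vert\Phi^\n_{RA}-\Phi^\mc{M}_{RA}\Vert_1\le\varepsilon$, so that the positive part $P:=(\Phi^\n-\Phi^\mc{M})_+$ satisfies $\tr[P]\le\varepsilon$ and therefore $P\le\Vert P\Vert_\infty\mathbbm{1}_{RA}\le\varepsilon\,\mathbbm{1}_{RA}$, which yields the operator inequality $\Phi^\n_{RA}\le\Phi^\mc{M}_{RA}+\varepsilon\,\mathbbm{1}_{RA}$.

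The second ingredient is a crude operator bound that converts $\mathbbm{1}_{RA}$ into a multiple of $\pi_R\otimes\gamma^\beta_A$. Here the minimal eigenvalue of $\gamma^\beta_A$ is $\mathrm{e}^{-\beta E_{\max}}/Z^\beta$, hence $\mathbbm{1}_A\le Z^\beta\mathrm{e}^{\beta E_{\max}}\gamma^\beta_A$ and $\mathbbm{1}_R = |A'|\pi_R$, giving
\begin{equation}
\mathbbm{1}_{RA}\;\le\;|A'|\,Z^\beta\,\mathrm{e}^{\beta E_{\max}}\,(\pi_R\otimes\gamma^\beta_A).
\end{equation}
Combining this with the operator inequality above and the definition of $D_\infty$ via $\Phi^\mc{M}_{RA}\le\mathrm{e}^{D_\infty[\mc{M}\Vert\mc{T}^\beta]}(\pi_R\otimes\gamma^\beta_A)$, I would obtain
\begin{equation}
\Phi^\n_{RA}\;\le\;\Bigl(\mathrm{e}^{D_\infty[\mc{M}\Vert\mc{T}^\beta]}+\varepsilon |A'|Z^\beta\mathrm{e}^{\beta E_{\max}}\Bigr)(\pi_R\otimes\gamma^\beta_A),
\end{equation}
so that $D_\infty[\n\Vert\mc{T}^\beta]-D_\infty[\mc{M}\Vert\mc{T}^\beta]\le\ln\bigl(1+\varepsilon |A'|Z^\beta\mathrm{e}^{\beta E_{\max}}\mathrm{e}^{-D_\infty[\mc{M}\Vert\mc{T}^\beta]}\bigr)\le\ln\bigl(1+\varepsilon |A'|Z^\beta\mathrm{e}^{\beta E_{\max}}\bigr)$, using $D_\infty[\mc{M}\Vert\mc{T}^\beta]\ge 0$. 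By swapping the roles of $\n$ and $\mc{M}$, the same bound holds for the reverse difference, and division by $\beta$ yields the claim.

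I do not expect any serious obstacle: the only subtle point is confirming that the Choi-state $\ell_1$ distance is dominated by the diamond distance (which is standard, since the Choi state arises from a specific pure-state input in the definition of $\Vert\cdot\Vert_\diamond$), and being mindful that the bound $P\le\varepsilon\mathbbm{1}_{RA}$ via $\Vert P\Vert_\infty\le\tr[P]$ is valid only because $P\ge 0$. Everything else is a transparent chain of operator inequalities, and the bound is tight enough to recover $|F^\beta_\infty[\n]-F^\beta_\infty[\mc{M}]|\to 0$ as $\varepsilon\to 0$, as required by axiom (A4).
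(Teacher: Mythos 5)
Your proposal is correct, and it follows the same outline as the paper's proof: reduce to the Choi states via Lemma~\ref{prop:max-free-energy-choi}, and then exploit the minimal eigenvalue $\lambda_0=\mathrm{e}^{-\beta E_{\max}}/(|A'|Z^\beta)$ of the fixed second argument $\pi_R\otimes\gamma^\beta_A$. The one substantive difference is in how the continuity of $D_\infty$ in its first argument is handled: the paper simply cites the uniform-continuity bounds of Bluhm--Capel--Gondolf--Mulherkar and selects the case $\ln(1+\varepsilon\lambda_0^{-1})$, whereas you re-derive exactly that bound from scratch via the chain $\Phi^{\n}\le\Phi^{\mc{M}}+(\Phi^{\n}-\Phi^{\mc{M}})_+\le\Phi^{\mc{M}}+\varepsilon\,\mathbbm{1}_{RA}\le\bigl(\mathrm{e}^{D_\infty[\mc{M}\Vert\mc{T}^\beta]}+\varepsilon\lambda_0^{-1}\bigr)(\pi_R\otimes\gamma^\beta_A)$, together with $D_\infty[\mc{M}\Vert\mc{T}^\beta]\ge 0$. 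All the individual steps are sound: the Choi trace distance is dominated by the diamond distance because the maximally entangled input is one of the states in the supremum; $\tr[(\Phi^{\n}-\Phi^{\mc{M}})_+]=\tfrac12\Vert\Phi^{\n}-\Phi^{\mc{M}}\Vert_1$ because the two Choi operators have equal trace; and $\Vert P\Vert_\infty\le\tr[P]$ for $P\ge 0$. So your argument is self-contained and elementary where the paper's is citation-based, at the cost of a slightly longer write-up; the resulting bound is identical.
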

\begin{proof}
Using the properties of max-relative channel entropy, we have,
\begin{align}
    D_\infty&[\n\Vert\mc{T}^\beta]-D_\infty[\mc{M}\Vert\mc{T}^\beta]\\
    &=D_\infty(\Phi^\n\Vert\Phi^{\mc{T}^\beta})-D_\infty(\Phi^{\mc{M}}\Vert\Phi^{\mc{T}^\beta})\\
    &=D_\infty(\Phi^\n\Vert\pi_R\otimes\gamma^\beta_A)-D_\infty(\Phi^{\mc{M}}\Vert\pi_R\otimes\gamma^\beta_A).
\end{align}
The minimum eigenvalue of the second argument is given by
\begin{align}
    \lambda_0:=\lambda_{\min}(\pi_R\otimes\gamma^\beta_A)=\frac{\mathrm{e}^{-\beta E_{\max}}}{|A'|Z^{\beta}},
\end{align}
where $E_{\max}$ is the largest eigenvalue of the Hamiltonian $\widehat{H}_A$. Using the bounds for uniform continuity of max-relative entropy given in~\cite{BCGM24}, we have
\begin{align}
    | D_\infty&[\n\Vert\mc{T}^\beta]-D_\infty[\mc{M}\Vert\mc{T}^\beta]|\\
    &\le \min 
        \begin{cases}
        \ln (1+\varepsilon)+ \ln \lambda_0^{-1}, \\
        \ln (1+\varepsilon \lambda_0^{-1}), \\
        \ln ((1+\varepsilon)(1+\varepsilon \lambda_0^{-1})- \varepsilon^2).
        \end{cases} 
\end{align}
Note that $Z^{\beta} \mathrm{e}^{\beta E_{\max}}> 1$ for any Hamiltonian $\widehat{H}_A$. Therefore, $\lambda_0^{-1}$ is greater than $1$, which makes the second term in the above cases the minimum value for any Hamiltonian. Therefore, we have 
\begin{align}
    | D_\infty&[\n\Vert\mc{T}^\beta]-D_\infty[\mc{M}\Vert\mc{T}^\beta]|\le   \ln (1+\varepsilon |A'|Z^{\beta}\mathrm{e}^{\beta E_{\max}}).
\end{align}
The proof follows from the definition of the max-free energy.
\end{proof}

We now prove the theorem.
\begin{theorem*}
    Let $\mc{N}_{A'\to A}$ be an arbitrary quantum channel. The free energy $F^{\beta}[\mc{N}]$ satisfies all the thermodynamical (resource-theoretic) axiomatic properties (A1-A6).
    
    For $\alpha\in(1,\infty)$, the sandwiched R\'enyi free energy $F_{\alpha}^{\beta}[\mc{N}]$ satisfies (A1-A5) and quasi-convexity, i.e., for quantum channel $\mc{N}=\sum_x p_x\mc{N}^x$ where $\{p_x\}_x$ is some probability distribution, $F_{\alpha}^{\beta}[\mc{N}]\leq \max_xF_{\alpha}^{\beta}[\mc{N}^x]$. 
\end{theorem*}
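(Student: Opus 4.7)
The plan is to verify each axiom in turn, leveraging known properties of the underlying generalized state/channel divergence and the specific form $\mathbf{F}^\beta[\mc{N}] = \beta^{-1}\mathbf{D}[\mc{N}\Vert\mc{T}^\beta]$. For (A1), I would apply the data-processing inequality of generalized channel divergences under superchannels (Theorem~1 of~\cite{DGP24}): if $\Theta^\beta$ is Gibbs preserving, $\Theta^\beta(\mc{T}^\beta) = \mc{T}^\beta$, so
\begin{equation}
\mathbf{D}[\Theta^\beta(\mc{N})\Vert\mc{T}^\beta] = \mathbf{D}[\Theta^\beta(\mc{N})\Vert\Theta^\beta(\mc{T}^\beta)] \leq \mathbf{D}[\mc{N}\Vert\mc{T}^\beta].
\end{equation}
For (A2), a replacer channel $\mc{R}^\omega$ satisfies $\mc{R}^\omega(\psi_{RA'}) = \psi_R \otimes \omega_A$, and since $\mc{T}^\beta(\psi_{RA'}) = \psi_R \otimes \gamma^\beta_A$, the optimization over $\psi_{RA'}$ collapses via the tensor-product identity for state divergences, yielding $\mathbf{F}^\beta[\mc{R}^\omega] = \beta^{-1}\mathbf{D}(\omega\Vert\gamma^\beta) = \mathbf{F}^\beta(\omega)$.

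For (A3), faithfulness of the underlying state divergence $D$ (or $D_\alpha$ for $\alpha \in [\tfrac{1}{2},\infty)$) lifts to faithfulness of the channel divergence; $\mathbf{D}[\mc{N}\Vert\mc{T}^\beta] = 0$ iff $\mc{N}(\psi_{RA'}) = \psi_R \otimes \gamma^\beta_A$ for every pure $\psi_{RA'}$, which forces $\mc{N} = \mc{T}^\beta$. For (A4), I would invoke Lemma~\ref{lemma:free-energy-cont} (and Lemma~\ref{lemma:max-free-energy-cont} for the $\alpha = \infty$ case that also implies the intermediate Rényi cases by interpolation, but actually for generic $\alpha$ I would use the continuity of sandwiched Rényi divergences); the bound $\abs{F^\beta[\mc{N}] - F^\beta[\mc{M}]} \leq \beta^{-1}(\varepsilon K + h_2(\varepsilon))$ tends to $0$ with $\varepsilon$. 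For (A5), the additivity of sandwiched Rényi divergence over tensor-product states, combined with the optimality argument of~\cite[Theorem~4.2]{FGR25} showing additivity of the sandwiched Rényi channel divergence (for $\alpha \geq 1$) against tensor-product references, gives $F_\alpha^\beta[\mc{N}\otimes\mc{M}] = F_\alpha^\beta[\mc{N}] + F_\alpha^\beta[\mc{M}]$ whenever $\mc{T}^\beta_{A'B'\to AB} = \mc{T}^\beta_{A'\to A}\otimes\mc{T}^\beta_{B'\to B}$, which holds under the noninteracting Hamiltonian assumption. For $\alpha = 1$, additivity follows from the chain rule/additivity of the Umegaki channel divergence against product references, again using that $\mc{T}^\beta$ factorizes.

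For (A6), convexity of $F^\beta[\mc{N}]$ follows from joint convexity of the Umegaki relative entropy: writing $\mc{N} = p\mc{N}_1 + (1-p)\mc{N}_2$ and using that the second argument $\mc{T}^\beta(\psi_{RA'}) = \psi_R\otimes\gamma^\beta_A$ does not depend on the decomposition, for any pure $\psi_{RA'}$
\begin{align}
D(\mc{N}(\psi_{RA'}) \Vert \psi_R \otimes \gamma^\beta_A) &\leq p\, D(\mc{N}_1(\psi_{RA'}) \Vert \psi_R \otimes \gamma^\beta_A) \nonumber\\
&\quad + (1-p)\, D(\mc{N}_2(\psi_{RA'}) \Vert \psi_R \otimes \gamma^\beta_A),
\end{align}
and taking the sup over $\psi_{RA'}$ on both sides (using $\sup(p f + (1-p) g) \leq p \sup f + (1-p) \sup g$) yields $F^\beta[\mc{N}] \leq p F^\beta[\mc{N}_1] + (1-p) F^\beta[\mc{N}_2]$. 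For the Rényi case $\alpha \in (1,\infty)$, joint convexity fails but quasi-convexity of the sandwiched Rényi divergence holds, and the same sup argument yields $F_\alpha^\beta[\mc{N}] \leq \max_x F_\alpha^\beta[\mc{N}^x]$. The main obstacle I anticipate is (A5): the additivity of channel divergences is subtle and requires care in matching the channel divergence definition (with reference optimization) to the tensor-product structure, which is why citing~\cite{FGR25} is essential; for $\alpha = 1$ one can alternatively give a self-contained superadditivity argument by choosing product input ancillas, matched with an explicit subadditivity argument that uses the product structure of $\mc{T}^\beta$.
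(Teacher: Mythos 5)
Your proposal is correct and follows essentially the same route as the paper's proof: data processing for (A1), direct collapse of the optimization for (A2), faithfulness lifting for (A3), the stated continuity lemmas for (A4), superadditivity via product inputs plus a chain-rule/subadditivity bound for (A5), and joint (quasi-)convexity of the divergence with the sup trick for (A6). The only cosmetic difference is in (A5), where the paper's appendix establishes the subadditivity direction via the chain-rule lemma of Wilde et al.\ rather than citing \cite{FGR25} directly, but both amount to the same argument.
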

\begin{proof}
The free energy of a channel satisfies all the axioms as shown below.\\
\noindent
(A1) \textit{Monotonicity under Gibbs preserving superchannels}: This follows from the monotonicity of sandwiched R\'enyi relative entropy between channels for $\alpha\in[\frac{1}{2},\infty)$~\cite{G19},
\begin{align}
D_{\alpha}[\n_{A' \to A} \| \mathcal{T}^{\beta}_{A' \to A}] & \geq D_{\alpha}[\Theta (\n_{A' \to A}) \| \Theta( \mathcal{T}^{\beta}_{A' \to A})]\\
    &= D_{\alpha}[\Theta (\n_{A' \to A}) \|  \mathcal{T}^{\beta}_{A' \to A}],
\end{align}
which imply $ F_{\alpha}^\beta[\Theta (\n_{A' \to A})] \le F_{\alpha}^\beta[\n_{A' \to A}]$.
\\
\noindent
(A2) \textit{Reduction to states}:  For a replacer channel $\mathcal{R}^{\omega}_{A'\to A}$, for all $\alpha \in [\frac{1}{2},\infty)$, we have
        \begin{align}
            \beta F^\beta_\alpha[\mathcal{R}^{\omega}]&=D_\alpha[\mathcal{R}^\omega\Vert\mathcal{T}^{\beta}]\nonumber\\
            &= \sup_{\psi\in\St(RA')}D_\alpha(\mathcal{R}^{\omega}(\psi_{RA'})\Vert\mathcal{T}^\beta(\psi_{RA'}))\nonumber\\
            &=\sup_{\psi\in\St(RA')}D_\alpha(\psi_R\otimes\omega_A\Vert\psi_R\otimes\gamma^\beta_A)\nonumber\\
            &= D_\alpha(\omega_A\Vert\gamma^\beta_A)=\beta F^\beta_\alpha(\omega).
        \end{align}
(A3) \textit{Minimum value at absolutely thermal channel}: This follows from the fact that  for $\alpha \in [\frac{1}{2},\infty)$, $D_\alpha(\rho\Vert\sigma)\ge 0$ if both $\rho$ and $\sigma$ are states, and $D_\alpha(\rho\|\sigma)= 0$ if and only if $\rho=\sigma$. Therefore, given quantum channels $\n_{A'\to A}$ and $\mc{M}_{A'\to A}$, $D_\alpha[\mc{N}\Vert\mc{M}]=\sup_{\psi\in\St(RA')} D_\alpha(\n(\psi)\|\mathcal{M}(\psi))=0$ if and only if $\n(\varphi)=\mathcal{M}(\varphi)$ for all $\varphi\in \St(RA')$ or $\n=\mathcal{M}$. Therefore, $F^\beta_\alpha[\n]=\beta ^{-1} D_\alpha[\n\|\mathcal{T}^\beta]\ge 0$ with equality if and only if $\n=\mathcal{T}^\beta$.
\\
\noindent
(A4) \textit{Continuity:} We show the continuity of channel free energy $F^\beta[\n]$ in Lemma~\ref{lemma:free-energy-cont} and channel max-free energy $F^\beta_\infty[\n]$ in Lemma~\ref{lemma:max-free-energy-cont}. The continuity of $F_\alpha^\beta[\n]$ for $\alpha\in[\frac{1}{2},\infty)$ can be shown similarly using the continuity of $D_\alpha(\rho\Vert\sigma)$ in the first argument, as given in~\cite{BCGM24}.
\\
\noindent
(A5) \textit{Additivity}: For two channels $\n_{A_1' \to A_1}$ and $\mc{M}_{A_2' \to A_2}$ such that the output interaction Hamiltonian $\widehat{H}_{A_1A_2}$ is zero, the absolutely thermal channel can be written as $\mc{T}^\beta_{A_1'A_2'\to A_1A_2}=\mc{T}^\beta_{A_1' \to A_1} \otimes \mc{T}^\beta_{A_2' \to A_2}=:\mc{T}^\beta_1 \otimes \mc{T}^\beta_2$.  In this case, the additivity of the generalized free energy of channel follows from the additivity of the generalized channel divergence since
\begin{equation}
    \beta F^\beta_\alpha[\n \otimes \mc{M}] = D_\alpha[\n \otimes \mc{M}\|\mathcal{T}^\beta_1 \otimes \mathcal{T}^\beta_2].
\end{equation}
We will show that the sandwiched R\'enyi channel divergence is additive under tensor product with respect to the absolutely thermal channel, i.e. 
\begin{align}
    &D_\alpha[\n \otimes \mc{M}\|\mathcal{T}^\beta_1 \otimes \mathcal{T}^\beta_2]\\
    &= D_\alpha[\n\|\mathcal{T}^\beta_1] + D_\alpha[\mc{M}\| \mathcal{T}^\beta_2].
\end{align}
The proof of the inequality "$\ge$" follows from the definition of the sandwiched R\'enyi channel divergence,
\begin{align}
    &D_\alpha[\n \otimes \mc{M}\|\mathcal{T}^\beta_1 \otimes \mathcal{T}^\beta_2] \nonumber \\
    &= \sup_{\psi\in\St(RA_1'A_2')} D_\alpha(\n \otimes \mc{M} (\psi)\| \mathcal{T}^\beta_1 \otimes \mathcal{T}^\beta_2(\psi))\\
    & \ge  \sup_{\varphi\in\St(R_1A_1')\otimes\St(R_2A_2')} D_\alpha(\n \otimes \mc{M} (\varphi)\| \mathcal{T}^\beta_1 \otimes \mathcal{T}^\beta_2(\varphi)) \\
    &= D_\alpha[\n\|\mathcal{T}^\beta_1] + D_\alpha[\mc{M}\| \mathcal{T}^\beta_2],
\end{align}
where we have used the additivity of the sandwiched R\'enyi divergence of states under the tensor product for $\alpha \in [\frac{1}{2},\infty)$. 

To prove the inequality "$\le$", we use the following result from \cite[Lemma 38 and Proposition 41]{WBHK20}, for all $\alpha \in [1,\infty)$, for an arbitrary state $\rho_{RA'}$, a channel $\n_{A'\to A}$ , and a replacer channel $\mc{R}^{\tau}_{A'\to A}$ which replaces all its input states with a state $\tau$,
\begin{align}
    &D_\alpha(\n(\rho_{RA'})\|\sigma_{R} \otimes \tau_A) \\
    &\le D_\alpha[\n\|\mathcal{R}^\tau] + D_\alpha(\rho_{RA'}\|\sigma_{RA'}).
\end{align}
Let $\psi_{RA_1'A_2'}$ be an arbitrary pure state and states $ \rho_{R'A_1'}$ and $\sigma_{R'A_1'}$ be such that $R'=RA_2$, i.e., $\mc{M}(\psi_{RA_1'A_2'})=\rho_{R'A_1'}$ and $\mathcal{T}^\beta_{A_2' \to A_2}(\psi_{RA_1'A_2'})=\sigma_{R'A_1'}$,
\begin{align}
    &D_\alpha(\n \otimes \mc{M} (\psi_{RA_1'A_2'})\| \mathcal{T}^\beta_{A_1' \to A_1} \otimes \mathcal{T}^\beta_{A_2' \to A_2}(\psi_{RA_1'A_2'})) \\
    &\le D_\alpha[\n\|\mathcal{T}^\beta] + D_\alpha(\mc{M}(\psi_{RA_1'A_2'}) \| \mathcal{T}^\beta_{A_2' \to A_2}(\psi_{RA_1'A_2'}))\\
    & \le D_\alpha[\n\|\mathcal{T}^\beta] + D_\alpha[\mc{M}\|\mathcal{T}^\beta].
\end{align}
This implies the additivity of  $F_\alpha^\beta[\n]$ for $\alpha\in[1,\infty)$.
\\
\noindent
(A6) \textit{Joint quasi-convexity}: It is known that the sandwiched R\'enyi divergence is jointly convex for all $\alpha \in [\frac{1}{2},1]$ and jointly quasi-convex for all $\alpha \in (1,\infty)$. For $\alpha\in[\frac{1}{2},1]$, we have
\begin{align}
  D_\alpha&\left[\sum_x p_x \n^x\|\mathcal{T}^\beta\right] \nonumber\\  
    & = \sup_{\psi_{RA'}} D_\alpha\left(\sum_x p_x \n^x(\psi)\| \mathcal{T}^\beta(\psi)\right)\\
    & \le \sup_{\psi_{RA'}} \sum_x p_x D_\alpha( \n^x(\psi)\| \mathcal{T}^\beta(\psi)).
\end{align}
Using the definition of free energy of channels, we have 
\begin{align}
    F_\alpha^\beta\left[\sum_xp_x\n^x\right]\le\sum_xp_xF_\alpha^\beta[\n^x]~\text{for}~\alpha\in \left[\frac{1}{2},1\right].
\end{align}
Similarly, for $\alpha \in (1,\infty)$, using the quasi-convexity of sandwiched R\'enyi divergence, we have the quasi-convexity for the sandwiched R\'enyi free energy,
\begin{equation}
 F^\beta_\alpha \left[\sum_x p_x \n^x\right] \le\max_x F^\beta_\alpha [\n^x].
\end{equation}
\end{proof}
\subsection{Proof of Lemma \ref{lem:monotonicity_subpreserving} \label{app:proof_lem_monotonicity_subpreserving}}
\begin{lemma*}
    The sandwiched R\'enyi free energy $F^{\beta}_{\alpha}[\mc{N}]$, $\alpha\in[\frac{1}{2},\infty)$, is nonincreasing under the action of Gibbs-subpreserving superchannel. It remains invariant under the action of Gibbs preserving unitary superchannels.
\end{lemma*}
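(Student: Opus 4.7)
The plan is to derive both statements by composing two known monotonicity properties of the sandwiched Rényi channel divergence: monotonicity under superchannels (as per \cite[Theorem 1]{DGP24}), and anti-monotonicity in the second argument with respect to the Löwner order. Since $F^\beta_\alpha[\n] = \beta^{-1}D_\alpha[\n\|\mc{T}^\beta]$, the monotonicity under a Gibbs-subpreserving superchannel $\Theta^\beta$ follows once we can replace $\mc{T}^\beta$ in the second slot by a dominating object and then apply the superchannel data-processing inequality.

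Concretely, I would proceed as follows. First, apply the data-processing inequality for the sandwiched Rényi channel divergence under the superchannel $\Theta^\beta$, yielding
\begin{equation}
D_\alpha[\Theta^\beta(\n)\,\|\,\Theta^\beta(\mc{T}^\beta_{A'\to A})] \;\leq\; D_\alpha[\n\,\|\,\mc{T}^\beta_{A'\to A}].
\end{equation}
Second, use the Gibbs-subpreserving hypothesis $\Theta^\beta(\mc{T}^\beta_{A'\to A})\leq \mc{T}^\beta_{B'\to B}$, interpreted (as in the paper) pointwise in the Löwner order on all extended inputs $\psi_{RB'}$. Combined with the well-known anti-monotonicity of the state-level sandwiched Rényi divergence in its second argument (for $\alpha\in[\tfrac{1}{2},\infty)$), which transfers directly to the channel divergence via the supremum over input states, this gives
\begin{equation}
D_\alpha[\Theta^\beta(\n)\,\|\,\mc{T}^\beta_{B'\to B}] \;\leq\; D_\alpha[\Theta^\beta(\n)\,\|\,\Theta^\beta(\mc{T}^\beta_{A'\to A})].
\end{equation}
Chaining the two inequalities and multiplying by $\beta^{-1}$ yields $F^\beta_\alpha[\Theta^\beta(\n)]\leq F^\beta_\alpha[\n]$.

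For invariance under Gibbs preserving unitary superchannels, I would exploit invertibility. A unitary superchannel acts as $\Theta^\beta(\n)=\mc{V}\circ\n\circ\mc{U}$ with unitary $\mc{U},\mc{V}$, so it admits the unitary inverse $(\Theta^\beta)^{-1}(\mc{M})=\mc{V}^\dagger\circ\mc{M}\circ\mc{U}^\dagger$. If $\Theta^\beta(\mc{T}^\beta_{A'\to A})=\mc{T}^\beta_{B'\to B}$, then inverting gives $(\Theta^\beta)^{-1}(\mc{T}^\beta_{B'\to B})=\mc{T}^\beta_{A'\to A}$, so the inverse superchannel is also Gibbs preserving. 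Applying the monotonicity established above to both $\Theta^\beta$ and its inverse,
\begin{equation}
F^\beta_\alpha[\n] = F^\beta_\alpha[(\Theta^\beta)^{-1}(\Theta^\beta(\n))] \leq F^\beta_\alpha[\Theta^\beta(\n)] \leq F^\beta_\alpha[\n],
\end{equation}
forcing equality.

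The only delicate point, and hence the step I would double-check carefully, is the anti-monotonicity of $D_\alpha[\,\cdot\,\|\,\cdot\,]$ in its second argument when the second argument is a subnormalized CP map. At the state level, $\sigma\leq\sigma'$ implies $D_\alpha(\rho\|\sigma)\geq D_\alpha(\rho\|\sigma')$ for all $\alpha\in[\tfrac12,\infty)$ (via operator monotonicity of $x\mapsto x^{(1-\alpha)/\alpha}$ in the relevant range together with the integral representation/variational formulas). Lifting this to channels is immediate via the defining supremum: for any pure $\psi_{RB'}$, the inequality $\Theta^\beta(\mc{T}^\beta_{A'\to A})(\psi_{RB'})\leq \mc{T}^\beta_{B'\to B}(\psi_{RB'})$ gives the pointwise bound on divergences, and taking the supremum preserves it. Once this operator-order step is in place, the rest of the argument is a clean two-line chain.
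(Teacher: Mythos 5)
Your proof is correct and follows essentially the same route as the paper: data-processing of the sandwiched R\'enyi channel divergence under the superchannel, followed by anti-monotonicity of $D_\alpha$ in its second argument under the L\"owner order (the paper invokes \cite[Proposition 4]{MDS+13} for $\sigma\le\sigma'\Rightarrow D_\alpha(\rho\Vert\sigma')\le D_\alpha(\rho\Vert\sigma)$) applied pointwise to the subpreserving condition and lifted through the supremum over inputs. The only cosmetic difference is in the invariance claim, where the paper cites \cite[Proposition 1]{DGP24} while you give the standard self-contained argument via the Gibbs-preserving inverse of the unitary superchannel; both are fine.
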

\begin{proof}
Given the channel $\n_{A'\to A}$ and absolutely thermal channel $\mc{T}^\beta_{A'\to A}$,  the sandwiched R\'enyi relative entropy of a channel is monotonically non-increasing under the action of a superchannel $\Theta\in\SCh((A',A),(B',B))$ \cite{G19}. i.e. for all $\alpha \in [\frac{1}{2}, \infty)$,
    \begin{equation}\label{alpha_monoton_1}
       D_{\alpha}[\Theta (\n) \| \Theta( \mathcal{T}^{\beta})] \le D_{\alpha}[\n\| \mathcal{T}^{\beta}].
    \end{equation}
For positive semidefinite operators $\rho \neq 0$, $\sigma$ and $\sigma'$, if $\sigma' \ge \sigma$, we have $D_{\alpha} (\rho \| \sigma') \le D_{\alpha} (\rho \| \sigma)$ for $\alpha \in [\frac{1}{2}, \infty)$ (\cite[Proposition 4]{MDS+13}).
Therefore, under the action of a $\mathcal{T}^{\beta}$-subpreserving superchannel, we have
\begin{align*}
    &D_{\alpha}(\Theta(\n)(\rho_{RB'}) \| \mathcal{T}^\beta_{B' \to B}(\rho_{RB'})) \nonumber \\
    & \le D_{\alpha}(\Theta(\n)(\rho_{RB'}) \| \Theta(\mathcal{T}^\beta_{A'\to A})(\rho_{RB'})),
\end{align*}
for all states $\rho_{RB'}$. Taking the supremum over all states on both sides,
\begin{equation} \label{alpha_monoton_2}
     D_{\alpha}[\Theta(\n) \| \mathcal{T}^\beta_{B'\to B}] \le D_{\alpha}[\Theta(\n) \| \Theta(\mathcal{T}^\beta_{A'\to A})] .
\end{equation}
Using Eqs.~\eqref{alpha_monoton_1} and~\eqref{alpha_monoton_2}, we have 
\begin{equation}
     D_{\alpha}[\Theta(\n) \| \mathcal{T}^\beta_{B'\to B}] \le D_{\alpha} [\n \| \mathcal{T}^\beta_{A'\to A}],
\end{equation}
for all $\alpha \in [\frac{1}{2}, \infty)$. This implies the first part of the lemma. 

The invariance of $F_\alpha^\beta[\n]$ for $\alpha\in[\frac{1}{2},\infty)$ under Gibbs preserving unitary superchannel  follows directly from \cite[Proposition 1]{DGP24}.
\end{proof}
\subsection{Proof of Proposition~\ref{thm:unitary-free-energy}}\label{app:proof-thm-unitary-free-energy}
\begin{lemma}\label{lemma:convex_purestate}
    Given a state $\rho=\sum_ip_i\ket{\psi_i}\bra{\psi_i}$ where $\psi_i$ are pure states, the max-relative entropy is bounded as follows
    \begin{align}
D_\infty(\rho\Vert\sigma)\ge \ln(\max_ip_i\bra{\psi_i}\sigma^{-1}\ket{\psi_i}),\\
D_\infty(\rho\Vert\sigma)\le\ln(\sum_ip_i\bra{\psi_i}\sigma^{-1}\ket{\psi_i}).
    \end{align}
\end{lemma}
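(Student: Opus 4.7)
The plan is to work directly from the definition $D_\infty(\rho\Vert\sigma)=\ln\norm{\sigma^{-1/2}\rho\,\sigma^{-1/2}}_\infty$ and exploit the explicit convex decomposition $\rho=\sum_i p_i\ket{\psi_i}\bra{\psi_i}$ into positive rank-one summands. Both inequalities will then fall out from two elementary facts about $\|\cdot\|_\infty$ on positive operators.

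For the upper bound I would first write
\begin{equation}
\sigma^{-1/2}\rho\,\sigma^{-1/2}=\sum_i p_i\,\sigma^{-1/2}\ket{\psi_i}\bra{\psi_i}\sigma^{-1/2},
\end{equation}
a sum of positive semidefinite operators. Applying the triangle inequality for the operator norm (or equivalently, using $\norm{\sum_i A_i}_\infty\leq\sum_i\norm{A_i}_\infty$ with $A_i\geq 0$), and then using that for a rank-one positive operator $\norm{\ket{x}\bra{x}}_\infty=\langle x|x\rangle$, gives $\norm{\sigma^{-1/2}\ket{\psi_i}\bra{\psi_i}\sigma^{-1/2}}_\infty=\bra{\psi_i}\sigma^{-1}\ket{\psi_i}$. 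Taking $\ln$ on both sides yields $D_\infty(\rho\Vert\sigma)\leq\ln\bigl(\sum_i p_i\bra{\psi_i}\sigma^{-1}\ket{\psi_i}\bigr)$.

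For the lower bound I would use the dual formulation $D_\infty(\rho\Vert\sigma)=\inf\{\lambda\in\mathbb{R}:\rho\leq e^\lambda\sigma\}$, from which the monotonicity property $\rho'\leq\rho\Rightarrow D_\infty(\rho'\Vert\sigma)\leq D_\infty(\rho\Vert\sigma)$ follows immediately. Since each summand satisfies $p_i\ket{\psi_i}\bra{\psi_i}\leq\rho$, this gives $D_\infty(\rho\Vert\sigma)\geq D_\infty(p_i\ket{\psi_i}\bra{\psi_i}\Vert\sigma)=\ln\bigl(p_i\bra{\psi_i}\sigma^{-1}\ket{\psi_i}\bigr)$, where the last equality again uses the rank-one computation of the operator norm. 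Taking the maximum over $i$ yields the claim.

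There is no substantive obstacle here; the lemma is a direct consequence of rank-one algebra and the order properties of the operator norm, and both bounds follow without further optimization. The only minor point worth checking carefully is that $\supp(\rho)\subseteq\supp(\sigma)$ is inherited from each $\ket{\psi_i}$ lying in $\supp(\sigma)$ (otherwise both sides are $+\infty$), so the expressions $\bra{\psi_i}\sigma^{-1}\ket{\psi_i}$ are interpreted with the generalized inverse on $\supp(\sigma)$ as in the rest of the paper.
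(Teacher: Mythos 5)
Your proof is correct, but it takes a genuinely different route from the paper's. The paper writes $\sigma^{-1/2}\rho\,\sigma^{-1/2}=AA^\dagger$ with $A=[\ket{v_1}\ \ket{v_2}\ \dots]$, $\ket{v_i}=\sqrt{p_i}\,\sigma^{-1/2}\ket{\psi_i}$, passes to the Gram matrix $G=A^\dagger A$ with entries $G_{ij}=\sqrt{p_ip_j}\bra{\psi_i}\sigma^{-1}\ket{\psi_j}$ (which shares its nonzero spectrum with $AA^\dagger$), and reads off both bounds from the single elementary fact $\max_i G_{ii}\le\lambda_{\max}(G)\le\tr(G)$. You instead get the upper bound from the triangle inequality for $\norm{\cdot}_\infty$ applied to the sum of positive rank-one summands, and the lower bound from the operator-order monotonicity of $D_\infty$ in its first argument via the formulation $D_\infty(\rho\Vert\sigma)=\inf\{\lambda:\rho\le e^\lambda\sigma\}$, applied to $p_i\ket{\psi_i}\bra{\psi_i}\le\rho$. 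Both arguments are sound and equally elementary; yours avoids the Gram-matrix detour and makes the two bounds conceptually independent (subadditivity of the norm versus monotonicity under the operator order), while the paper's packages both as statements about the diagonal and trace of one matrix $G$, whose off-diagonal entries $\bra{\psi_i}\sigma^{-1}\ket{\psi_j}$ would in principle be available for sharper spectral estimates. Your closing remark about interpreting $\sigma^{-1}$ as the generalized inverse on $\supp(\sigma)$ is a point the paper leaves implicit.
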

\begin{proof}
The max-relative entropy $D_\infty(\rho\Vert\sigma)$ can be written as
    \begin{align}
        D_\infty(\rho\Vert\sigma)&=\ln\Vert\sum_ip_i\sigma^{-\frac{1}{2}}\ket{\psi_i}\bra{\psi_i}\sigma^{-\frac{1}{2}}\Vert_\infty\\
        &=\ln\Vert AA^\dagger\Vert_\infty\\
        &=\ln\lambda_{\max}(AA^\dagger),
    \end{align}
 where the matrix $A$ is given by $A=[\ket{v_1}~~\ket{v_2}~...]$ with the vectors $\ket{v_i}:=\sqrt{p_i}\sigma^{-\frac{1}{2}}\ket{\psi_i}$. The corresponding Gram matrix of matrix $A$ is given by $G=A^\dagger A$, such that $G_{ij}=\braket{v_i}{v_j}=\sqrt{p_ip_j}\bra{\psi_i}\sigma^{-1}\ket{\psi_j}$. Note that $AA^\dagger$ and $G$ have the same non-zero eigenvalues. Therefore, $D_\infty(\rho\Vert\sigma)=\ln(\lambda_{\max}(AA^\dagger))=\ln(\lambda_{\max}(G))$. Now, for any matrix $G$ we have the following property
 \begin{align}
     \max_{i}(G_{ii})\le\lambda_{\max}(G)\le \tr(G).
 \end{align}
 Using the fact that $G_{ij}=\braket{v_i}{v_j}=\sqrt{p_ip_j}\bra{\psi_i}\sigma^{-1}\ket{\psi_j}$, we have the proof of the stated lemma.
\end{proof}
\begin{proposition*}
    For $\alpha\in\{1,\infty\}$ and $|A'|=|A|$, the sandwiched R\'enyi free energy $F^{\beta}_{\alpha}[\mc{N}]$ of a quantum channel $\n_{A'\to A}$ is maximum if and only if $\n$ is a unitary channel. For all unitary channels $\mc{U}_{A'\to A}$, $F^{\beta}_{\alpha}[\mc{U}]=F^{\beta}_{\alpha}[\id_{A'\to A}]$.
\end{proposition*}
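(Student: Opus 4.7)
The plan is to first dispose of the $\alpha=\infty$ case, for which the main ingredient is already available from \cite{LMB25}, and then reduce the $\alpha=1$ case to it by showing that $F^\beta$ and $F^\beta_\infty$ coincide on unitary channels. For $\alpha=\infty$, Lemma~\ref{prop:unitary_free_energy} identifies $F^\beta_\infty[\mc{U}]=\beta^{-1}\ln\tr[(\gamma^\beta_A)^{-1}]$, a value independent of the particular unitary $\mc{U}$, which immediately yields $F^\beta_\infty[\mc{U}]=F^\beta_\infty[\id_{A'\to A}]$, while Corollary~A.4.2 of \cite{LMB25}, applied through the identification $F^\beta_\infty[\n]=\beta^{-1}\ln(1+R^\beta_T[\n])$, asserts that this value is the maximum of $F^\beta_\infty[\,\cdot\,]$ over quantum channels and is attained if and only if $\n$ is unitary.

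To handle $\alpha=1$, I would compute $F^\beta[\mc{U}]$ explicitly via Proposition~\ref{prop:mi}. The crucial simplification is that a unitary channel sends any pure bipartite input $\psi_{RA'}$ to a pure bipartite output $\mc{U}(\psi_{RA'})$ on $RA$, so the associated mutual information satisfies $I(R;A)_{\mc{U}(\psi)}=2S(\mc{U}(\psi_{A'}))$. Inserting this into Proposition~\ref{prop:mi} and using $F^\beta(\rho)=\beta^{-1}[-S(\rho)-\tr(\rho\ln\gamma^\beta_A)]$ collapses the supremum to
\[
\beta F^\beta[\mc{U}]=\sup_{\rho_A\in\St(A)}\bigl[S(\rho_A)+\beta\langle\widehat{H}_A\rangle_{\rho_A}\bigr]+\ln Z^\beta_A,
\]
where the reparameterization $\rho_A=U\psi_{A'}U^\dagger$ sweeps over every state of $A$. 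The bracketed expression is the standard thermodynamic functional, maximized at the Boltzmann-like operator $\rho_A\propto e^{\beta\widehat{H}_A}$ with value $\ln Z^{-\beta}_A:=\ln\tr[e^{\beta\widehat{H}_A}]$, so $F^\beta[\mc{U}]=\beta^{-1}\ln(Z^\beta_A Z^{-\beta}_A)=\beta^{-1}\ln\tr[(\gamma^\beta_A)^{-1}]=F^\beta_\infty[\mc{U}]$. This also proves $F^\beta[\mc{U}]=F^\beta[\id_{A'\to A}]$ since the value is $\mc{U}$-independent.

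With the coincidence $F^\beta[\mc{U}]=F^\beta_\infty[\mc{U}]$ secured, for any channel $\n_{A'\to A}$ the chain
\[
F^\beta[\n]\;\le\;F^\beta_\infty[\n]\;\le\;F^\beta_\infty[\mc{U}]\;=\;F^\beta[\mc{U}]
\]
follows by assembling Lemma~\ref{lem:free-ordering}, the $\alpha=\infty$ characterization, and the computation above, establishing that unitary channels maximize $F^\beta[\,\cdot\,]$. Conversely, if $F^\beta[\n]=F^\beta[\mc{U}]$, the sandwich collapses to equalities, forcing $F^\beta_\infty[\n]=F^\beta_\infty[\mc{U}]$, which by the $\alpha=\infty$ characterization requires $\n$ to be unitary. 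The main obstacle in the whole argument is the explicit evaluation of $F^\beta[\mc{U}]$: everything hinges on the purity of $\mc{U}(\psi_{RA'})$, which collapses $I(R;A)$ to $2S(A)$ and reduces Proposition~\ref{prop:mi} to a scalar thermodynamic optimization over states of $A$.
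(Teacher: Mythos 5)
Your proof is correct, but it takes a genuinely different route from the paper's. The paper's appendix proof works directly at the level of the optimized state quantity $D_\alpha(\rho_{RA}\Vert\rho_R\otimes\gamma^\beta_A)$: joint quasi-convexity of $D_\alpha$ shows this is maximized on pure $\rho_{RA}$ (hence on unitary channels), and the ``only if'' direction comes from the strictness of that convexity inequality --- equality in joint convexity of $D$ forces purity for $\alpha=1$ (citing Ruskai), and a Gram-matrix bound on $D_\infty$ of a mixture of pure states (their Lemma~\ref{lemma:convex_purestate}) handles $\alpha=\infty$. You instead evaluate $F^\beta[\mc{U}]$ in closed form via Proposition~\ref{prop:mi}, exploiting that unitary outputs are pure so $I(R;A)=2S(A)$, obtaining $F^\beta[\mc{U}]=\beta^{-1}\ln\tr[(\gamma^\beta_A)^{-1}]=F^\beta_\infty[\mc{U}]$, and then sandwich $F^\beta[\n]\le F^\beta_\infty[\n]\le F^\beta_\infty[\mc{U}]=F^\beta[\mc{U}]$ using Lemma~\ref{lem:free-ordering} and the $\alpha=\infty$ characterization from \cite{LMB25}; the converse falls out because equality collapses the sandwich. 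All the individual steps check out: the scalar optimization $\sup_{\rho_A}[S(\rho_A)+\beta\langle\widehat{H}_A\rangle_{\rho_A}]=\ln\tr[e^{\beta\widehat{H}_A}]$ is correct (it is $-\inf_\rho D(\rho\Vert e^{\beta\widehat{H}_A})$, attained at the full-rank state $\rho_A\propto e^{\beta\widehat{H}_A}$, which is reachable through the reparameterization since $|R|=|A'|$ suffices for purification), and the Gibbs-preservation of pre-processing-only superchannels is not even needed because the value is manifestly $\mc{U}$-independent. What each approach buys: the paper's argument is self-contained (it does not need the ``only if'' of the $\alpha=\infty$ case as external input for $\alpha=1$) and its ``if'' direction extends to all $\alpha\in[1,\infty)$; yours produces the explicit maximal value and the stronger observation that $F^\beta_\alpha[\mc{U}]$ is the \emph{same} number for $\alpha=1$ and $\alpha=\infty$ (consistent with Lemma~\ref{lemma:free-max-energy}), but the ``only if'' for $\alpha=1$ inherits its entire force from the $\alpha=\infty$ characterization in \cite{LMB25}, so it would not survive if one wanted a proof independent of that reference.
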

\begin{proof}
Given a state $\rho_{RA}$ and its decomposition into pure states $\rho_{RA}=\sum_xp_x\psi^x_{RA}$, we use the joint quasi-convexity of the sandwiched R\'enyi entropy to observe that for $\alpha\in[1,\infty)$
\begin{align}\label{eq:joint-quasi-convex}
D_\alpha(\sum_xp_x\psi^x_{RA}\Vert\sum_xp_x\rho^x_{R}\otimes\gamma^\beta_A)\le \max_x D_\alpha(\psi^x_{RA}\Vert\rho^x_{R}\otimes\gamma^\beta_A),
\end{align}
where $\rho^x_{R}=\tr_A\psi_{RA}^x$.  Therefore, $D_{\alpha}(\rho_{RA}\Vert\rho_R\otimes\gamma_A)$ achieves the maximum value, for $\alpha\in[1,\infty)$, if $\rho_{RA}$ is a pure state. This implies that $D_\alpha[\n\Vert\mc{T}^\beta]$ achieves the maximum value if $\n$ is a unitary channel.

We will show that the "only if" condition is satisfied for $\alpha= 1$ and $\infty$ by proving that the equality in Eq.~\eqref{eq:joint-quasi-convex} holds if and only if $\rho_{RA}$ is a pure state.

For $\alpha= 1$, relative entropy is jointly convex, and therefore for $\rho=\sum_i p_i\psi_i$, where $\psi_i$ are pure, we have   
\begin{align}
    D(\rho\Vert\sigma)\le\sum_ip_iD(\psi_i\Vert\sigma).
\end{align}
The equality in the above holds if and only if $\rho$ is pure~\cite{Rus02}.

For the case of $\alpha\to \infty$, let the equality for the quasi-convexity condition be satisfied, i.e.  $D_\infty(\rho\Vert\sigma)=\max_iD_\infty(\psi_i\Vert\sigma)$, then using Lemma~\ref{lemma:convex_purestate},
\begin{align}
  \max_i \ln(\bra{\psi_i}\sigma^{-1}\ket{\psi_i})&\le\ln(\sum_ip_i\bra{\psi_i}\sigma^{-1}\ket{\psi_i})\\
  \implies  \max_i\bra{\psi_i}\sigma^{-1}\ket{\psi_i}&\le\sum_ip_i\bra{\psi_i}\sigma^{-1}\ket{\psi_i}.
\end{align}
Since the weighted average cannot be greater than the largest element of the set, the above condition is satisfied if and only if $\rho$ is a pure state.
  
\end{proof}
\subsection{Proof of Theorem~\ref{thm:dist_cost}}\label{app:proof_thm_dist_cost}
\begin{theorem*}
For any error $\varepsilon\in[0,1]$ and a given resource channel $(\n,\mathcal{T}^\beta)$, the single-shot athermality distillation and formation are proportional to the $\varepsilon$-hypothesis-testing free energy and the $\varepsilon$-max-free energy of the channel $\n$, respectively,
\begin{align}
    \mathrm{Dist}^\varepsilon (\n,\mathcal{T}^\beta)&=\frac{1}{2}D_{H}^{\varepsilon^2}[\n\Vert\mathcal{T}^\beta],\\
     \mathrm{Cost}^\varepsilon (\n,\mathcal{T}^\beta)&=\frac{1}{2}D_{\infty}^\varepsilon[\n\Vert\mathcal{T}^\beta],
\end{align}
$F^{\beta,\varepsilon}_H[\n]=\frac{2}{\beta}\mathrm{Dist}^{\sqrt{\varepsilon}} (\n,\mathcal{T}^\beta)$ and $ F^{\beta,\varepsilon}_{\infty}[\n]=\frac{2}{\beta}\mathrm{Cost}^\varepsilon (\n,\mathcal{T}^\beta)$.
\end{theorem*}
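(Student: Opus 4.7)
The plan is to reduce both tasks to state-level resource conversions on Choi operators, where the target Choi pair is $(\Phi_m,\pi_{m^2})$ — the Choi state of $\id_m$ is a maximally entangled state of rank $m$, and the Choi state of $\mc{R}^\pi$ is the maximally mixed state on $m^2$ dimensions. This makes the factor $\tfrac{1}{2}$ transparent: every logarithmic unit of the target channel dimension counts twice at the Choi level, so the relevant state divergence evaluates to $2\ln m$.

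\textbf{Distillation.} For achievability, I would start from an optimal input state $\psi_{RA'}$ and test operator $\Lambda_{RA}$ for $D_H^\varepsilon[\n\|\mc{T}^\beta]$, so that $\tr[\Lambda\,\n(\psi_{RA'})]\ge 1-\varepsilon$ and $\tr[\Lambda\,\mc{T}^\beta(\psi_{RA'})]=\mathrm{e}^{-D_H^\varepsilon[\n\|\mc{T}^\beta]}$. From $(\psi,\Lambda)$ I would construct a GPSC with preprocessing $\mc{P}_{B'\to A'}$ that feeds the $A'$-half of $\psi$ coherently with the channel input, and a postprocessing $\mc{Q}_{RA\to B}$ of measure-and-prepare type that, using the test $\Lambda$, outputs a coherent $m$-dimensional state close to $\Phi_m$ on success and mixes to $\pi_B$ on failure — analogous to position-based decoding in the one-shot entanglement-assisted communication literature. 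Gibbs preservation $\Theta(\mc{T}^\beta)=\mc{R}^\pi$ can then be enforced because $\tr[\Lambda\,\mc{T}^\beta(\psi)]\le 1/m^2$ whenever $2\ln m\le D_H^\varepsilon[\n\|\mc{T}^\beta]$, and any residual deviation is absorbed into a final convex mixing with $\mc{R}^\pi$, paying only an additive error in $\varepsilon$. For the converse, given any GPSC $\Theta$ with $\tfrac12\|\Theta(\n)-\id_m\|_\diamond\le\varepsilon$, monotonicity of $D_H^\varepsilon$ under superchannels (\cite{DGP24}) gives $D_H^\varepsilon[\n\|\mc{T}^\beta]\ge D_H^\varepsilon[\Theta(\n)\|\mc{R}^\pi]$; evaluating on $\Phi_m$ with the test $\Lambda=\Phi_m$, and using $\tr[\Phi_m\,\Theta(\n)(\Phi_m)]\ge 1-\varepsilon$ together with $\tr[\Phi_m\,\mc{R}^\pi(\Phi_m)]=1/m^2$, yields the lower bound $2\ln m$.

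\textbf{Formation.} For achievability, pick $\mc{M}\in\mc{B}^\varepsilon[\n]$ attaining $D_\infty^\varepsilon[\n\|\mc{T}^\beta]$. By Lemma~\ref{prop:max-free-energy-choi}, $\Phi^{\mc{M}}_{RA}\le\mathrm{e}^{D_\infty^\varepsilon[\n\|\mc{T}^\beta]}\,\pi_R\otimes\gamma^\beta_A$, i.e., $\Phi^{\mc{M}}$ admits a Gibbs-respecting dilation from a maximally entangled resource of rank $m$ whenever $2\ln m\ge D_\infty^\varepsilon[\n\|\mc{T}^\beta]$. This dilation translates into a GPSC whose preprocessing couples the $B'$-input of $\id_m$ with an entangled reference and whose postprocessing, a Gibbs-preserving channel, reproduces $\mc{M}$. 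For the converse, given $\Theta(\id_m)\in\mc{B}^\varepsilon[\n]$, $D_\infty^\varepsilon[\n\|\mc{T}^\beta]\le D_\infty[\Theta(\id_m)\|\mc{T}^\beta]=D_\infty[\Theta(\id_m)\|\Theta(\mc{R}^\pi)]\le D_\infty[\id_m\|\mc{R}^\pi]=2\ln m$, where the final equality is Lemma~\ref{prop:unitary_free_energy} with a fully degenerate Hamiltonian. The stated relations $F_H^{\beta,\varepsilon}[\n]=\frac{2}{\beta}\mathrm{Dist}^\varepsilon(\n,\mc{T}^\beta)$ and $F_\infty^{\beta,\varepsilon}[\n]=\frac{2}{\beta}\mathrm{Cost}^\varepsilon(\n,\mc{T}^\beta)$ then follow by the definitions.

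The main obstacle will be the distillation achievability: turning a scalar hypothesis test into a Gibbs-preserving superchannel whose full diamond distance to $\id_m$ is controlled requires a coherent (not classical) rounding scheme and a careful accounting of the Gibbs-preservation constraint. A position-based coding argument at the Choi level, combined with a final uniform mixing to enforce exact Gibbs preservation, seems the most promising route; the key subtlety is that the diamond norm is an adversarial figure of merit whereas the hypothesis test certifies only an average, so one must invoke a worst-case reduction analogous to those used in one-shot entanglement distillation and quantum rebound-capacity proofs.
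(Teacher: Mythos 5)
Your two converse arguments are sound and in fact cleaner than the paper's own: for distillation you invoke monotonicity of $D_{\rm H}^\varepsilon$ under superchannels and evaluate the test $\Lambda=\Phi_m$ at the Choi input, and for formation you chain $D_\infty^\varepsilon[\mc{N}\|\mc{T}^\beta]\le D_\infty[\Theta(\id_m)\|\Theta(\mc{R}^\pi)]\le D_\infty[\id_m\|\mc{R}^\pi]=2\ln m$; both yield exactly the right bounds. The formation achievability also matches the paper in substance: from $\Phi^{\mc{E}}_{RA}\le m^2\,\pi_R\otimes\gamma^\beta_A$ one writes $\mc{T}^\beta=\frac{1}{m^2}\mc{E}+\frac{m^2-1}{m^2}\mc{E}'$ with $\mc{E}':=\frac{1}{m^2-1}(m^2\mc{T}^\beta-\mc{E})$ a genuine channel, and builds a superchannel sending $\id_m\mapsto\mc{E}$ and every nonidentity Weyl channel $\mc{W}^i\mapsto\mc{E}'$ (a Bell measurement on a maximally entangled probe identifies perfectly which Weyl unitary acted, since their Choi states are orthogonal), so that $\Theta(\mc{R}^\pi)=\frac{1}{m^2}\mc{E}+\frac{m^2-1}{m^2}\mc{E}'=\mc{T}^\beta$. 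Your ``Gibbs-respecting dilation'' is this construction in disguise, though it should be spelled out.

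The genuine gap is the distillation achievability, which you correctly flag as your main obstacle but then propose to attack with position-based decoding and a coherent rounding scheme; that route is both unnecessary and not obviously completable, precisely because you set out to approximate $\id_m$ coherently and must then fight the adversarial nature of the diamond norm. The paper's construction dissolves the problem. Take the optimal pair $(\psi,\Lambda)$ for $D_{\rm H}^\varepsilon(\mc{N}(\psi)\|\mc{T}^\beta(\psi))$ and define the measure-and-prepare superchannel
\begin{equation}
\Theta^\Lambda_\psi(\mc{N})=\tr[\Lambda\,\mc{N}(\psi)]\,\id_m+\tr[(\mathbbm{1}-\Lambda)\,\mc{N}(\psi)]\,\frac{1}{m^2-1}\sum_{i=1}^{m^2-1}\mc{W}^i,
\end{equation}
i.e., feed $\psi$ into the tested channel, measure $\{\Lambda,\mathbbm{1}-\Lambda\}$ on its output, and on the actual input apply $\id_m$ on success and a uniformly random nonidentity Weyl unitary on failure. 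Because $\id_m$ and $\frac{1}{m^2-1}\sum_{i\ge 1}\mc{W}^i$ have orthogonal Choi supports, the conversion distance is exactly $1-\tr[\Lambda\,\mc{N}(\psi)]\le\varepsilon$; there is no average-to-worst-case reduction to perform. Gibbs preservation reduces to the single scalar condition $\tr[\Lambda\,\mc{T}^\beta(\psi)]=1/m^2$, attainable for every $m$ with $2\ln m\le D_{\rm H}^\varepsilon[\mc{N}\|\mc{T}^\beta]$ by mixing $\Lambda$ with $\mathbbm{1}$ to raise $\tr[\Lambda\,\mc{T}^\beta(\psi)]$ without violating $\tr[\Lambda\,\mc{N}(\psi)]\ge 1-\varepsilon$. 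With this replacement your proof closes; as written, the achievability direction of the distillation claim is not established.
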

\begin{proof}
A Gibbs preserving superchannel (GPSC) $\Theta_{\psi}^\Lambda$, defined through a pure state $\psi$ of composite system (reference and channel input) and an effect operator (element of a POVM) $0\le \Lambda\le \mathbbm{1}$, that acts on the resource $(\n,\mathcal{T}^\beta)$ to yield $(\id_m,\mathcal{R}^\pi)$ can be written as the following preparation,
\begin{align}
    \Theta^\Lambda_{\psi}(\n)=\tr(\n(\psi)\Lambda)\id+\tr(\n(\psi)(\mathbbm{1}_A-\Lambda))\frac{1}{m^2-1}\id^{\perp}.
\end{align}
The Gibbs preserving constraint on the superchannel, i.e. $\Theta_\psi^\Lambda(\mathcal{T}^\beta)=\mathcal{R}^\pi$, can be written as $\tr(\mathcal{T}^\beta(\psi)\Lambda)=\frac{1}{m^2}$, and the conversion distance $d_{\Theta_\psi^\Lambda}((\n,\mathcal{T}^\beta)\rightarrow(\id,\mathcal{R}^\pi))$ is given by $\sqrt{1-\tr(\n(\psi)\Lambda)}$. Recalling the definition of the $\varepsilon$-single-shot distillation, we have the following:
\begin{align}
        &\mathrm{Dist}^\varepsilon (\n,\mathcal{T}^\beta)\nonumber \\
        &=\sup_{m,\Lambda,\psi}\{\ln m:~  \sqrt{1-\tr(\n(\psi)\Lambda)}\le\varepsilon,~\tr(\mathcal{T}^\beta(\psi)\Lambda)=\frac{1}{m^2}\}\\
        &=\sup_{\psi,\Lambda}\left\{-\frac{1}{2}\ln\tr(\mathcal{T}^\beta(\psi)\Lambda) :~  1-\tr(\n(\psi)\Lambda)\le\varepsilon^2\right\}\\
        &=\sup_{\psi}\left[-\inf_{\Lambda}\left\{\frac{1}{2}\ln\tr(\mathcal{T}^\beta(\psi)\Lambda) :~  1-\tr(\n(\psi)\Lambda)\le\varepsilon^2\right\}\right]\\
&=\sup_{\psi}\frac{1}{2}D_H^{\varepsilon^2}(\n(\psi)\Vert\mathcal{T}^\beta(\psi))\\
         &=\frac{1}{2}D_H^{\varepsilon^2}[\n\Vert \mathcal{T}^\beta].
    \end{align}

Now, to derive the $\varepsilon$-single-shot cost, we first calculate the conversion distance: 
 \begin{align}
      &  d_{\mathrm{GP}}((\id_m\mathcal{R}^\pi) \rightarrow(\n,\mathcal{T}^{\beta}))\nonumber\\
        &\qquad=\min_{\Theta}\left\{P[\mc{N},\Theta(\id_m)]~:~\Theta(\mathcal{R}^{\pi})=\mathcal{T}^{\beta}\right\}\\
         &\qquad=\min_{\Theta}\left\{P[\mc{N},\Theta(\id_m)]~:\right.\nonumber\\ 
 &\hspace{2.5cm} \left.\mathcal{T}^{\beta}= \frac{1}{m^2}\Theta(\id_m)+\frac{1}{m^2}\Theta(\id_m^\perp)\right\}\\
         &\qquad=\min_{\mc{E}}\left\{P[\mc{N},\mc{E}]~:~m^2\mathcal{T}^{\beta}\ge \mc{E}\right\}\\
         &\qquad=\min_{\mc{E}}\left\{P[\mc{N},\mc{E}]~:~2\ln m\ge D_{\infty} [\mc{E}\Vert\mc{T}^\beta]\right\}.
    \end{align}
where we have replaced $\mc{E}=\Theta(\id_m)$. Note that $\mc{E}$ can be any channel such that $m^2\mc{T}^\beta\ge \mc{E}$, and is not constrained by the Gibbs preserving property of $\Theta$. This is because the quantity that we are optimizing is a function of the variable $\Theta(\id_m)$, and the range of this variable is fully captured by the constraint $m^2\mc{T}^\beta\ge \Theta(\id_m)$. The $\varepsilon$-single-shot cost can be written as
   \begin{align}
        &\mathrm{Cost}^\varepsilon (\n,\mathcal{T}^\beta)\nonumber \\
        &=\inf_{m}\left\{\ln m:~d_{\mathrm{GP}}((\id_m\mathcal{R}^\pi) \rightarrow(\n,\mathcal{T}^{\beta})) \le\varepsilon\right\}\\
        &=\inf_{m}\{\ln m:~ P[\mc{N},\mc{E}]\le\varepsilon,~2\ln m\ge D_{\infty} [\mc{E}\Vert\mc{T}^\beta]\}\\
        &=\inf_{\mc{E}}\left\{\frac{1}{2}D_{\max}[\mathcal{E}\Vert\mathcal{T}^\beta]:~P[\mc{N},\mc{E}] \le\varepsilon\right\}\\
         &=\frac{1}{2}D^{\varepsilon}_{\max}[\n\Vert\mathcal{T}^\beta].
    \end{align}
\end{proof}
\subsection{SDP for the max-free energy}\label{app:sdp}
The max-free energy $F_{\infty}^\beta[\n]$ of a quantum channel $\n_{A'\to A}$ is half of the logarithm of the optimal value of the following semidefinite program (SDP) (strong duality):
\begin{align*}
& \underline{\textit{Primal}}: \\[6pt]
& \begin{cases}
\text{minimize} & \lambda\in\mathbbm{R} \\[6pt]
\text{subject to} & \Phi^{\mc{N}}_{RA} \le \lambda(\pi_R \otimes \gamma_A^\beta), \\[6pt]
& \lambda \ge 0.
\end{cases}
\\[12pt]
& \underline{\textit{Dual}}: \\[6pt]
& \begin{cases}
\text{maximize} & \operatorname{tr}(\Phi^\n_{RA} X_{RA}) \\[6pt]
\text{subject to} & \operatorname{tr}((\pi_A \otimes \gamma_A^\beta) X_{RA}) \le 1, \\[6pt]
& X_{RA} \ge 0.
\end{cases}
\end{align*}
It follows from the fact that the max-relative entropy between channels is an SDP where strong duality holds~\cite{LKDW18}, i.e., optimal value of the primal and dual problems are the same.
\subsection{Proof of Lemma~\ref{prop:unitary_free_energy} \label{proof:prop:unitary_free_energy}}
\begin{lemma*}
The max-free energy $F^{\beta}_{\infty}[\mc{U}]$ of a unitary quantum channel $\mc{U}_{A'\to A}$ is
    \begin{equation}
        F^{\beta}_{\infty}[\mc{U}]= \beta^{-1}\ln \tr\left[{(\gamma^{\beta}_A)}^{-1}\right].
    \end{equation}
    If the Hamiltonian of $A$ is trivial or in general $\widehat{H}_A=c\mathbbm{1}_A$ for $c\in\mathbbm{R}$, then $F^{\beta}_{\infty}[\mc{U}]=2\beta^{-1}\ln|A|$.
\end{lemma*}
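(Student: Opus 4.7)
The plan is to combine Lemma~\ref{prop:max-free-energy-choi} with the observation that the Choi state of a unitary channel is pure, reducing the max-relative entropy to an explicit inner product.

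First, I would invoke Eq.~\eqref{eq:max-c-1} of Lemma~\ref{prop:max-free-energy-choi} to write
\begin{equation}
F^\beta_\infty[\mc{U}] = \beta^{-1} D_\infty(\Phi^{\mc{U}}_{RA}\,\|\,\pi_R\otimes\gamma^\beta_A),
\end{equation}
noting that since $|A'|=|A|$, the Choi state $\Phi^{\mc{U}}_{RA}$ is the pure maximally entangled state $|\Phi^{\mc{U}}\rangle = (\mathbbm{1}_R\otimes U)|\Phi\rangle_{RA'}$ with $|\Phi\rangle_{RA'} = \frac{1}{\sqrt{|A|}}\sum_i |ii\rangle_{RA'}$. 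For any pure state $|\psi\rangle\langle\psi|$ and positive operator $\sigma$ with $\mathrm{supp}(|\psi\rangle\langle\psi|)\subseteq\mathrm{supp}(\sigma)$, the operator $\sigma^{-1/2}|\psi\rangle\langle\psi|\sigma^{-1/2}$ is rank one with nonzero eigenvalue $\langle\psi|\sigma^{-1}|\psi\rangle$, so
\begin{equation}
D_\infty(|\psi\rangle\langle\psi|\,\|\,\sigma) = \ln\langle\psi|\sigma^{-1}|\psi\rangle.
\end{equation}

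Next, I would compute the inner product directly. Since $\pi_R = \mathbbm{1}_R/|A|$, one has $(\pi_R\otimes\gamma^\beta_A)^{-1} = |A|\,\mathbbm{1}_R\otimes(\gamma^\beta_A)^{-1}$, and
\begin{align}
\langle\Phi^{\mc{U}}|&(\pi_R\otimes\gamma^\beta_A)^{-1}|\Phi^{\mc{U}}\rangle \notag \\
&= \frac{|A|}{|A|}\sum_{i,j}\langle i|j\rangle_R\,\langle i|U^\dagger(\gamma^\beta_A)^{-1}U|j\rangle \notag \\
&= \tr\!\left[U^\dagger(\gamma^\beta_A)^{-1}U\right] = \tr\!\left[(\gamma^\beta_A)^{-1}\right],
\end{align}
where the last equality is cyclicity of the trace. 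Combining the three displays yields $F^\beta_\infty[\mc{U}] = \beta^{-1}\ln\tr[(\gamma^\beta_A)^{-1}]$, establishing the first claim.

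For the special case $\widehat{H}_A = c\mathbbm{1}_A$, the thermal state degenerates to $\gamma^\beta_A = \pi_A = \mathbbm{1}_A/|A|$, so $(\gamma^\beta_A)^{-1} = |A|\,\mathbbm{1}_A$ and $\tr[(\gamma^\beta_A)^{-1}] = |A|^2$, which gives $F^\beta_\infty[\mc{U}] = 2\beta^{-1}\ln|A|$. There is no real obstacle here: the argument is a direct computation once Lemma~\ref{prop:max-free-energy-choi} is in hand. The only point requiring a moment of care is keeping the normalization of $|\Phi\rangle_{RA'}$ (and hence of $\pi_R$) consistent so that the $|A|$ factors cancel correctly.
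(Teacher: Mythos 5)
Your proof is correct and follows essentially the same route as the paper's: both reduce $D_\infty[\mc{U}\Vert\mc{T}^\beta]$ to the max-relative entropy of the (pure) Choi state against $\pi_R\otimes\gamma^\beta_A$, apply $D_\infty(\psi\Vert\sigma)=\ln\langle\psi|\sigma^{-1}|\psi\rangle$, and evaluate the inner product via cyclicity of the trace. The normalization bookkeeping you flag is handled identically in the paper, so there is nothing to add.
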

\begin{proof}
Using the property of the max-relative entropy of quantum channels
\begin{equation}\label{eq:max-relequivalence}
\beta F_{\infty}^\beta [\mc{U}] = D_{\infty}[\mathcal{U} \| \mathcal{T}^{\beta}] = D_{\infty}(\Phi^{\mathcal{U}} \|\Phi^{\mathcal{T}^\beta}),
\end{equation}
where, $\Phi^\mc{U}$ is a pure state and $\Phi^{\mc{T}^\beta}=\pi_R\otimes\gamma_{\beta}$. Given a pure state $\psi$, we have $D_\infty(\psi\Vert\sigma)=\ln\bra{\psi}\sigma^{-1}\ket{\psi}$ for a positive operator $\sigma$. Let $ \mathcal{U}(\rho) = U \rho U^\dagger$, then
\begin{align}
\beta F_{\infty}^\beta [\mc{U}]  &= D_{\infty}(\Phi^{\mathcal{U}} \|\Phi^\mathcal{T}) \\
 &= \ln \left(\frac{1}{d} \sum_{j} 
(\bra{j} \otimes \bra{j} U^\dagger) 
\left( \frac{1}{d} \mathbbm{1}_R\otimes \gamma_\beta \right)^{-1} \right. \nonumber \\
& \qquad \qquad \left. \sum_{i} (\ket{i} \otimes U \ket{i}) \right) \nonumber \\
&= \ln \left( \sum_{i,j} 
\delta_{ji}  \bra{j} U^\dagger \gamma_\beta^{-1} U \ket{i} \right) \nonumber \\
&= \ln \left( \tr \left[ \gamma_\beta^{-1} \right] \right).
\end{align}
If $\widehat{H}_A=c \mathbbm{1}_A$, $ \tr \left[ (\gamma^\beta_A)^{-1} \right] = \tr (|A| \mathbbm{1}_A) = |A|^2$ and $F^{\beta}_{\infty}[\mc{U}]=2\beta^{-1}\ln|A|$.
\end{proof}

\subsection{Free energy of the golden unit}\label{app:lemma-free-max-energy}
\begin{lemma}\label{lemma:free-max-energy}
Given a unitary channel $\mc{U}_{A'\to A}$, the minimum value of its max-free energy $F^\beta_\infty[\mc{U}]$ is achieved for the output with trivial Hamiltonian, $\widehat{H}_A=c\mathbbm{1}_A$. In particular,
    \begin{align}
\min_{\widehat{H}_A}F^\beta_\infty[\mc{U}]=2\beta^{-1}\ln|A|.
    \end{align}
\end{lemma}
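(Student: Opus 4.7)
The starting point is the closed form from Lemma~\ref{prop:unitary_free_energy}, which states that for any unitary channel $\mc{U}_{A'\to A}$,
\begin{equation}
F^{\beta}_{\infty}[\mc{U}] = \beta^{-1}\ln\tr\!\left[(\gamma^{\beta}_A)^{-1}\right].
\end{equation}
Since this expression depends on $\widehat{H}_A$ only through the spectrum of $\gamma^{\beta}_A$, the optimization reduces to a scalar problem over eigenvalues. My plan is to diagonalize $\widehat{H}_A$ as $\widehat{H}_A = \sum_{i=1}^{d} E_i \ket{i}\bra{i}$ with $d = |A|$, write out $\tr[(\gamma^{\beta}_A)^{-1}]$ explicitly in terms of the $E_i$'s, and then invoke a standard inequality to establish the lower bound, finally checking the equality case.

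\textbf{Key steps.} Writing $Z^{\beta} = \sum_i e^{-\beta E_i}$, one directly computes
\begin{equation}
\tr\!\left[(\gamma^{\beta}_A)^{-1}\right] = Z^{\beta}\sum_{i=1}^{d} e^{\beta E_i} = \left(\sum_{i=1}^{d} e^{-\beta E_i}\right)\!\left(\sum_{j=1}^{d} e^{\beta E_j}\right).
\end{equation}
I would then apply the Cauchy--Schwarz inequality to the vectors with components $e^{-\beta E_i/2}$ and $e^{\beta E_i/2}$:
\begin{equation}
\left(\sum_{i} e^{-\beta E_i}\right)\!\left(\sum_{j} e^{\beta E_j}\right) \geq \left(\sum_{i} e^{-\beta E_i/2}\cdot e^{\beta E_i/2}\right)^{\!2} = d^{2}.
\end{equation}
Taking the logarithm and multiplying by $\beta^{-1}$ gives $F^{\beta}_{\infty}[\mc{U}] \geq 2\beta^{-1}\ln d$. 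The Cauchy--Schwarz equality condition requires the two vectors to be proportional, i.e., $e^{-\beta E_i/2} \propto e^{\beta E_i/2}$ for all $i$, which forces all $E_i$ to be equal, hence $\widehat{H}_A = c\mathbbm{1}_A$ for some $c \in \mathbbm{R}$. At that point Lemma~\ref{prop:unitary_free_energy} already yields the matching upper bound $F^{\beta}_{\infty}[\mc{U}] = 2\beta^{-1}\ln d$, completing the proof.

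\textbf{Main obstacle.} There is essentially no technical obstacle here since the quantity has already been reduced to a function of the eigenvalues of $\widehat{H}_A$ by Lemma~\ref{prop:unitary_free_energy}; the only subtlety is choosing a clean inequality. One could alternatively use AM--GM on the product $e^{-\beta E_i}\cdot e^{\beta E_j}$ summed over $i,j$, or note that the partition function $Z^{\beta}$ and its ``inverse-temperature mirror'' $\widetilde{Z}^{\beta} := \sum_j e^{\beta E_j}$ satisfy $Z^{\beta}\widetilde{Z}^{\beta} \geq d^2$ with equality iff the spectrum is flat; Cauchy--Schwarz is the most transparent route. Care should be taken to point out that the bound is invariant under the translation $\widehat{H}_A \to \widehat{H}_A + c\mathbbm{1}_A$, which is consistent with the fact that $\gamma^{\beta}_A$ (and hence $\mc{T}^{\beta}$) is translation-invariant, so the optimal Hamiltonian is determined only up to an additive constant.
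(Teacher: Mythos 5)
Your proof is correct and follows essentially the same route as the paper: both reduce to $\tr[(\gamma^\beta_A)^{-1}]=\tr\mathrm{e}^{\beta\widehat{H}_A}\,\tr\mathrm{e}^{-\beta\widehat{H}_A}$ and bound it below by $|A|^2$ via Cauchy--Schwarz. The only difference is cosmetic: the paper first locates the critical point by a directional-derivative computation and then invokes the Hilbert--Schmidt Cauchy--Schwarz inequality for globality, whereas you obtain both the bound and the equality case ($E_i$ all equal) directly from the scalar Cauchy--Schwarz equality condition, which is slightly more economical.
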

\begin{proof}
From Lemma~\ref{prop:unitary_free_energy} we have 
\begin{align}
\beta F_\infty^\beta[\mc{U}]&=\ln\tr(\gamma_\beta^{-1})\\
&=\ln \tr {\operatorname{e}}^{\beta \widehat{H}_A}+\ln\tr {\operatorname{e}}^{-\beta \widehat{H}_A}.
\end{align}
We analyze the vanishing conditions of the directinanl derivative with respect to $\widehat{H}_A$, that is $\nabla_{\widehat{H}_A}F_\infty^\beta[\mc{U}]=0$, which implies that
\begin{align}
 \frac{{\operatorname{e}}^{\beta\widehat{H}_A}}{\tr({\operatorname{e}}^{\beta \widehat{H}_A})}-\frac{{\operatorname{e}}^{-\beta\widehat{H}_A}}{\tr({\operatorname{e}}^{-\beta \widehat{H}_A})}=0.
\end{align}
For all eigenvalues $E_i$ of $\widehat{H}_A$, we have
\begin{align}
    \frac{{\operatorname{e}}^{\beta E_i}}{\tr({\operatorname{e}}^{\beta \widehat{H}_A})}&=\frac{{\operatorname{e}}^{-\beta E_i}}{\tr({\operatorname{e}}^{-\beta \widehat{H}_A})}\\
    \text{or},~
{\operatorname{e}}^{2\beta E_i}&=\frac{\tr({\operatorname{e}}^{\beta \widehat{H}_A})}{\tr({\operatorname{e}}^{-\beta \widehat{H}_A})},~\forall i.
\end{align}
Since the right-hand side is independent of $i$, we conclude that all the eigenvalues of $\widehat{H}_A$ must be the same, which corresponds to the Hamiltonian of the form $\widehat{H}_A=c\mathbbm{1}_A$. For such Hamiltonian, we have $F_\infty^\beta[\mc{U}]=2\beta^{-1}\ln|A|$.

To see that this value is the global minimum for $F_{\infty}^\beta[\mc{U}]$, note that
\begin{align}
    {\rm e}^{\beta F_\infty^\beta[\mc{U}]}&=\tr {\operatorname{e}}^{\beta \widehat{H}_A}\tr\operatorname{e}^{-\beta \widehat{H}_A}\\
    &\ge (\tr {\operatorname{e}}^{\frac{1}{2}\beta \widehat{H}_A}{\operatorname{e}}^{-\frac{1}{2}\beta \widehat{H}_A})^2\\
    &=(\tr\mathbbm{1}_A)^2=|A|^2.
\end{align}
We have used the Cauchy-Schwarz inequality for Hilbert-Schmidt inner product: $(\tr(A^\dagger B))^2\le\tr(A^\dagger A)\tr(B^\dagger B)$. We conclude that $F_\infty^\beta[\mc{U}]\ge 2\beta^{-1}\ln|A|$ and the output Hamiltonian $\widehat{H}_A=c\mathbbm{1}_A$ gives the minimum free energy for unitary channels.
\end{proof}
\bibliography{output}
\end{document}